\documentclass[10pt]{article}
\usepackage[left=1in,right=1in,top=1in,bottom=1in]{geometry}

\usepackage[utf8]{inputenc} 
\usepackage{xr} 
\usepackage{tipa}
\usepackage{comment}
\usepackage{lipsum}
\usepackage[dvipsnames]{xcolor}
\usepackage{bm} 
\usepackage{eufrak}

\usepackage{tipa,tikz}
\usetikzlibrary{positioning, arrows.meta, fit, backgrounds}

\definecolor{MyBlue}{RGB}{0,51,102}   
\definecolor{FocusColor}{RGB}{180,70,20}
\definecolor{MyBrown}{RGB}{120,70,15}
\definecolor{ForestGreen}{RGB}{34, 139, 34}

\usepackage{algorithmicx}
\usepackage{algorithm}
\usepackage{algpseudocode}
\algnewcommand\algorithmicinput{\textbf{Input:}}
\algnewcommand\Input{\item[\algorithmicinput]}
\algnewcommand\algorithmicoutput{\textbf{Output:}}
\algnewcommand\Output{\item[\algorithmicoutput]}
\usepackage{upgreek}
\usepackage{enumitem}
\newcommand{\intmeasure}{em}
\newcommand{\interval}[2][\relax]{%
  \underbrace{\ifx#1\relax
      \rule{2\intmeasure}{0pt}
    \else
      \rule{#1\intmeasure}{0pt}
    \fi}_{\text{#2}}
}

\usepackage{amsthm, amsfonts, amssymb, amsmath,enumitem, bbm, bm, mathabx,mathrsfs,bigints}
\usepackage{accents}
\usepackage{mathtools}
\mathtoolsset{showonlyrefs,showmanualtags}
\numberwithin{equation}{section}

\newcommand{\dnorm}{\mathsf{N}}

\newcommand{\fancyname}{DAIF}
\newcommand{\ols}{\mathrm{ols}}
\newcommand{\train}{\mathsf{trn}}
\newcommand{\test}{\mathsf{tst}}
\newcommand{\pred}{\mathsf{pred}}
\newcommand{\ksf}{\bm{\mathsf{K}}}
\newcommand{\ck}{\bm{\mathcal{K}}}
\newcommand{\rsf}{\bm{\mathsf{R}}}

\newtheorem{thm}{Theorem}[section]
\newtheorem{lem}{Lemma}[section]
\newtheorem{assumption}{Assumption}[section]
\newtheorem{cor}{Corollary}[section]
\newtheorem{prop}{Proposition}[section]
\theoremstyle{remark}
 
\newcommand{\E}{\mathbb{E}}
\newcommand{\Cov}{\mathrm{Cov}}
\newcommand{\R}{\mathbb{R}}

\newcommand{\wt}{\widetilde}
\newcommand{\wh}{\widehat}
\newcommand{\wb}{\widebar}
\newcommand{\tr}{\mbox{Tr}}
\newcommand{\diff}[1]{\mathsf{d}#1}

\newcommand{\op}{\mathrm{op}}

\newcommand{\upca}{\bm U^{\mathrm{pc}}}
\newcommand{\vpca}{\bm V^{\mathrm{pc}}}
\newcommand{\dpca}{\bm D^{\mathrm{pc}}}

\newcommand{\zlpca}{\bm Z^{\mathrm{pc,\ell}}}
\newcommand{\zrpca}{\bm Z^{\mathrm{pc,r}}}
\newcommand{\hV}{\wh{\bm V}}
\newcommand{\hU}{\wh{\bm U}}
\newcommand{\hG}{\wh{\bm G}}
\newcommand{\hF}{\wh{\bm F}}
\newcommand{\bU}{\wb{\bm U}}
\newcommand{\tU}{\wt{\bm U}}

\newcommand{\bdU}{{\bm U}}
\newcommand{\bdV}{{\bm V}}
\newcommand{\bdX}{\bm X}
\newcommand{\bX}{\wb{\bm X}}
\newcommand{\tX}{\wt{\bm X}}
\newcommand{\bdJ}{{\bm J}}
\newcommand{\hM}{\wh{\bm M}}

\newcommand{\hS}{\wh{\bm \Sigma}}

\newcommand{\supp}{\mathrm{supp}}

\allowdisplaybreaks

\makeatletter
\def\widebreve{\mathpalette\wide@breve}
\def\wide@breve#1#2{\sbox\z@{$#1#2$}%
     \mathop{\vbox{\m@th\ialign{##\crcr
\kern0.08em\brevefill#1{0.8\wd\z@}\crcr\noalign{\nointerlineskip}%
                    $\hss#1#2\hss$\crcr}}}\limits}
\def\brevefill#1#2{$\m@th\sbox\tw@{$#1($}%
  \hss\resizebox{#2}{\wd\tw@}{\rotatebox[origin=c]{90}{\upshape(}}\hss$}
\makeatletter

\newcommand{\rsc}{\mathrm{rsc}}

\usepackage{subcaption}
\usepackage[
            CJKbookmarks=true,
            bookmarksnumbered=true,
            bookmarksopen=true, 
            colorlinks=true,
            citecolor=blue,
            linkcolor=blue,
            anchorcolor=red,
            urlcolor=purple
            ]{hyperref}

\renewcommand{\arraystretch}{2}
\DeclareMathOperator*{\argmin}{arg\,min}
\DeclareMathOperator*{\argmax}{arg\,max}

\usepackage{graphicx} 
\usepackage{color, float}

\usepackage{booktabs, multirow} 
\usepackage{array} 
\usepackage{verbatim} 
\usepackage[round]{natbib}
\makeatletter
\newcommand{\numcite}[1]{%
  [\begingroup
   \renewcommand{\NAT@sep}{,}%
   \citenum{#1}%
   \endgroup]%
}
\makeatother

\title{DAIF: A Data-Driven Intermediate Fusion Framework for Multimodal Supervised Learning via Approximate Message Passing}
\author{
Sagnik Nandy\thanks{Ohio State University. Email: \texttt{nandy.15@osu.edu}.}
\and
Samriddha Lahiry\thanks{National University of Singapore. Email: \texttt{slahiry@nus.edu.sg}.}
\and
Pragya Sur\thanks{Harvard University. Email: \texttt{pragya@fas.harvard.edu}.}
\and
Subhabrata Sen\thanks{Harvard University. Email: \texttt{subhabratasen@fas.harvard.edu}.}
}

\begin{document}
	\maketitle
	
\begin{abstract}
Multimodal supervised learning seeks to leverage multiple heterogeneous data sources to improve predictive performance. A central challenge is determining the fusion granularity across modalities: over-integration may amplify noise while under-integration fails to exploit cross-modal dependence. Existing approaches rely on pre-specified fusion architectures, from early to late fusion, that may not adapt to the underlying dependence structure among modalities. We propose \fancyname{}, a data adaptive intermediate fusion framework that combines random matrix theory and non-parametric dependence measures to learn fusion structure directly from data. We operate under a Bayesian multimodal factor model where the prior on the latent factors determines the cross-modal dependence. Our method clusters modalities based on estimated intermodal dependence, then performs clusterwise empirical Bayes estimation of the priors. These estimated priors are used to construct denoisers within an approximate message passing (AMP) framework, yielding denoised low-dimensional features that borrow strength across related modalities while preserving modality-specific signal. The resulting embeddings are used for downstream supervised prediction. We evaluate the framework through simulations under varying dependence structures and signal regimes, comparing against several benchmark methods, and demonstrate its practical utility on two multimodal datasets, namely a trimodal TEA-seq dataset \citep{Swanson2021} and TCGA-BRCA dataset \citep{Goldman2020}. In the first example, we predict the expression level of a T-cell differentiation marker protein and in the second case we analyze patient survival prediction based on multimodal information. Our method competes with or outperforms the state-of-the-art techniques in both prediction problems, demonstrating its versatility across diverse supervised learning tasks.
\end{abstract}

\section{Introduction}
Modern data acquisition technologies allow users to collect multiple complementary pieces of information about the profiled subjects simultaneously. These give rise to rich multimodal datasets.
The availability of such data has the potential to substantially improve the efficiency and accuracy of supervised learning by enabling principled integration of complementary information across modalities, enabling highly accurate prediction in settings where the features from individual component modalities are noisy and high-dimensional. Indeed, such supervised multimodal learning pipelines have been successfully deployed in the analysis of biological datasets, leveraging diverse data sources including genomic, proteomic, metabolomic, lipidomic, and phenomic measurements \citep{flexynesis,gentles2015integrating,guan2022integrative,zhao2019learning, dombowsky2025bayesian, sc-mult-deep}.
In this paper, we study a supervised multimodal learning problem in which the goal is to predict a response $y$ using features collected from multiple modalities $x_1,\ldots,x_m$. We keep the prediction setting general to accommodate a wide range of supervised learning tasks including predicting continuous, binary or categorical response as well as survival risk in time-to-event analyses. The individual modality measurements may also differ substantially in their dimensionality and signal strengths. 

In many application domains, it is empirically observed that intermodal dependence and the association between the response and the observed features across modalities are mediated by a lower-dimensional collection of latent factors \citep{anceschi2024bayesian}. The supervised learning workflow in these settings typically involves constructing low-dimensional embeddings from the observed features by information integration across modalities and followed by training a supervised predictor using the engineered features.
The information fusion can enhance signal strength in the learned representations and improve the recovery of latent structure by borrowing strength across complementary data sources. However, a fundamental and often overlooked question in this context is: \emph{which modalities should be fused?} Insufficient integration may fail to exploit the full potential of multimodal data, resulting in weak or uninformative representations. In contrast, excessive integration may force together unrelated or weakly related modalities; 
in this case, one neglects the clustering among the modes, and is thus forced to estimate the complete signal (the shared signal and the modality specific signals) from the finite data available. Statistically, this often leads to the challenging problem of high-dimensional estimation from insufficient data. One stands to gain significantly over this approach by only integrating information across strongly linked modalities.  

In the literature \citep{Boulahia2021,ding2022cooperative,liang2025meta}, this problem is commonly addressed through three broad architectural paradigms for feature engineering: (I) {\bf Late Fusion:} Each modality is modeled independently to produce modality-specific predictions, which are then combined at the decision level to obtain the final prediction \citep{snoek2005early}, (II) {\bf Intermediate Fusion:} Information from multiple modalities is partially fused during feature learning, typically by sharing representations or interactions among selected modalities, while others are processed independently \citep{Boulahia2021}, and (III) {\bf Early Fusion:} All modalities are jointly embedded into a single unified latent representation, which is subsequently used for downstream prediction tasks \citep{wu2018multimodal,srivastava2014multimodal}.
These three paradigms of information integration have been successfully deployed across a wide range of biomedical applications \citep{GUARRASI2025105509,Cahan2023,flexynesis,10.1093/pcmedi/pbaf016} as well as in computer vision \citep{Boulahia2021,hu2024intermediatefusionvitenables}.

A significant limitation of many existing data-fusion pipelines is that the choice of fusion architecture is largely \emph{ad hoc}, often guided by domain expertise rather than inferred from the data. In such approaches, the fusion architecture is fixed \emph{a priori}.
This might often lead to undesirable performance in datasets where the choice of the fusion granularity depends on the composition of the data under study. For example, in Section~\ref{sec:comp_bio_illustrate}, we consider predicting the expression of a T-cell differentiation marker in a trimodal single cell dataset \citep{Swanson2021}. We observe that depending on the cell type under consideration, either early or late fusion has better predictive performance (cf. Figure~\ref{fig:teaseq-rmse}). Therefore, one can easily imagine that in a generic supervised learning problem involving such datasets, the most efficient granularity of fusion will be determined by the relative abundance of different cell-types in the sample. In this case, pre-fixing the fusion architecture might lead to a loss of efficiency and a more desirable framework would determine the appropriate granularity of fusion in a fully data-driven manner. In this paper, we propose \fancyname{}, a data adaptive framework for learning the appropriate fusion granularity and for performing data integration that preserves modality-specific structure while enhancing shared signals. 

\begin{figure}[t]
    \centering
    \includegraphics[width=\textwidth]{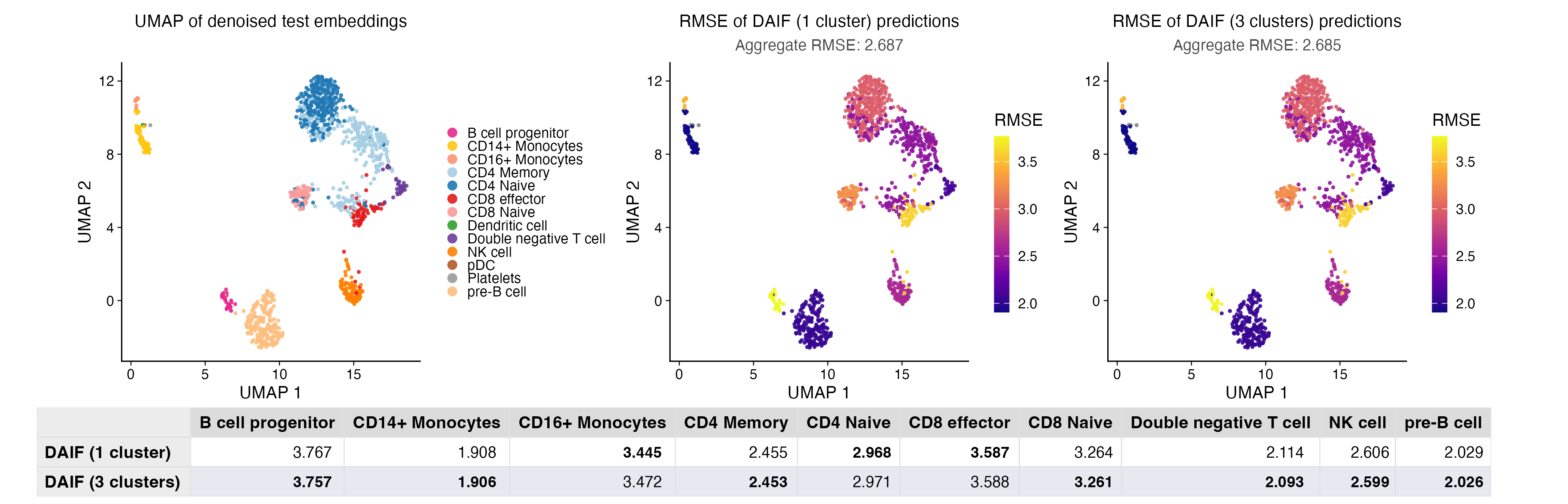}
    \caption{Prediction of CD45RA protein expression from RNA (2{,}000 HVGs), normalized ATAC
    counts (5{,}000 HV ATAC reads), and normalized expression of 39 HV cell-surface protein
    markers on TEA-seq data \citep{Swanson2021}. The \fancyname{} predictor is trained on
    $4{,}560$ randomly selected cells and evaluated on $1{,}141$ held-out test cells. Each panel
    shows per-celltype RMSE overlaid on the UMAP of denoised test embeddings; the bottom table reports
    per-cell-type RMSE, with bold entries marking the best method. Early fusion ($K=1$,
    OrchAMP) and late fusion ($K=3$, EB-PCA) achieve comparable aggregate RMSE ($2.687$ and
    $2.685$, respectively) yet differ substantially at the cell-type level, motivating
    data adaptive fusion granularity selection.}
    \label{fig:teaseq-rmse}
\end{figure}

\subsection{An illustration from computational biology}
\label{sec:comp_bio_illustrate}

To illustrate the challenge of selecting an appropriate fusion granularity, we consider a supervised
prediction task in a multiomic single-cell data. Recent advances allow simultaneous profiling of
transcriptomic, epigenomic, and proteomic layers at single-cell resolution
\citep{dogma-seq,cite-seq,Swanson2021}. 
A representative example is TEA-seq \citep{Swanson2021}, which jointly profiles
nuclear mRNA expression, chromatin accessibility (ATAC), and antibody-derived tags (ADTs)
measuring cell-surface protein abundance.

These modalities exhibit sharply different statistical properties. Proteomic measurements from curated antibody panels are typically
low-dimensional with high signal-to-noise ratios, whereas RNA and ATAC data are extremely
high-dimensional and sparse (e.g., $36{,}601$ genes and $66{,}828$ ATAC peaks versus $48$
proteins across $8{,}213$ cells). RNA and ATAC are strongly coupled biologically, since chromatin
accessibility regulates gene expression, while ADTs are only weakly linked to ATAC due to
post-transcriptional regulation. This heterogeneity makes fusion granularity both practically
important and statistically nontrivial, and it is precisely what \fancyname{} is designed to
learn from data.

Figure~\ref{fig:teaseq-rmse} illustrates a prediction task where the goal is to predict CD45RA
expression---a key marker of T cell differentiation---from RNA, ATAC, and the remaining
cell-surface protein measurements. We apply \fancyname{}, which clusters modalities using an
empirical dependence measure Centered Kernel Alignment; (CKA \cite{kornblith2019similarity}) and
integrates modalities within the same cluster using Approximate Message Passing (AMP). The number of clusters, selected by the gap statistic \citep{Tibshirani_2001}, determines the fusion granularity: $K=1$ recovers early fusion (OrchAMP; \cite{nandy2024multimodal}), $K=3$ recovers late fusion (EB-PCA; \cite{zhong2022empirical}), and intermediate values yield partial fusion.

Strikingly, no single fusion strategy uniformly dominates across cell types. B cell progenitors
and CD14$^+$ Monocytes are better predicted under late fusion, while CD16$^+$ Monocytes and CD4
Naive cells favor early fusion. This variation is unsurprising: different cell populations may
rely on different combinations of modalities for faithful protein expression prediction, and a fixed fusion architecture cannot adapt to the relative abundance of the cell types in the sample to choose the most efficient fusion strategy. \fancyname{} addresses this by learning the appropriate fusion granularity from data and automatically takes into account the relative abundance of different cell-types in the observed sample. It thereby selects the architecture that favors efficient prediction for the majority of the cells. We explore this dataset further in Section~\ref{sec:tea_seq}.

\subsection{Major Contributions}
Our major contributions in this paper are as follows.

\textbf{Bayesian modelling of varying granularity of intermodal dependence.} In \autoref{sec:stat_model}, we propose a Bayesian multimodal factor model for the multimodal features that employs a structured prior decomposing the features into components corresponding to strongly dependent modalities. This formulation flexibly interpolates between complete dependence and independence across modalities. It further allows intermodal dependence to be mediated by latent features of differing dimensions, accommodating heterogeneity across data sources. This contrasts with common integration methods such as multiset canonical correlation analysis \citep{kettenring1971canonical} and JIVE \citep{lock2013jive}, which assume dependence through shared latent factors, an assumption that can be restrictive and less interpretable in heterogeneous settings. \emph{Cooperative component analysis} (CoCA) \citep{ding2024coca} provides an intermediate-fusion based dimension reduction technique that interpolates between PCA and CCA. Unlike \fancyname{}, however, CoCA focuses on recovering a shared linear component across views, whereas \fancyname{} preserves modality-specific latent structure while enabling nonlinear information integration through data adaptive clustering and cluster-wise empirical Bayes denoising within an AMP pipeline. Our framework is closely related to \citet{anceschi2024bayesian}, but extends it by allowing random loading matrices and latent factors of varying dimensions across modalities. This flexibility is important in the context of multimodal analyses, particularly in single cell genomics, where it is well understood that the biological signals in the modalities are captured by varying numbers of latent factors \citep{HAO20213573}.

\textbf{A data adaptive intermediate fusion pipeline.} In \autoref{sec:methodology}, we propose \fancyname{}, an empirical Bayes pipeline for integrating and engineering high-dimensional multimodal features. The pipeline first learns the clustering structure in the prior via hierarchical clustering, using similarity measures based on non-parametric notions of dependence, such as Centered Kernel Alignment \citep{kornblith2019similarity}. This yields a fully data adaptive determination of fusion granularity and distinguishes our approach by learning the fusion architecture directly from the data. We then estimate cluster-specific priors using empirical Bayes and leverage them to integrate information across modalities. We show that our clustering based estimation framework leads to consistent estimation of the priors provided the true prior indeed exhibits a clustered structure and the components within each cluster are strongly dependent (cf. Theorem~\ref{thm:consistency_prior_main}). Our integration pipeline builds upon a variant of the Approximate Message Passing framework (AMP) proposed in \cite{montanari2021estimation} and later developed by \cite{zhong2022empirical} and \cite{nandy2024multimodal} which leverages the estimated prior. The algorithm alternates between power iteration using the feature matrices to recover the low-rank factors, followed by Bayesian shrinkage toward the mean of estimated priors and a debiasing step to remove the impact of past iterates \footnote{In the AMP literature \citep{BayatiMontanari2011AMP}, the debiasing step is termed \emph{Onsager correction}.}. This debiasing allows us to track the asymptotic properties of the iterates through state evolution recursions (cf. Theorem~\ref{thm:w_2_amp}). We combine the information across modalities via the shrinkage step; we show that the resulting iterates are Bayes optimal estimators for the true signals (cf. Theorem~\ref{thm:bayes_optimality}) under mild conditions. The novelty of our procedure is underscored by the use of clustered priors with clusters estimated from the observed data, thus enabling data driven intermediate fusion when modalities are only partially dependent. The estimated features can be used for downstream multimodal unsupervised analysis or used to train a predictor to predict the responses. Algorithms for multimodal integration have also been proposed in \cite{yang2025fundamental}, \cite{tabanelli2025computational}, \cite{gerbelot2023graph} and \cite{rossetti2023approximate} for integrating information across symmetric data matrices and correlated data tensors.
However, these existing approaches either integrate all the modalities or treat the inter-modal dependence as a known, oracle-level property of the problem \citep{gerbelot2023graph}. In particular, they do not learn the underlying dependence in a data adaptive manner. In a recent work, \cite{liang2025meta} propose \emph{Meta Fusion}, a fascinating data-driven framework for selecting among alternative fusion architectures. The method is agnostic to the data distribution; however, the supporting theoretical guarantees require significantly restrictive assumptions compared to \fancyname{}. 

\textbf{Test-time prediction.}
The estimated loading matrices obtained from \fancyname{} can be decomposed into a weighted sum of the true loading matrices in the data generating process and independent Gaussian noise. If the test observations share the same loading matrices across modalities (which is a reasonable assumption in our setting), we can use this decomposition to project the test observations onto the same latent space as the training data. In turn, this enables the learned predictor to predict the response on test data. The details of projection of test data features and prediction of test response using the projected features are outlined in Section~\ref{sec:test_predict}.  

\textbf{Numerical experiments and applications to biological datasets.}
We benchmark \fancyname{} against popular data integration pipelines on two fronts: recovery of
latent features and prediction of held-out responses. We present our numerical experiments in 
\autoref{sec:numerical_exp}. When the sample size is large and the signal strength is low, we observe that \fancyname{} outperforms the other AMP-based benchmarks. Furthermore, \fancyname{} is about 10 times more accurate than the non-AMP based integration benchmarks both in terms of recovering the latent embeddings and predicting the unknown response.

We further validate \fancyname{} on two real datasets: a trimodal single-cell dataset from
\cite{Swanson2021}, where we predict the expression level of a marker protein from genomic,
epigenomic, and proteomic features (cf. Section~\ref{sec:tea_seq}); and a trimodal TCGA-BRCA dataset from \cite{Goldman2020},
where we predict survival risk (cf. Section~\ref{sec:tcga_brca}). In both settings, \fancyname{} outperforms competing
integration-based prediction pipelines in computational biology.

\textbf{Theoretical justification of the \fancyname{} framework.}
The theoretical results about the consistency of prior estimation based on hierarchical clustering and asymptotic properties of the iterates produced in the \fancyname{} algorithm are presented in Section~\ref{sec:theoretical}. All the proofs are relegated to the appendices. 

\subsection{Related Work}
\label{sec:related_work}
The idea of multimodal integration traces back to the seminal concept of Canonical Correlation Analysis
\citep{hotelling1933analysis} and its multi-set generalizations \citep{kettenring1971canonical}. This classical line of literature assumes the same modality dimensions, a
restrictive condition for many real multimodal integration tasks, particularly in computational biology. For example, in \cite{HAO20213573}, the authors use different latent dimensions for modeling the transcriptomic and proteomic modalities, whereas a similar convention of using different latent dimensions for distinct modalities is also adopted in the variational autoencoder based approach of \cite{gayoso2021joint}.

Another approach of multimodal integration is through multimodal factor models \citep{lock2013jive, feng2018angle, divas, gaynanova2019structural, sergazinov2024spectral} which extend the idea of canonical correlation by positing shared low-dimensional latent structure across views. In \citet{yang2025estimating}, the authors recently established
rigorous performance guarantees, revealing both the power of multi-matrix pooling and failure
modes under partial subspace sharing. The above procedures also require same latent dimension for the shared signal component and sometimes impose the stronger requirement that the correlated signal complement is actually identical across the different modalities. This restrictive assumption can be avoided in our prior-based framework. A complementary approach is through multimodal spectral analysis, namely Stacked-SVD or SVD on individual modalities and stacking the singular vectors. In \citet{ma2026optimal}, the authors prove minimax optimality of Stack-SVD when modalities share an identical singular subspace and characterize phase-transition failures under partial sharing, while \citet{baharav2025stacked} compares
Stack-SVD against SVD-Stack under proportional asymptotics, formalizing the statistical tradeoff between early and late spectral fusion. However, stacked-SVD also requires identical dimensions across modalities and SVD-stack can be naively improved by using EB-PCA based reconstruction followed by stacking of the embeddings. The EB-PCA based framework is a special case of our general \fancyname{} pipeline and does not provide the opportunity of data integration in embedding construction when the modalities are strongly related. Therefore, it is expected that under strong multimodal dependence \fancyname{} will outperform both EB-PCA followed by stacking and SVD-Stack.

In computational biology, dominant approaches involve complete data integration, including weighted nearest-neighbor graph-based Seurat v4 \citep{HAO20213573}, MOFA+ \citep{Argelaguet2020}, and variational autoencoder based totalVI \citep{gayoso2021joint} and Multigrate \citep{LitinetskayaPoEVAE2022}. Complete fusion is appropriate when modalities are strongly dependent, but fusing weakly linked modalities can inject noise and degrade downstream inference. \fancyname{} in contrast allows data-based determination of the right fusion granularity which might lead to better performance. Furthermore, \fancyname{} also comes with rigorous statistical guarantees (cf. Section~\ref{sec:theoretical}) for convergence; in contrast, many popular  scRNA-seq pipelines lack similar theoretical guarantees.

In the supervised setting, the primary fusion mechanism studied statistically is early fusion \citep{yuan2014assessing, gentles2015integrating}. Recently, \citet{ding2022cooperative} provides the first statistical motivation for bridging early and late fusion through their multi-view cooperative learning framework, though without a formal characterization of prediction error. JAFAR \citep{anceschi2024bayesian} provides an alternative Bayesian supervised dimensional reduction framework; unfortunately, it is also restricted to early fusion. However, supervised dimension reduction is problematic when annotations are unreliable or
circularly derived from the same molecular measurements, as is common in single-cell sequencing. In a different direction, \citet{NEURIPS2025_b0e7cfb9} approached the data integration problem through contrastive learning. However, their method does not provide a mechanism for selectively fusing modality subsets. Information theoretic properties of multiview data has been extensively studied in \cite{mv_reeves}, \cite{reeves_ml_network} and \cite{reeves_it_limit}. Finally, Approximate message passing based multimodal integration has also been studied for two modalities in \cite{KeupZdeborova2025OptimalThresholds}, and \cite{ma_nandy}. 

In recent work, \cite{liang2025meta} study supervised multi-modal learning, and focus on data-driven choice of the fusion granularity. They propose a new method \emph{Meta-Fusion} to address this problem. This method constructs multiple student predictors based on unimodal and multimodal representations that encompass early-, intermediate-, and late-fusion strategies. The principal strength of this framework is its flexibility with respect to both the feature extractors and the downstream prediction task. In contrast, \fancyname{} adopts a linear factor-based feature extraction mechanism while retaining considerable flexibility in the choice of the prediction model. This additional structure enables a substantially sharper theoretical analysis in the high-dimensional regime: the guarantees for \fancyname{} cover broad, potentially non-Gaussian latent factors and high-dimensional modalities. In contrast, the theory developed for \emph{Meta-Fusion} relies on Gaussian latent factors.

The two direct predecessors of \fancyname{} are EB-PCA \citep{zhong2022empirical} and OrchAMP \citep{nandy2024multimodal}, both grounded in Approximate Message Passing. EB-PCA applies AMP
with a non-parametric empirical Bayes prior to a single data matrix, achieving Bayes-optimal
recovery in spiked models. OrchAMP extends this to multiple modalities under a dependent
multifactor model, fusing all views simultaneously for optimal joint signal recovery. In our
framework, applying EB-PCA independently to each modality corresponds to late fusion, while OrchAMP-style joint denoising corresponds to early fusion. \fancyname{} interpolates the gap between these two extremes: by clustering modalities according to an empirical measure of pairwise dependence and applying AMP-based denoising within each cluster, it achieves an intermediate fusion granularity that adapts to the data and subsumes both predecessors as special cases. 

\subsection{Notations}
For any natural number $n \in \mathbb N$, we shall denote the set $\{1,\ldots,n\}$ by $[n]$.
The notation $\R^k$ denotes the set of $k$ dimensional vectors with real coordinates. 
For a vector $a \in \R^k$, the Euclidean norm is denoted by $\|a\|_2$. 
The notation $\R^{k \times \ell}$ denotes the set of $k\times \ell$ matrices with real entries.  
All the matrices will be denoted by upper-case bold letters throughout the text. 
For any matrix $\bm A$, $\bm A_{i*}$ denotes its $i$-th row and $\bm A_{*i}$ its $i$-th column. 
Furthermore, $\bm A_{ij}$ denotes its $(i,j)$-th entry. 
For any $\mathcal F \subseteq \{1,\ldots,n\}$, the submatrix formed by the rows with indices in $\mathcal F$ is denoted by $\bm A_{\mathcal F,*}$ and the submatrix formed by the columns with indices in $\mathcal F$ is denoted by $\bm A_{*,\mathcal F}$. 
Next, for $\bm A \in \R^{k \times k}$, $\bm A^\top$ denotes its transpose and $\mathrm{Tr}(\bm A)= \sum_{i=1}^{k}\bm A_{ii}$ denotes its trace. 
The Frobenius norm of a matrix $\bm A$ is denoted by $\|\bm A\|_F$ 
and the spectral norm of $\bm A$ is denoted by $\|\bm A\|_{\mathrm{op}}$. 
The notation $\mbox{diag}(a_1,\ldots,a_k)$ defines a $k \times k$ diagonal matrix with the $j$-th diagonal element $a_j$ for $j \in [k]$. 
For a vector $v \in \R^n$, $v^{\otimes 2}$ denotes the $n\times n$ matrix $vv^\top$. 
For two matrices $\bm A$ and $\bm B$, $\bm A \preceq \bm B$ means $\bm B-\bm A$ is positive semi-definite. The class of all $m \times m$ dimensional positive definite matrices is denoted by $\mathbb S^{m}_+$. The notation $\mu_n \xrightarrow{w} \mu$ means that the sequence of measures $\mu_n$ converges weakly to $\mu$. For a sequence of random variables ${X_n}$, we write $X_n \xrightarrow{\mathrm{a.s.}} X$ to denote that $X_n$ converges to $X$ almost surely. For two random variables $X$ and $Y$, the notation $X\overset{d}{=}Y$ implies that their distributions are equal. Next, suppose $\theta \sim \pi$ and $\mathsf{X}:=f(\theta, \mathsf Z)$, where $\mathsf Z$ is a random object independent of $\theta$. Then we shall denote the conditional expectation of 
$g(\theta)$ where $g$ is any arbitrary measurable function given $\mathsf{X}$ by $\mathbb E_\pi[g(\theta) \mid \mathsf{X}]$. In this notation, the expectation is taken over the randomness induced by the posterior distribution $\pi(\theta \mid \mathsf{X})$.

\section{Statistical model behind our approach}
\label{sec:stat_model}
\subsection{A multimodal factor model}
In this paper, we consider the following factor model for the high-dimensional feature matrices $\{\bm X_h \in \mathbb{R}^{n \times p_h}: h \in [m]\}$:
\begin{align}
\label{eq:multimodal_factor_model}
\bm X_h=\frac{1}{\sqrt{n}}\,\bm U_h \bm D_h \bm V_h^\top+\bm Z_h,
\end{align}
where $\bm U_h \in \mathbb{R}^{n \times r_h}$, $\bm V_h \in \mathbb{R}^{p_h \times r_h}$, and $\bm D_h \in \mathbb{R}^{r_h \times r_h}$ is diagonal with strictly decreasing positive entries, so that $(\bm D_h)_{11}>\cdots>(\bm D_h)_{r_h r_h}>0$. The entries of the noise matrices $\{\bm Z_h \in \mathbb{R}^{n \times p_h}:h\in[m]\}$ are independently and identically distributed according to the standard Gaussian distribution. Furthermore, we operate in the proportional asymptotic regime where $p_h/n \rightarrow \gamma_h \in (0,\infty)$ for all $h \in [m]$, as $n \rightarrow \infty$. The intrinsic dimensions of the features are assumed to satisfy $r_1,\ldots,r_m=O(1)$.
In addition to the high-dimensional modalities, we also consider $\wt m$ low-dimensional feature matrices $\{\wt{\bm X}_\ell:\ell=1,\ldots,\wt m\}$ modeled as
\begin{align}
\label{eq:multimodal_low_dim_model}
\wt{\bm X}_\ell = \wt{\bm U}_\ell \bm L_\ell^\top + \wt{\bm Z}_\ell,
\end{align}
where $\wt{\bm U}_\ell \in \mathbb{R}^{n \times \wt r_\ell}$, $\bm L_\ell \in \mathbb{R}^{\wt r_\ell \times \wt r_\ell}$, and $\wt{\bm Z}_\ell \in \mathbb{R}^{n \times \wt r_\ell}$, with $\wt r_\ell=O(1)$ for all $\ell \in [\wt m]$. The matrices $\{\bm L_\ell: \ell \in [\wt m]\}$ are assumed to be symmetric and positive definite. We assume that the entries of $\{\wt{\bm Z}_\ell:\ell \in [\wt m]\}$ are sampled i.i.d from the standard Gaussian distribution and are independent of the collection $\{\bm X_h:h \in [m]\}$. In the foregoing formulation, the signals are captured through $\bm U_h \bm D_h \bm V_h^\top$ for $h \in [m]$ and $\wt{\bm U}_\ell \bm L_\ell^\top$ for $\ell \in [\wt m]$. To ensure that the signal component separates out from the noise component, we assume that $(\bm D_h)_{jj} >\gamma^{-1/4}_h$ for all $j \in [r_h]$ and $h \in [m]$. In particular, the collection of subject-specific latent feature vectors
\(
\left\{\bigl((\bm U_1)_{i*},\ldots,(\bm U_m)_{i*},(\wt{\bm U}_1)_{i*},\ldots,(\wt{\bm U}_{\wt m})_{i*}\bigr) : i = 1,\ldots,n \right\}
\)
constitutes the latent representation associated with the $n$ observational units and are therefore the target of the feature engineering step. To enforce identifiability, we assume $(\bm U_h)^\top \bm U_h \approx n \bm I_{r_h}$ and $(\bm V_h)^\top \bm V_h \approx p_h \bm I_{r_h}$ for all $h \in [m]$ \footnote{See Assumption~\ref{assu:init} for a formal statement.}. Therefore, the columns of $\bm U_h$ and $\bm V_h$ are approximately the left and right singular vectors of the signal component in $\bm X_h$, endowing them with natural interpretability. 

Next, we model the response vector $y=(y_1,\ldots,y_n)^\top$ as a function of the subject-level signatures. Let $g(\cdot\mid \theta)$ denote a parametric family of distributions, and suppose that
\begin{align}
\label{eq:response_vector}
y_i \overset{\mathrm{ind}}{\sim} g(\cdot \mid \theta_i),\qquad
\theta_i =f\left((\bm U_1)_{i*},\ldots,(\bm U_m)_{i*},
(\wt{\bm U}_1)_{i*},\ldots,(\wt{\bm U}_{\wt m})_{i*}\right),
\qquad i=1,\ldots,n.
\end{align}
Here, the response distribution $g(\cdot\mid \theta)$ is assumed to be known. The regression function $f(\cdot)$ is assumed to be known up to a finite-dimensional set of unknown parameters, which are estimated from the data. For instance, taking
$g(x\mid \theta)=\phi(x-\theta)$ for $x\in\R$, where
$\phi(x)=(2\pi)^{-1/2}\exp(-x^2/2)$, and taking $f(u)=\beta^\top u$, yields the usual linear regression model with standard normal errors. We provide additional examples in Section~\ref{sec:numerical_exp}.

For notational clarity, define the disjoint modality index sets $\mathcal E_{\mathrm H}:=[m]$, $\mathcal E_{\mathrm L}:=\{m+1,\ldots,m+\widetilde m\}$, and $\mathcal E:=\mathcal E_{\mathrm H}\cup\mathcal E_{\mathrm L}
=[m+\widetilde m].$ For each $e\in\mathcal E$, define the corresponding latent vector and latent dimension by
\[
\mathcal U_e:=
\begin{cases}
U_e, & e\in\mathcal E_{\mathrm H},\\
\widetilde U_{e-m}, & e\in\mathcal E_{\mathrm L},
\end{cases}
\qquad
\rho_e:=
\begin{cases}
r_e, & e\in\mathcal E_{\mathrm H},\\
\widetilde r_{e-m}, & e\in\mathcal E_{\mathrm L}.
\end{cases}
\]
For any $\mathcal C\subseteq\mathcal E$, we further define $\mathcal C_{\mathrm H}:=\mathcal C\cap\mathcal E_{\mathrm H},$ and $\mathcal C_{\mathrm L}:=
\{\ell\in[\widetilde m]:m+\ell\in\mathcal C\}.$
Finally, let $r_{\mathrm H}:=\sum_{h=1}^m r_h$, $r_{\mathrm L}:=\sum_{\ell=1}^{\widetilde m}\widetilde r_\ell$, and $r:=r_{\mathrm H}+r_{\mathrm L}.$

\subsection{Cluster based prior on subject signatures}
We postulate that the multimodal dependence is captured through a latent prior $\mu$ on the subject signatures $\bm U_1,\ldots,\bm U_m$ and $\wt{\bm U}_1,\ldots,\wt{\bm U}_{\wt m}$. Formally, 
\begin{align}
    \label{eq:prior_specification}
    \left((\bm U_{1})_{i*},\ldots,(\bm U_{m})_{i*},(\wt{\bm U}_{1})_{i*},\ldots,(\wt{\bm U}_{\wt m})_{i*}\right) \overset{\text{i.i.d}}{\sim} \mu.
\end{align}
The prior $\mu$ is supported on $\mathbb R^{r}$. Furthermore, we assume that the rows of $\bm V_1,\ldots,\bm V_m$ are mutually independent with 
\(
(\bm V_h)_{i*} \overset{\text{i.i.d}}{\sim} \nu_h,
\)
for $i \in [p_h]$ and $h \in [m]$.

Next, we assume the following structure for the prior $\mu$. Consider the partition of $\mathcal E$ into $K$ disjoint subsets $\mathcal C_1,\ldots,\mathcal C_K$ and assume that 
\begin{align}
    \label{eq:cluster_wise_product}
    \mu :=\bigotimes_{k=1}^K
\mu_k\left(\{\mathcal U_e:e\in\mathcal C_k\}\right).
\end{align}
This clusterwise product structure is natural from the viewpoint of graphical models \citep{jordan1998learning,lauritzen1996graphical}, where conditional independence or factorization assumptions are commonly used to encode sparse dependence among a collection of random variables. In our setting, the assumption states that modalities within the same group may exhibit arbitrary dependence, whereas modalities belonging to different groups are independent under the prior. In particular, the dependence graph is assumed to be a collection of disjoint cliques where the modalities within a clique are assumed to be dependent, whereas those in disjoint connected components are assumed to be independent. This structure naturally induces three distinct granularities of intermodal dependence: (1) \emph{Complete dependence} ($K=1$), (2) \emph{Intermediate dependence} ($1 < K< m+\wt m$), and (3) \emph{Independence} ($K=m+\wt m$). Importantly, the partition $\mathcal C_1,\ldots,\mathcal C_K$ is not assumed to be known a priori. Thus, the proposed framework does not rely on an oracle specification of the inter-modal dependence structure, but instead seeks to learn this structure from the data. This provides a principled mechanism for adaptive information sharing across related modalities while avoiding unnecessary pooling across unrelated ones.

\section{Methodology}
\label{sec:methodology}
In this section, we formally introduce \fancyname{}. 
The major steps in the algorithm are as follows:
    \paragraph{Clustered Empirical Bayes.} We begin by computing the top $r_h$ singular vectors of each $\wb{\bm X}_h=\bm X_h/\sqrt{n}$, given by $(\upca_h,\vpca_h)$, rescaled so that $(\upca_h)^\top \upca_h = n \bm I_{r_h}$ and $(\vpca_h)^\top \vpca_h = p_h \bm I_{r_h}$ for $h \in [m]$. We assume $r_h$ are known; in practice, they are selected via scree plots by identifying an elbow in the singular value distribution. We construct a similarity measure by computing pairwise dependence between all modality-specific representations, including $\{\upca_h,\upca_{h'}\}$, $\{\upca_h,\wt{\bm X}_\ell\}$, and $\{\wt{\bm X}_\ell,\wt{\bm X}_{\ell'}\}$. For each modality pair, we compute row-wise similarities between the corresponding matrices and average them to form a similarity matrix, which is used for hierarchical clustering. To quantify similarity between the modalities, we can employ non-parametric measures of dependence like Centered Kernel Alignment (CKA) \citep{kornblith2019similarity}, distance correlation \citep{szekely2007measuring}, nearest neighbor-based measures \citep{Deb2020Measuring}, and mutual information-based methods such as InfoNCE \citep{oord2018representation}. These measures are close to zero under independence and approach one under strong dependence, while capturing nonlinear relationships. For our mathematical analysis and experiments, we consistently use CKA with Gaussian kernel as the measure of dependence. A detailed discussion on the relative strengths and weaknesses of the above-mentioned measures of independence is provided in Section~\ref{sec:measures_of_dependence}.
    We convert similarities to dissimilarities by subtracting from one and apply hierarchical clustering with UPGMA-based average linkage \citep{upgma} to obtain clusters $\wh{\mathcal C}_k$, $k \in [\wh K]$. 
    Within each estimated cluster $\widehat{\mathcal C}_k$, we estimate
    the prior $\mu_k$ using $\{\bm U_h^{\mathrm{pc}}:h\in\widehat{\mathcal C}_{k,\mathrm H}\}$ and $\{\widetilde{\bm X}_\ell:
    \ell\in\widehat{\mathcal C}_{k,\mathrm L}\}$, where $\widehat{\mathcal C}_{k,\mathrm H}:=\widehat{\mathcal C}_k\cap\mathcal E_{\mathrm H}$, and $\widehat{\mathcal C}_{k,\mathrm L}:=\{\ell\in[\widetilde m]:m+\ell\in\widehat{\mathcal C}_k\}.$ This step constitutes the empirical Bayes component of our pipeline. The estimation of the prior under the product structure imposed by the learned clusters mitigates the curse of dimensionality associated with high-dimensional prior estimation in finite samples. At the same time, since the clusters are inferred from the data, the pipeline provides fully data adaptive intermediate fusion.

    \paragraph{Feature extraction using Orchestrated Approximate Message Passing.} Given the estimated prior, we recover the low-dimensional latent factors from $\{\bm X_h:h \in [m]\}$ using \emph{Orchestrated Approximate Message Passing} \citep{nandy2024multimodal}. The algorithm alternates between an Onsager-corrected power iteration \citep{zhong2022empirical} and a Bayesian denoising step using the estimated clustered prior from the previous step, which shrinks the iterates toward the prior mean. This shrinkage stabilizes the iterates via noise correction and facilitates information sharing across modalities within each cluster, effectively borrowing strength from strongly dependent modalities encoded by the estimated prior. The use of a clustered prior naturally induces an intermediate fusion scheme, enabling feature integration at the level of modality groups rather than enforcing full or no fusion. For low-dimensional modalities, power iteration is unnecessary; instead, we directly apply Bayesian denoising to shrink the raw features toward the estimated prior using information from related modalities from the same cluster. This yields a sequence of estimates $\{\wh{\bm F}_{t,h},\wt{\bm U}_{t,\ell}:h \in [m],\ell \in [\wt m]\}$ of the latent subject signatures $\{\bm U_h,\wt{\bm U}_\ell:h \in [m],\ell \in [\wt m]\}$ for downstream analysis, along with a sequence of estimates $\{\wh{\bm V}_{t,h}:h \in [m]\}$ of the loading matrices $\{\bm V_h:h \in [m]\}$.

\paragraph{Predictor training and test prediction.} At the test stage, directly estimating the low-rank embeddings from the high-dimensional features via singular value decomposition is inappropriate since typically $n_{\test} \ll p_h$ for the high-dimensional modalities. Instead, we use the estimated loading matrices $\wh{\bm V}_{T,h}$ (obtained after $T$ AMP iterations) and project the test features $\bm X^{\test}_h$ corresponding to the high-dimensional modalities onto the latent space via ordinary least squares to obtain $\wh{U}^{\test,\ols}_h$. Using properties of the AMP iterates (cf. Theorem~\ref{thm:ols_estimator}), these projected features can be represented as Gaussian perturbation of the true test latent factors $U^{\test}_h$ which are predictive for the response $y$.
To align the asymptotic distribution of the training and test-time representations, we rescale the Onsager-corrected training embeddings $\wh{\bm F}_{T,h}$ by right-multiplying them with $(\wh{\bm \Sigma}^{L}_{T,h})^{-1}\wh{\bm D}_h^{-1}$. The resulting embeddings have the same asymptotic distribution as the test-time OLS embeddings $\wh{U}^{\test,\ols}_h$. 
We use the rescaled embeddings to train a predictor $\mathfrak H(\,\cdot\,,\theta)$ by minimizing, with respect to $\theta$, the loss function $\mathfrak L_\theta(\cdot)$ defined in \eqref{eq:def_frak_l}, which explicitly accounts for the Gaussian perturbations present in the training embeddings.
The predictor architecture is kept flexible and depends on the application. In our experiments (cf. Sections~\ref{sec:eff_pred_a}--\ref{sec:ne_pred_oth_meth}), we consider both linear and neural-network predictors, using the former for linear link functions and the latter for nonlinear ones. The trained predictor is finally applied to $\{\wh{ U}^{\test,\ols}_h:h \in [m]\}$ to obtain the test predictions.

We provide a thematic diagram for the pipeline in Figure~\ref{fig:thematic} and a pseudo-code style summary of our procedure in Algorithm~\ref{alg:daif_compact}. 

\begin{figure}
    \centering
    \begin{tikzpicture}[
  node distance = 0.7cm and 0.6cm,
  mybox/.style   = {rectangle, rounded corners=4pt, draw=MyBlue, thick,
                    fill=MyBlue!10, align=center, minimum height=1.4cm,
                    text width=3cm, font=\small},
  depbox/.style  = {mybox, draw=RoyalPurple!80!black, fill=RoyalPurple!10},
  priorbox/.style= {mybox, draw=FocusColor, fill=FocusColor!10},
  outbox/.style  = {mybox, draw=OliveGreen!80!black, fill=OliveGreen!10},
  taskbox/.style = {mybox, draw=ForestGreen!80!black, fill=ForestGreen!10},
  arr/.style     = {-{Stealth[length=2.5mm]}, MyBlue, thick}
]

\node[mybox]  (data)  {Data\\[4pt]$\{\bm X_h\},\{\wt{\bm X}_\ell\}$};
\node[mybox,  right=of data]  (svd)
    {SVD\\[4pt]$\{\bm U^{\mathrm{pc}}_h\}+\{\wt{\bm X}_\ell\}$};
\node[depbox, right=of svd]   (cka)
    {Compute Pairwise\\similarity $\aleph_n$};
\node[depbox, right=of cka]   (clust)
    {Hierarchical\\clustering\\$\wh{\mathcal C}_1,\ldots,\wh{\mathcal C}_K$};

\node[priorbox, below=1.0cm of clust] (prior)
    {EB prior\\per cluster\\$\wh{\mu}_1\!\times\!\cdots\!\times\!\wh{\mu}_K$};
\node[mybox,   left=of prior]  (amp)   {OrchAMP\\within $\wh{\mathcal{C}}_k$ for $T$ iterations};
\node[outbox,  left=of amp]    (out)   {Embeddings\\$\wh{\bm F}_{T,1},\ldots,\wh{\bm F}_{T,m}$};
\node[taskbox, left=of out]    (task)  {Downstream\\prediction};

\draw[arr] (data)  -- (svd);
\draw[arr] (svd)   -- (cka);
\draw[arr] (cka)   -- (clust);
\draw[arr] (clust) -- (prior);
\draw[arr] (prior) -- (amp);
\draw[arr] (amp)   -- (out);
\draw[arr] (out)   -- (task);

\node[font=\small\itshape, text=RoyalPurple!80!black,
      left=0.2cm of data, rotate=90, anchor=south] {discover};
\node[font=\small\itshape, text=MyBlue,
      left=0.7cm of task, rotate=90, anchor=north] {integrate};

\end{tikzpicture}
    \caption{Thematic diagram for the \fancyname{} pipeline}
    \label{fig:thematic}
\end{figure}

\subsection{Initialization and clustered empirical Bayes estimation}{\label{sec:clus_emp_Bayes}

We initialize the AMP-based feature reconstruction using the appropriately rescaled top-$r_h$ principal components $\{(\upca_h,\vpca_h):h \in [m]\}$ of the matrices $\wb{\bm X}_h:=\bm X_h/\sqrt{n}$ for all $h \in [m]$. As $n,p_h \to \infty$, these empirical singular vectors satisfy \footnote{See Proposition~\ref{prop:singular_vect_approx} for a formalization of the approximation.}
\begin{align}
\label{eq:approx_singular_vectors}
\upca_h &\approx \bm U_h (\wh{\bm M}^{L}_h)^\top 
+ \zlpca_h (\wh{\bm \Sigma}^{L}_h)^{1/2}, 
\qquad 
\vpca_h \approx \bm V_h (\wh{\bm M}^{R}_h)^\top 
+ \zrpca_h (\wh{\bm \Sigma}^{R}_h)^{1/2},
\end{align}
where $\zlpca_h \in \R^{n \times r_h}$ and $\zrpca_h \in \R^{p_h \times r_h}$ have i.i.d.\ standard normal entries. Here,
$\wh{\bm M}^{\star}_h=\mathrm{diag}(\wh m^{\star}_{1,h},\ldots,\wh m^{\star}_{r_h,h})$
and
$\wh{\bm \Sigma}^{\star}_h=\mathrm{diag}(\wh \sigma^{\star}_{1,h},\ldots,\wh \sigma^{\star}_{r_h,h})$
for $\star \in \{L,R\}$, where the diagonal entries are given by
\begin{align}
\label{eq:nuisance_param_est_1}
\wh\sigma^L_{i,h}&:=\frac{1+(\wh{\bm D}_h^2)_{ii}}{(\wh{\bm D}_h^2)_{ii}\{1+\gamma_h(\wh{\bm D}_h^2)_{ii}\}},\qquad
\wh m^L_{i,h}:=(1-\wh\sigma^L_{i,h})^{1/2}, \notag\\\wh\sigma^R_{i,h}&:=\frac{1+\gamma_h(\wh{\bm D}_h^2)_{ii}}{\gamma_h(\wh{\bm D}_h^2)_{ii}\{1+(\wh{\bm D}_h^2)_{ii}\}},\qquad\wh m^R_{i,h}:=(1-\wh\sigma^R_{i,h})^{1/2}, \quad \mbox{for $h \in [m]$ and $i \in [r_h]$.}
\end{align}
Furthermore, $\{\wh{\bm D}_h \in \R^{r_h \times r_h}:h \in [m]\}$ are diagonal matrices with entries given by
\begin{align}
\label{eq:approx_snr}
(\wh{\bm D}^2_h)_{ii}:=\frac{\gamma_h(\dpca_h)^2_{ii}-(1+\gamma_h)+
\sqrt{
\left[\gamma_h(\dpca_h)^2_{ii}-(1+\gamma_h)\right]^2-4\gamma_h}}{2\gamma_h},\; h \in [m],\;\; i \in [r_h],
\end{align}
where $\{\dpca_h:h \in [m]\}$ are defined in \eqref{eq:low_rank_init}.
For a unified notation, we write $\widehat{\bm U}^{\mathrm{init}}_e:=\bm U^{\mathrm{pc}}_e$ when $e \in [m]$ and $\widehat{\bm U}^{\mathrm{init}}_e:=\widetilde{\bm X}_{e-m}$ when $e \in \{m+1,\ldots,m+\wt m\}$.

Using \eqref{eq:approx_singular_vectors}, we can conclude that the dependence across the principal components are mediated by the dependence of the true latent matrices $\{\bm U_h:h \in [m]\}$ and $\{\wt{\bm U}_\ell:\ell \in [\wt m]\}$. In the hierarchical clustering step, we aim to construct a dissimmilarity matrix $\mathfrak N\in
\mathbb R^{(m+\widetilde m)\times(m+\widetilde m)},$ for $e,e'\in\mathcal E,$ such that $\mathfrak N_{ee'} \in [0,1]$. Furthermore, $\mathfrak N_{ee'}=0$ should imply that the modalities $e$ and $e'$ are dependent, and $\mathfrak N_{ee'}=1$ should imply that the modalities $e$ and $e'$ are independent. 

For each pair $e,e'\in\mathcal E$, let $\aleph_n$ denote an empirical non-parametric measure of dependence between two matrices with $n$ rows, and define the empirical affinity $\widehat s_{ee'}:=\aleph_n(\widehat{\bm U}^{\mathrm{init}}_e,\widehat{\bm U}^{\mathrm{init}}_{e'})$. 
In our implementation, we take $\aleph_n(\widehat{\bm U}^{\mathrm{init}}_e,\widehat{\bm U}^{\mathrm{init}}_{e'})
:=\mathrm{CKA}(\widehat{\bm U}^{\mathrm{init}}_e,\widehat{\bm U}^{\mathrm{init}}_{e'})$, where CKA refers to centered kernel alignment \citep{kornblith2019similarity} between the two sets of vectors (formally defined in \eqref{eq:def_hsic_sample}).
We define the dissimilarity matrix $\mathfrak N\in\mathbb R^{(m+\widetilde m)\times(m+\widetilde m)}$ entrywise by
\begin{align}
\label{eq:dissimilarity_matrix}
\mathfrak N_{ee'}:=1-\widehat s_{ee'},
\qquad e,e'\in\mathcal E.
\end{align}

To estimate $\mu_k$ for any cluster $\wh{\mathcal C}_k$, we optimize the non-parametric likelihood induced by \eqref{eq:approx_singular_vectors} as follows:
\begin{align}
\label{eq:lik_cluster}
    \wh \mu_k:=\argmax_{\mathfrak m_k \in \mathcal P(\R^{\wh r_k})}\,\,\mathrm L_k(\{\wh{\bm U}^{\mathrm{init}}_e:e \in \wh{\mathcal C}_k\};\mathfrak m_k ), \quad \mbox{where $\widehat r_k:=\sum_{e\in\widehat{\mathcal C}_k}\rho_e,$}
\end{align}
and $\mathcal P(\R^{\wh r_k})$ is an appropriate class of distributions supported on $\R^{\wh r_k}$.
The final empirical Bayes estimate of $\mu$ is given by $\wh \mu:=\wh \mu_1\times \cdots\times \wh \mu_{K}$. Similarly, for any modality $h \in [m]$, we obtain the estimate $\wh \nu_h$ of $\nu_h$, the prior distribution on the right latent factors of modality $h$, as $\wh \nu_h:=\argmax_{\mathfrak n_h \in \mathcal P_h(\R^{r_h})}\mathrm L^r_h(\vpca_h;\mathfrak n_h),$ where $\mathcal P_h(\R^{r_h})$ is a modality specific prior class chosen to model $\nu_h$. They are later utilized in the AMP iterations in Section~\ref{sec:feature_extract}. In our implementations, we use Gaussian mixture models with isotropic component covariances.
Observe that the estimation of $\wh \mu$ fuses information across modalities only if there is evidence that they are related; this ensures data adaptive fusion, and improves downstream performance
\footnote{The precise forms of the likelihoods $\{\mathrm L_k:k \in [K]\}$ and $\{\mathrm L^r_h: h \in [m]\}$ are provided in Section~\ref{sec:emp_bayes_clust}.}.

\begin{algorithm}[!t]
\caption{Data Adaptive Intermediate Fusion (DAIF)}
\label{alg:daif_compact}
\begin{algorithmic}[1]

\Input High-dimensional modalities $\{\bm{X}_h\}_{h=1}^m$,
       low-dimensional modalities $\{\widetilde{\bm{X}}_\ell\}_{\ell=1}^{\widetilde{m}}$,
       response $y$,
       ranks $\{r_h\}_{h=1}^m$ and $\{\widetilde r_\ell\}_{\ell=1}^{\widetilde{m}}$,
       AMP iterations $T$,
       clustering and prediction hyperparameters.
\Output Integrated embeddings $\bm{U}_T \in \mathbb{R}^{n \times r}$
        and pre-trained predictor $\mathfrak{H}(\,\cdot\,,\widehat\theta_{\mathrm{pred}})$. \vspace{2pt}

\State \vspace{2pt}\hrulefill\; \textsc{Phase 1: PCA Initialization} \;\hrulefill\vspace{2pt}

\State For each $h \in [m]$, compute the truncated SVD of $\widebar{\bm{X}}_h=\bm X_h/\sqrt{n}$.
       to obtain $(\upca_h, \bm{D}^{\mathrm{pc}}_h, \vpca_h)$,
       and derive initialization quantities
       $\widehat{\bm{D}}_h$, $\widehat{\bm{M}}^L_h$, $\widehat{\bm{M}}^R_h$,
       $\widehat{\bm{\Sigma}}^L_h$, $\widehat{\bm{\Sigma}}^R_h$
       via \eqref{eq:nuisance_param_est_1}--\eqref{eq:approx_snr}.
       Set $\widehat{\bm{U}}^{\mathrm{init}}_h \gets \upca_h$ for $h \in \mathcal E_{\mathrm H}$
       and $\widehat{\bm{U}}^{\mathrm{init}}_\ell \gets \widetilde{\bm{X}}_\ell$
       for $\ell \in \mathcal E_{\mathrm L}$.\vspace{2pt}

\State \vspace{2pt}\hrulefill\; \textsc{Phase 2: Modality Clustering} \;\hrulefill\vspace{2pt}

\State Compute dissimilarity matrix $\mathfrak{N}$ via \eqref{eq:dissimilarity_matrix}
       and apply hierarchical clustering to obtain
       $\widehat{\mathcal{C}}_1,\ldots,\widehat{\mathcal{C}}_K$
       (see Section~\ref{sec:hierarchical_clust}).\vspace{2pt}

\State \vspace{2pt}\hrulefill\; \textsc{Phase 3: Empirical Bayes Prior Estimation} \;\hrulefill\vspace{2pt}

\State For each $h \in [m]$, estimate column prior $\widehat\nu_h$;
       for each $k \in [K]$, estimate cluster prior $\widehat\mu_k$
       via \eqref{eq:lik_cluster} (see Section~\ref{sec:emp_bayes_clust}).\vspace{2pt}

\State \vspace{2pt}\hrulefill\; \textsc{Phase 4: AMP Feature Extraction} \;\hrulefill\vspace{2pt}

\State Initialize AMP iterates: $\widehat{\bm{G}}_{0,h} \gets \vpca_h$,\;
       $\widehat{\bm{U}}_{-1,h} \gets \upca_h(\widehat{\bm{\Sigma}}^L_h)^{1/2}$,\;
       $\widehat{\bm{M}}^R_{0,h} \gets \widehat{\bm{M}}^R_h$,\;
       $\widehat{\bm{S}}^R_{0,h} \gets \widehat{\bm{\Sigma}}^R_h$,\;
       for all $h \in [m]$.
\For{$t = 0$ \textbf{to} $T-1$}
  \State Column denoising, power iteration, and left state evolution
         for each $h \in [m]$ via
         \eqref{eq:denoise_cols}, \eqref{eq:denoiser_func_cols}, \eqref{eq:state_evol_data}:
         update $\widehat{\bm{V}}_{t,h}$,\;
         $\widehat{\bm{S}}^L_{t,h}$,\;
         $\widehat{\bm{M}}^L_{t,h}$,\;
         $\widehat{\bm{F}}_{t,h}$.
  \State Row denoising, power iteration, and right state evolution
         for each $k \in [K]$, $h \in \widehat{\mathcal{C}}_{k,\mathrm H}$ via
         \eqref{eq:denoise_rows}, \eqref{eq:denoiser_func_rows}, \eqref{eq:state_evol_data}:
         update $\widehat{\bm{U}}_{t,h}$,\;
         $\widehat{\bm{S}}^R_{t+1,h}$,\;
         $\widehat{\bm{M}}^R_{t+1,h}$,\;
         $\widehat{\bm{G}}_{t+1,h}$.
  \State Low-dimensional denoising for each $k \in [K]$,
         $\ell \in \widehat{\mathcal{C}}_{k,\mathrm L}$ via \eqref{eq:denoise_low_dim}:
         update $\widetilde{\bm{U}}_{t,\ell}$.
\EndFor
\State Set $\wb{\bm{U}}_T \gets
       \bigl[\widehat{\bm{U}}_{T,1}\ \cdots\ \widehat{\bm{U}}_{T,m}\ \;
              \widetilde{\bm{U}}_{T,1}\ \cdots\
              \widetilde{\bm{U}}_{T,\widetilde{m}}\bigr]$.\vspace{2pt}

\State \vspace{2pt}\hrulefill\; \textsc{Phase 5: Downstream Prediction} \;\hrulefill\vspace{2pt}

\State For each $h \in [m]$, compute $\widehat{\bm{U}}^{\rsc}_h$
       as defined in \eqref{eq:def_proj_ols_cons}.
\State Optimize $\widehat\theta_{\mathrm{pred}}$ via \eqref{eq:opt_theta}
       and return predictor $\mathfrak{H}(\,\cdot\,,\widehat\theta_{\mathrm{pred}})$
       defined in \eqref{eq:frak_pred}.

\State \Return $\wb{\bm{U}}_T$,\;
       $\mathfrak{H}(\,\cdot\,,\widehat\theta_{\mathrm{pred}})$.

\end{algorithmic}
\end{algorithm}

\subsection{Feature extraction using Approximate Message Passing}
\label{sec:feature_extract}
The principal component scores $\{\upca_h,\vpca_h:h \in [m]\}$ are iteratively refined using Approximate Message Passing \citep{montanari2021estimation} to reconstruct the latent features $\bm U_1,\ldots,\bm U_m$. In that direction, we consider the following iterates for $t \ge 0$: initialize $\wh{\bm G}_{0,h}= \vpca_h$, and $\wh{\bm U}_{-1,h}=\upca_h(\wh{\bm \Sigma}^L_h)^{1/2}$ for $h \in [m]$. For each cluster $\wh{\mathcal C}_k$ and each modality $h \in [m]$, consider
\begin{align}
    \hV_{t,h} &= v_{t,h}(\hG_{t,h};\wh{\nu}_h)\label{eq:denoise_cols}\\
\hF_{t,h} &= \bX_h\hV_{t,h}-\gamma_h\hU_{t-1,h}(\wh{\bdJ}^R_{t,h}(\hG_{t,h};\wh\nu_h))^\top\\ 
\hU_{t,h} &= u_{t,h,k}\left(\left\{\hF_{t,e}: e \in \wh{\mathcal  C}_{k,\mathrm H}\right\},\left\{\tX_{\ell}: \ell \in \wh{\mathcal  C}_{k,\mathrm L}\right\};\wh{\mu}_{k}\right)\label{eq:denoise_rows}\\
\hG_{t+1,h} &= \bX^\top_h\hU_{t,h}-\hV_{t,h}\left(\wh{\bdJ}^L_{t,h,k}\left(\left\{\hF_{t,e}: e\in \wh{\mathcal  C}_{k,\mathrm H}\right\},\left\{\tX_{\ell}: \ell \in \wh{\mathcal  C}_{k,\mathrm L}\right\};\wh\mu_k\right)\right)^\top.\label{eq:rough_denoise_rows}
\end{align}
In the above, for all $h \in [m]$ and $k \in [K]$, $u_{t,h,k}:\mathbb R^{\wh r_k} \to \mathbb R^{r_h}$ for $h \in \wh{\mathcal C}_{k,\mathrm H}$ (where $\wh r_k$ is defined in \eqref{eq:lik_cluster}) and
$v_{t,h}:\mathbb R^{r_h} \to \mathbb R^{r_h}$ are defined as
\begin{align}
&v_{t,h}(y_h)
=
\mathbb E_{\wh\nu_h}\left[
\,V_h \,\middle|\,
\hM^R_{t,h}V_h + (\hS^R_{t,h})^{1/2} Z_{h,R} = y_h
\right], \label{eq:denoiser_func_cols}\\
&u_{t,h,k}\bigl(\{x_e:e \in \wh{\mathcal C}_{k,\mathrm H}\},\{\wt x_{\ell}:\ell \in \wh{\mathcal C}_{k,\mathrm L}\}\bigr) = \mathbb E_{\wh\mu_k}\Bigl[ \,U_h \,\Big|\, \bigl\{\hM^L_{t,e}U_e + (\hS^L_{t,e})^{1/2}Z_{e,L} = x_e : e \in \wh{\mathcal C}_{k,\mathrm H}\bigr\}, \notag\\ &\hspace{25em} \bigl\{\wh{\bm L}_\ell \wt U_{\ell} + \wt Z_{\ell,L}=\wt x_\ell : \ell \in \wh{\mathcal C}_{k,\mathrm L}\bigr\} \Bigr], \label{eq:denoiser_func_rows}
\end{align}
where $(U_1,\ldots,U_m,\wt U_1,\ldots,\wt U_{\wt m}) \sim \wh\mu$ and
$V_h \sim \wh\nu_h$ for all $h \in [m]$, where $\wh \mu$ and $\wh \nu_h$ are the estimated priors.
The random vectors $\{Z_{s,L}: s \in [m]\}$, $\{Z_{h,R}: h \in [m]\}$, and
$\{\wt Z_{\wt s,L}: \wt s \in [\wt m]\}$ have independent $\dnorm(0,1)$
entries, are mutually independent, and are independent of
$(U_1,\ldots,U_m,\wt U_1,\ldots,\wt U_{\wt m})$ and $\{V_h:h \in [m]\}$. The state-evolution matrices are defined as follows: let $\hM^{R}_{0,h}=\hM^R_h$ and $\hS^{R}_{0,h}=\hS^R_h$ and for all $t \ge 0$, 
\begin{align}
    \label{eq:state_evol_data}
\hM^{L}_{t,h}:=\hS^{L}_{t,h}\wh{\bm D}_h, \quad  & \hS^{L}_{t,h}:=\frac{1}{n}(\hV_{t,h})^\top \hV_{t,h},\quad
\hM^{R}_{t+1,h}:=\hS^{R}_{t+1,h}\wh{\bm D}_h,\quad  \hS^{R}_{t+1,h}:=\frac{1}{n}(\hU_{t,h})^\top \hU_{t,h},
\end{align}
where $\wh{\bm D}_h$ is defined in \eqref{eq:approx_snr} for all $h \in [m]$ . Furthermore, the estimates $\wh{\bm L}_\ell$ for the matrices $\bm L_\ell$ for $\ell \in [\wt m]$ are given by 
\begin{align}
    \label{eq:wh_l_l}
\wh{\bm L}_\ell :=
\left(\frac{1}{n}\tX_\ell^\top \tX_\ell - \bm I_{\wt r_\ell}\right)^{1/2}, \quad \mbox{for $\ell \in [\wt m]$.}
\end{align}

In \eqref{eq:denoise_cols}, the denoiser $v_{t,h}(\cdot)$ is applied row-wise to its matrix argument,
and the associated Onsager correction $\wh{\bm J}^R_{t,h}(\hG_{t,h};\wh\nu_h)$ is obtained by computing
the Jacobian of $v_{t,h}(\cdot)$ at each row of $\wh{\bm G}_{t,h}$ and averaging across rows (cf. Section~\ref{sec:amp_details} of the appendix for details). 
Similarly, in \eqref{eq:denoise_rows}, the denoiser $u_{t,h,k}(\cdot)$ is applied row-wise to the
collection of matrices in its argument. The corresponding Onsager term $\wh{\bm J}^L_{t,h,k}$ in
\eqref{eq:rough_denoise_rows} is computed by evaluating the Jacobian row-wise, restricting to the
columns associated with modality $h$, and then averaging over rows (again, see Section~\ref{sec:amp_details} of the appendix for details). 

The low-dimensional latent signals
$\tU_1,\ldots,\tU_{\wt m}$ are estimated by Bayesian denoising motivated by
\eqref{eq:multimodal_low_dim_model} and \eqref{eq:prior_specification}. In
particular, for all $\ell \in \wh{\mathcal C}_k$, we define
\begin{equation}
\label{eq:denoise_low_dim}
\tU_{t,\ell}=\wt u_{t,\ell,k}\!\left(\left\{\hF_{t,s}: s \in \wh{\mathcal C}_{k,\mathrm H}\right\},\left\{\tX_{\wt s}: \wt s \in \wh{\mathcal C}_{k,\mathrm L}\right\};\wh\mu_k\right),
\end{equation}
where the denoiser $\wt u_{t,\ell,k}:\mathbb R^{\wh r_k} \to \mathbb R^{\wt r_\ell}$ is given by
\begin{align}
\wt u_{t,\ell,k}\bigl(\{x_s:s \in \wh{\mathcal C}_{k,\mathrm H}\},\{\wt x_{\wt s}:\wt s \in \wh{\mathcal C}_{k,\mathrm L}\}\bigr)
&=
\mathbb E_{\wh\mu_k}\Bigl[\,\wt U_\ell \,\Big|\,\bigl\{\hM^L_{t,s}U_s + (\hS^L_{t,s})^{1/2} Z_{s,L} = x_s : s \in \wh{\mathcal C}_{k,\mathrm H}\bigr\}, \notag\\
&\hspace{6em}
\bigl\{\wh{\bm L}_{\wt s} \wt U_{\wt s} + Z_{\wt s,L} = \wt x_{\wt s} : \wt s \in \wh{\mathcal C}_{k,\mathrm L}\bigr\}\Bigr],
\end{align}
and is applied row-wise to the matrices in its arguments. The collections
$(U_1,\ldots,U_m,\wt U_1,\ldots,\wt U_{\wt m})$, $\{Z_{s,L}: s \in [m]\}$, and
$\{\wt Z_{\wt s,L}: \wt s \in [\wt m]\}$ are as defined in
\eqref{eq:denoiser_func_cols} and \eqref{eq:denoiser_func_rows}. 

The foregoing scheme alternates between power iteration and Bayesian denoising to refine the noisy latent estimates obtained from the principal components, namely $\{\upca_h,\vpca_h:h \in [m]\}$. The power iteration step amplifies the low-rank signal ensuring greater correlation of the estimates with the true latents, while the Onsager correction removes the bias introduced by repeatedly applying the same data matrix across iterations, thereby ensuring that the iterates behave approximately like the true latent variables corrupted by Gaussian noise. This asymptotic Gaussianity can be used to construct empirical Bayes denoisers without any MCMC iterations, which shrink the iterates toward the estimated prior mean by borrowing information across samples and related modalities while being considerably fast. The cluster based priors, learned adaptively from the data, further prevent over-shrinkage by integrating information only across mutually informative modalities. The resulting embeddings
\(
\bU_t=\left[\hU_{t,1} \;\; \cdots \;\; \hU_{t,m} \;\; \tU_{t,1} \;\; \cdots \;\; \tU_{t,\wt m}\right]
\in \mathbb R^{n \times r},
\)
provide a low-dimensional representation of the observational units for downstream unsupervised analysis. 

\subsection{Downstream prediction using extracted embeddings}
\label{sec:ols_pretraining}
We now describe  our algorithm for downstream prediction using the extracted embeddings after $T$ AMP iterations. While AMP-based methods allow us to construct embeddings $\wh \bdU_{T,h}$ and $\wt \bdU_{T,\ell}$ for the training data, the validity of such constructions relies on the availability of a large number of samples. Consequently, AMP-based embedding construction is not feasible for the test data, where the sample size is comparatively small. Therefore, the features used for prediction in the test samples are obtained by projecting the test observations onto the estimated loading spaces spanned by the columns of $\wh{\bm V}_{T,h}$.
To maintain parity between the distributions of the training and test embeddings, the features $\wh{\bm F}_{T,h}$ obtained from the AMP iterations \eqref{eq:denoise_cols}-\eqref{eq:rough_denoise_rows} are rescaled as follows \footnote{An alternative sample-splitting based approach is described in Section~\ref{sec:sample_splitting}.}:
\begin{align}
\label{eq:def_proj_ols_cons}
    \wh{\bm U}^{\rsc}_{h}:=\wh{\bm F}_{t,h}(\wh{\bm \Sigma}^L_{T,h})^{-1}\wh{\bm D}^{-1}_h ,\qquad \text{for } h \in [m].
\end{align}
The state evolution of the AMP algorithm (Theorem~\ref{thm:w_2_amp}) can be leveraged to show that
\begin{align}
\label{eq:def_proj_ols}
\wh{\bm U}^{\rsc}_{h} \approx \bm U_h+(\wh{\bm D}^{-1}_h(\hS^L_{T,h})^{-1}\wh{\bm D}^{-1}_h)^{1/2}\bm Z_{h,\rsc},
\end{align}
where the entries $\bm Z_{h,\rsc}$ are i.i.d standard Gaussian and independent of $\bm U_1,\ldots,\bm U_m,$ and $\tU_1,\ldots,\tU_{\wt m}$. Based on the foregoing representation, we estimate the unknown link function $g$ in \eqref{eq:response_vector} from $y$ and the collections $\{\wh{\bm U}^{\rsc}_{h}: h \in [m]\}$, and $\{\tX_\ell:\ell \in [\wt m]\}$ by optimizing the following parametric loss function $\mathfrak L_\theta:(\R \times \R^{r})^{\otimes n}  \rightarrow \R$ in terms of the parameter $\theta \in \R^r$. The dimension of the parameter depends on the model architecture. To define the loss function, first define
\begin{align}
\label{eq:u_pred_pop_def}
U^{\mathrm{pred}}_e = \begin{cases}
    U_e+(\wh{\bm D}^{-1}_e(\hS^L_{T,e})^{-1}\wh{\bm D}^{-1}_e)^{1/2}Z_{e,\rsc} & \mbox{if $e \in \mathcal E_{\mathrm H}$}\\
    \wh{\bm L}_{e-m} \wt U_{e-m} + Z_{e-m,L} & \mbox{if $e \in \mathcal E_{\mathrm L}$},
\end{cases}
\end{align}
where $Z_{1,\rsc},\ldots,Z_{m,\rsc}$ are vectors of dimensions $r_1,\ldots,r_m$ that are mutually independent, independent of $(U_1,\ldots,U_m,\wt U_1,\ldots,\wt U_{\wt m})$ and have i.i.d standard Gaussian entries. Similarly, $Z_{1,L},\ldots,Z_{\wt m,L}$ are vectors of dimensions $\wt r_1,\ldots,\wt r_{\wt m}$ that are mutually independent, independent of the collection $(U_1,\ldots,U_m,\wt U_1,\ldots,\wt U_{\wt m})$ and $\{Z_{1,\rsc},\ldots,Z_{m,\rsc}\}$ and have i.i.d standard Gaussian entries.
Now, define the empirical counterpart 
\[
\wh{\bm U}^{\mathrm{pred}}_e = \begin{cases}
    \wh{\bm U}^{\rsc}_{e} & \mbox{if $e \in \mathcal E_{\mathrm H}$}\\
    \tX_{e-m}  & \mbox{if $e \in \mathcal E_{\mathrm L}$}.
\end{cases}
\]
For any $\theta \in \R^r$, consider arbitrary cluster-wise predictors $g_{k,\theta}:\R^{\wh r_k} \rightarrow \R$, and define the aggregated predictor $\mathfrak H:\R^{r} \times \R^r \to \R$ as
\begin{align}
    \label{eq:frak_pred}
  \mathfrak H(x,\theta):= \mathbb E_{\wh \mu}\left[\sum_{k=1}^{K}g_{k,\theta}\left(\{\mathcal U_e:e \in \wh{\mathcal C}_k\}\right) ~\Bigg|~\left\{U^{\mathrm{pred}}_e=x_e: e \in [m+\wt m]\right\}\right],
\end{align}
where $x_e$ is a vector that collects the entries of $x$ corresponding to modality $e$ and the collection of vectors $\{U^{\mathrm{pred}}_e:e \in [m+\wt m]\}$ are defined in \eqref{eq:u_pred_pop_def} with $(U_1,\ldots,U_m,\wt U_1,\ldots,\wt U_{\wt m}) \sim \wh \mu$. 

Now, the loss function is defined as
\begin{align}
\label{eq:def_frak_l}
    \mathfrak L_\theta(y,\{\wh{\bm U}^{\rsc}_{h}: h \in [m]\},\{\tX_\ell:\ell \in [\wt m]\}):=\frac{1}{n}\sum_{i=1}^n\mathcal L\left(y_i,\mathfrak H(\{(\wh{\bm U}^{\mathrm{pred}}_e)_{i*}:e \in [m+\wt m]\},\theta)\right),
\end{align}
where $\mathcal L$ quantifies the discrepancy between the observed response $y_i$ and the prediction produced by $\mathfrak H$ from the estimated latent features of the $i$-th observation. We intentionally keep the loss function general to accommodate diverse supervised learning objectives. For continuous responses, one can adopt the squared error loss given by $\mathcal L(y,\mathfrak H(x)):=(y-\mathfrak H(x))^2$, whereas for binary response one can adopt the binary cross entropy loss given by $\mathcal L(y,\mathfrak H(x)):=-y\log \mathfrak H(x)-(1-y)\log\left(1-\mathfrak H(x)\right)$.
We optimize $\mathfrak L_\theta$ to obtain
\begin{align}
\label{eq:opt_theta}
\wh \theta_\pred:=\argmin_{\theta \in \R^r}\mathfrak L_\theta(y,\{\wh{\bm U}^{\rsc}_{h}: h \in [m]\},\{\tX_\ell:\ell \in [\wt m]\}).
\end{align}
Our loss function is motivated by \citet{HuKeLiu2022MeasurementError} which builds on related techniques in errors in variable regression developed in \citep{fan1993nonparametric,delaigle2009design} and related works. 
This objective can be minimized by approximating the posterior expectation by averaging over MCMC samples from the posterior and any open source optimization module like \texttt{Adam} \citep{kingma2017adam}. In the special case of linear predictor, when it is assumed that all $g_{k,\theta}$ for $k \in [K]$ are linear functions of the co-ordinates of $\theta$ corresponding to cluster $\wh{\mathcal C}_k$, i.e.,
\begin{align}
    g_{k,\theta}(u):= \theta^\top_k u, \quad \mbox{where $u \in \R^{\wh r_k}$, and $\theta:=(\theta_1,\ldots,\theta_K) \in \R^{r}$,}
\end{align}
the optimization in \eqref{eq:opt_theta} reduces to finding the ordinary least squares estimator of $y$ using the covariates $\{\wb{\bm U}^{\pred}_{t,h}: h \in [m]\}$ and $\{\tU^{\pred}_\ell:\ell \in [\wt m]\}$ defined as follows:
\begin{align}
\label{eq:pred_linear embedding}
    \wb{\bm U}^{\pred}_{h}&:=u^\pred_{T,h,k}\left(\left\{\wh{\bm U}^{\rsc}_{e}: e \in \wh{\mathcal C}_{k,\mathrm H}\right\},\left\{\tX_{\ell}: \ell \in \wh{\mathcal  C}_{k,\mathrm L}\right\};\wh{\mu}_{k}\right), \quad \mbox{where $h \in \wh{\mathcal C}_{k,\mathrm H}$,}\\
    \tU^{\pred}_\ell&:=\wt u^\pred_{T,\ell,k}\left(\left\{\wh{\bm U}^{\rsc}_{e}: e \in \wh{\mathcal  C}_{k,\mathrm H}\right\},\left\{\tX_{\ell}: \ell \in \wh{\mathcal  C}_{k,\mathrm L}\right\};\wh{\mu}_{k}\right), \quad \mbox{where $\ell \in \wh{\mathcal C}_{k,\mathrm L}$}.
\end{align}
In the foregoing display, the denoiser functions are defined as 
\begin{align}
& u^\pred_{T,h,k}\bigl(\{x_e:e \in \wh{\mathcal C}_{k}\}\bigr) := \mathbb E_{\wh\mu_k}\Bigl[ \,U_h \,\Big|\, \left\{U^{\mathrm{pred}}_{e}=x_e: e \in \wh{\mathcal C}_k\right\} \Bigr] \quad \mbox{and}\\
&\wt u^\pred_{T,\ell,k}\bigl(\{x_e:e \in \wh{\mathcal C}_k\}\bigr):=
\mathbb E_{\wh\mu_k}\Bigl[\,\wt U_\ell \,\Big|\,\left\{U^{\mathrm{pred}}_{e}=x_e: e \in \wh{\mathcal C}_k\right\} \Bigr],
\end{align}
where the collection $(U^{\mathrm{pred}}_1,\ldots,U^{\mathrm{pred}}_m,U^{\mathrm{pred}}_1,\ldots,U^{\mathrm{pred}}_{\wt m})$ is defined in \eqref{eq:u_pred_pop_def} with 
\[
(U_1,\ldots,U_m,\wt U_1,\ldots,\wt U_{\wt m}) \sim \wh \mu.
\]

\subsection{Prediction on test samples}
\label{sec:test_predict}
Now consider a test sample with multimodal features
$\{X^\test_h:h \in [m]\}$ (high-dimensional) and $\{\wt X^\test_\ell:\ell \in [\wt m]\}$
(low-dimensional), and an unknown response $y_\test$ generated as follows.
\begin{align}
    \label{eq:test_high_dim}
    X^\test_h&= \frac{1}{\sqrt{n}} \bdV_h \bm D_h U^\test_h + Z^\test_h
    \in \mathbb R^{p_h},\;\text{for}\,h \in [m],\quad \wt X^\test_\ell= \bm L_\ell \wt U^\test_\ell + \wt Z^\test_\ell\in \mathbb R^{\wt r_\ell},\; \text{for}\,\ell \in [\wt m],
\end{align}
where $(U^\test_1,\ldots,U^\test_m,\wt U^\test_1,\ldots,\wt U^\test_{\wt m}) \sim \mu$.
The noise vectors $\{Z^\test_h:h \in [m]\}$ and
$\{\wt Z^\test_\ell:\ell \in [\wt m]\}$ have independent standard Gaussian entries,
are mutually independent, and are independent of both the latent variables and the
training data. The latent vectors
$(U^\test_1,\ldots,U^\test_m,\wt U^\test_1,\ldots,\wt U^\test_{\wt m})$
are also independent of the training data. Finally, the test response $y_\test$ is
generated from \eqref{eq:response_vector}, independently of the training data.

Based on \eqref{eq:test_high_dim}, we adopt an ordinary least squares (OLS) approach to estimate the latent factors $U^{\test}_1,\ldots,U^{\test}_m$ corresponding to the test data. Specifically, define $\wh{\bm R}_{T,h} := n^{-1}\cdot\wh{\bm V}_{T,h}\wh{\bm D}_h \in \mathbb R^{p_h \times r_h}.$
For each $h \in [m]$, we form the test-time OLS embedding $\wh U^{\test,\ols}_h:=n^{-1/2}\cdot\bigl(\wh{\bm R}^\top_{T,h}\wh{\bm R}_{T,h}\bigr)^{-1}\wh{\bm R}^\top_{T,h} X^\test_h\in \mathbb R^{r_h}.$ One can again use Theorem~\ref{thm:w_2_amp} to show that $\wh U^{\test,\ols}_h \approx U_h+(\wh{\bm D}^{-1}_h(\hS^L_{T,h})^{-1}\wh{\bm D}^{-1}_h)^{1/2}Z^{\test,\ols}_{h}$, where $Z^{\test,\ols}_{h} \sim \dnorm_{r_h}(0,\bm I_{r_h})$ for all $h \in [m]$. Therefore, the distribution of the features used for training the predictor $\mathfrak H(\cdot)$ match that of the features derived from the test observations during inference. This allows us to use the trained predictor for prediction on the test data. In this direction, define
\[
\wh U^{\test}_e = \begin{cases}
    \wh U^{\test,\ols}_e & \mbox{if $e \in \mathcal E_{\mathrm H}$}\\
    \wt X^\test_{e-m}  & \mbox{if $e \in \mathcal E_{\mathrm L}$}.
\end{cases}
\]
Then, the test response is predicted by evaluating the pre-trained predictor (defined in \eqref{eq:frak_pred}) on the denoised test embeddings as follows:
\begin{align}
    \label{eq:test_prediction}
    \wh y_\test:=\mathfrak H(\{\wh U^{\test}_e:e \in [m+\wt m]\},\wh\theta_{\mathrm{pred}}),
\end{align}
where $\wh \theta$ is obtained from \eqref{eq:opt_theta}. Note that $\mathfrak H$ uses $\wh \mu$ estimated from the training data.

Observe that when the number of clusters is $m+\wt m$, so that each modality forms its own cluster, information is integrated only at the prediction stage. This corresponds to \emph{late fusion}. In this regime, the AMP component reduces to applying \emph{Empirical Bayes PCA} \citep{zhong2022empirical} separately to the high-dimensional modalities and denoising the low-dimensional modalities separately, followed by using the reconstructed features in the supervised learning stage. On the other hand, when the number of clusters is one, information is fused across all modalities both during feature engineering and prediction. In this regime, the AMP component reduces to \emph{Orchestrated Approximate Message Passing} \citep{nandy2024multimodal} applied jointly across all modalities. Therefore, \fancyname{} subsumes these two existing pipelines by adaptively interpolating between them based on the evidence of inter-modal dependence present in the data. Other than these two extreme cases, everything else corresponds to intermediate fusion.

\section{Numerical Experiments}
\label{sec:numerical_exp}

In this section, we benchmark the performance of \fancyname{} against several widely used multimodal data integration techniques. Our evaluation considers both recovery of the underlying latent factors and predictive performance in a downstream supervised task.

As benchmarks, we compare the performance of \fancyname{} in terms of latent factor reconstruction against two AMP-based pipelines, EB-PCA \citep{zhong2022empirical}, and OrchAMP \citep{nandy2024multimodal}, along with \texttt{AJIVE} \citep{feng2018angle}, multiview canonical correlation analysis (\texttt{MCCA}) \citep{kettenring1971canonical}, generalized canonical correlation analysis (\texttt{GCCA}) \citep{carroll1968gca}, and a version of hierarchical PCA. The detailed implementation of each non-AMP-based method is described in Section~\ref{sec:baseline_exp_des}. For supervised response prediction, we construct predictors using the features reconstructed by each method and also consider the supervised cooperative learning approach of \cite{ding2022cooperative}.

We consider $m=3$ high-dimensional feature modalities $\bm X_1 \in \mathbb R^{n \times p_1}$, $\bm X_2 \in \mathbb R^{n \times p_2}$, and $\bm X_3 \in \mathbb R^{n \times p_3}$ generated according to \eqref{eq:multimodal_factor_model}. To operate in the proportional asymptotic regime, we set $p_1=\lfloor 0.25n \rfloor$, $p_2=\lfloor 0.8n \rfloor$, and $p_3=\lfloor 3n \rfloor$. The rows of $\bm U_1\in \mathbb R^{n \times 2}$ are generated as $\smash{(\bm U_1)_{i*} \overset{\mathrm{i.i.d}}{\sim}}~\mathrm{Unif}\{\pm 1\}^{\otimes 2}$. To generate the rows of $\bm U_2 \in \mathbb R^{n \times 3}$, we take $\smash{(\bm U_2)_{i,1:2}=(\bm U_1)_{i*}}$ for all $i \in [n]$. The third column of $\bm U_2$ is independently generated as $\smash{(\bm U_2)_{i3} \overset{\mathrm{i.i.d}}{\sim} 0.25 \,\delta_{(-\sqrt{2})}+0.5~\delta_0+0.25 \,\delta_{(\sqrt{2})}}$ for $i \in [n]$. The rows of $\bm U_3 \in \mathbb R^{n \times 2}$ are generated independently from $\mathsf{G}^{\otimes 2}$, where $\mathsf G$ is the distribution of $(\mathsf{W}-\mathbb E[\mathsf{W}])/\sqrt{\mathrm{Var}(\mathsf{W})}$, where $\mathsf{W}:=\mathrm{sgn}(\mathsf U)\times |\mathsf U+\mathsf \epsilon|^{1/3}$, with $\mathsf{U} \sim \mathrm{Unif}\{\pm 1\}$ and $\epsilon \sim \dnorm(0,0.1)$. The rows of $\bm V_1 \in \mathbb R^{p_1 \times 2}$ are generated independently from $\mathrm{Unif}\{\pm 1\}^{\otimes 2}$, those of $\bm V_2\in \mathbb R^{p_2 \times 3}$ are generated independently from $\mathsf{H}^{\otimes 3}$ where $H$ is the distribution of $\mathsf{V}/\sqrt{1.25}$ where $\mathsf{V} \sim 0.5~\dnorm(-1,0.25)+0.5~\dnorm(1,0.25)$, and finally, those of $\bm V_3\in \mathbb R^{p_3 \times 2}$ are generated independently from $\mathrm{Laplace}(1/\sqrt{2})^{\otimes 2}$. The signal strengths are controlled via diagonal matrices $\bm D_h$ with $(\bm D_h)_{ii} = 5i$ for $i \in [r_h]$, where $r_1=2, r_2=3$ and $r_3=2$. 

The above choices ensure that the prior distributions for the latents are heterogeneous across the modalities and vary over different classes of distribution.
In particular, the rows of $\bm U_1$ are generated from a balanced discrete prior, whereas the rows of $\bm U_2$ are partially correlated with $\bm U_1$ but have an additional independent coordinate, yielding an unbalanced discrete prior with richer cluster structure. By contrast, the rows of $\bm U_3$ are generated independently of the first two modalities and have a continuous non-Gaussian distribution. Similarly, the right latent factors are chosen to cover qualitatively different tails and support behaviors. The rows of $\bm V_1$ are discretely supported, the rows of $\bm V_2$ are continuously supported with sub-Gaussian tails, and the rows of $\bm V_3$ have sub-exponential tails.

Next, we consider the generation of the response vectors for the supervised learning experiments. The response vector $y$ is generated according to~\eqref{eq:response_vector}, with $g(\,\cdot\,\mid \theta_i) \sim \dnorm(\theta_i,0.01)$. To generate the mean $\theta_i$ according to \eqref{eq:response_vector}, we consider two link functions:
\begin{enumerate}
    \item \emph{Linear link:} $\smash{f_1\big((\bm U_1)_{i*},(\bm U_2)_{i*},(\bm U_3)_{i*}\big)= \beta^\top_1(\bm U_1)_{i*}+\beta_2^\top (\bm U_2)_{i*}+\beta_3^\top (\bm U_3)_{i*}}$,
    \item \emph{Single-index link:} $\smash{f_2\big((\bm U_1)_{i*},(\bm U_2)_{i*},(\bm U_3)_{i*}\big)= (\beta^\top_1(\bm U_1)_{i*}+\beta_2^\top (\bm U_2)_{i*})^2+\beta_3^\top (\bm U_3)_{i*}}$,
\end{enumerate}
where the entries of $\beta:=(\beta_1,\beta_2,\beta_3)^\top$ are generated independently as $\mathrm{Unif}\{\pm 1\}$.

\begin{table}[t]
\centering
\footnotesize
\setlength{\tabcolsep}{5pt}
\renewcommand{\arraystretch}{0.95}
\begin{tabular}{lrrrrr}
\toprule
Method & $n = 3000$ & $n = 3500$ & $n = 4000$ & $n = 4500$ & $n = 5000$ \\
\midrule
\multicolumn{6}{l}{\textbf{Modality 1 $(\mathcal{E}_1)$}} \\
EB-PCA & 0.057(0.001) & 0.058(0.001) & 0.056(0.001) & 0.055(0.001) & 0.054(0.001) \\
DAIF & \textbf{0.012(0.000)} & \textbf{0.012(0.000)} & \textbf{0.011(0.000)} & \textbf{0.011(0.000)} & \textbf{0.011(0.000)} \\
OrchAMP & 0.015(0.001) & 0.014(0.001) & 0.014(0.001) & 0.012(0.001) & 0.013(0.001) \\
\midrule
\multicolumn{6}{l}{\textbf{Modality 2 $(\mathcal{E}_2)$}} \\
EB-PCA & 0.023(0.002) & 0.022(0.002) & 0.022(0.002) & 0.016(0.001) & 0.019(0.002) \\
DAIF & \textbf{0.015(0.000)} & \textbf{0.014(0.000)} & \textbf{0.014(0.000)} & \textbf{0.014(0.000)} & \textbf{0.013(0.000)} \\
OrchAMP & 0.017(0.001) & 0.017(0.001) & 0.017(0.001) & 0.015(0.001) & 0.016(0.001) \\
\midrule
\multicolumn{6}{l}{\textbf{Modality 3 $(\mathcal{E}_3)$}} \\
EB-PCA & \textbf{0.005(0.000)} & \textbf{0.005(0.000)} & \textbf{0.005(0.000)} & \textbf{0.005(0.000)} & \textbf{0.005(0.000)} \\
DAIF & \textbf{0.005(0.000)} & \textbf{0.005(0.000)} & \textbf{0.005(0.000)} & \textbf{0.005(0.000)} & \textbf{0.005(0.000)} \\
OrchAMP & 0.031(0.000) & 0.031(0.000) & 0.032(0.001) & 0.032(0.000) & 0.031(0.000) \\
\bottomrule
\end{tabular}
\caption{Average reconstruction error $\mathcal{E}_k$ per modality across sample sizes (mean with standard error in parentheses). \textbf{EB-PCA}: no fusion (3 clusters); \textbf{DAIF}: intermediate fusion (2 clusters, CKA clustering); \textbf{OrchAMP}: complete fusion (1 cluster). Bold = best mean per column.}
\label{tab:modality_by_n}
\end{table}

\subsection{Effect of sample size on feature reconstruction.}
\label{sec:eff_sample_size}
In this experiment, we compare the performance of \fancyname{} in reconstructing $\bm U_1,\bm U_2$ and $\bm U_3$ with EB-PCA and OrchAMP as a function of sample size $n$. In that direction, we vary the sample size $n \in \{3000,3500,4000,4500,5000\}$. In the hierarchical clustering step, we use CKA with Gaussian kernel to construct the similarity measures, and the AMP is run for $T=10$ iterations. To measure the reconstruction errors, we define
\begin{align}
\mathcal E_k := \frac{1}{n^2}\left\|\wh{\bm U}_{k}\wh{\bm U}_{k}^\top - \bm U_k\bm U_k^\top\right\|_F^2, 
\qquad k=1,2,3.
\end{align}

Each experiment is repeated over $50$ independent runs, and Table~\ref{tab:modality_by_n} reports the averaged errors.
As seen from the table, \fancyname{} substantially improves reconstruction for latent factors across all the modalities.
For modality~3 ($\bm U_3$), which is independent of other modalities, both EB-PCA and \fancyname{} have equal performance. While EB-PCA is an oracle procedure that is aware of this independence, \fancyname{} shows remarkable adaptation by learning such a relation from the observations. The performance degradation in OrchAMP can be explained by the difficulty in estimating high-dimensional priors with finite samples, a classic phenomenon observed in high-dimensional non-parametric procedures. Across all sample sizes and replications, \fancyname{} consistently grouped modalities 1 and 2 in a single cluster and identified modality 3 as a separate cluster.

\begin{table}[b]
\centering
\footnotesize
\setlength{\tabcolsep}{5pt}
\renewcommand{\arraystretch}{0.95}
\begin{tabular}{llrrrrr}
\toprule
Modality & $n$ & \textbf{DAIF-CKA} & \textbf{AJIVE} & \textbf{MCCA} & \textbf{GCCA} & \textbf{HPCA} \\
\midrule
\multirow{3}{*}{$\mathcal{E}_1$} & 2000 & \textbf{0.013(0.001)} & 0.183(0.001) & 0.371(0.002) & 0.366(0.002) & 0.367(0.002) \\
 & 2500 & \textbf{0.013(0.001)} & 0.184(0.001) & 0.374(0.001) & 0.369(0.001) & 0.370(0.001) \\
 & 3000 & \textbf{0.012(0.001)} & 0.182(0.001) & 0.371(0.002) & 0.366(0.002) & 0.366(0.002) \\
\midrule
\multirow{3}{*}{$\mathcal{E}_2$} & 2000 & \textbf{0.016(0.001)} & 2.119(0.001) & 0.140(0.001) & 0.140(0.001) & 0.144(0.001) \\
 & 2500 & \textbf{0.015(0.000)} & 2.117(0.001) & 0.139(0.001) & 0.139(0.001) & 0.142(0.001) \\
 & 3000 & \textbf{0.015(0.000)} & 2.116(0.001) & 0.137(0.001) & 0.137(0.001) & 0.139(0.001) \\
\midrule
\multirow{3}{*}{$\mathcal{E}_3$} & 2000 & \textbf{0.006(0.000)} & 2.003(0.001) & 0.035(0.000) & 0.035(0.001) & 0.036(0.000) \\
 & 2500 & \textbf{0.006(0.000)} & 2.003(0.001) & 0.035(0.000) & 0.035(0.000) & 0.036(0.000) \\
 & 3000 & \textbf{0.005(0.000)} & 2.003(0.001) & 0.035(0.000) & 0.035(0.000) & 0.035(0.000) \\
\bottomrule
\end{tabular}
\caption{Average reconstruction error $\mathcal{E}_k$ per modality across sample sizes (mean with standard error in parentheses). \textbf{DAIF-CKA}: AMP with CKA-based clustering (2 clusters). Bold = best mean per row.}
\label{tab:cka_vs_baselines}
\end{table}

\subsection{Comparing \fancyname{} against multimodal benchmarks}
\label{sec:rec_bench}
In this experiment, we benchmark \fancyname{} against  \texttt{AJIVE}, \texttt{GCCA}, \texttt{MCCA}, and a version of hierarchical PCA described in Appendix~\ref{sec:baseline_exp_des}. We compare the performance of each method in recovering the latent factors as a function of $n$, which we vary in $\{2000, 2500, 3000\}$. In this experiment, we repeat each trial independently for 25 iterations and average the reconstruction error for all five competing methods. For \fancyname{}, we use 10 AMP iterations and CKA as the similarity measure for the hierarchical clustering step. The results are summarized in Table~\ref{tab:cka_vs_baselines}. Our results indicate that intermediate fusion via the AMP-based pipeline consistently achieves the lowest reconstruction error across all $n$, substantially outperforming classical multiview methods (\texttt{AJIVE}, \texttt{MCCA}, \texttt{GCCA}). The main reason why \fancyname{} achieves the superior performance is that the traditional methods rely on the assumption that the intermodal dependence is driven through some shared latent factors whose dimension is the same across modalities, whereas \fancyname{} provides the flexibility to have an intermodal dependence mediated through latent factors of different dimensions across the modalities. This provides \fancyname{} an edge where there might not exist a common latent space that controls the intermodal dependence, but all dimensions of each latent factor across different modalities are dependent. Furthermore, the hierarchical clustering-based intermediate fusion also provides \fancyname{} an additional advantage when the intermodal dependence is low by preventing overintegration. The existing pipelines do not provide such flexibility in determining the granularity of data integration from the data. Instead, they integrate all modalities by projecting them into a shared latent space.
\begin{table}[t]
\centering
\footnotesize
\setlength{\tabcolsep}{6pt}
\renewcommand{\arraystretch}{1.05}
\begin{tabular}{lrrrrr}
\toprule
Method & $n = 3000$ & $n = 3500$ & $n = 4000$ & $n = 4500$ & $n = 5000$ \\
\midrule
\multicolumn{6}{l}{\textbf{Linear link, linear}} \\
\quad EB-PCA & 0.028(0.002) & 0.027(0.002) & 0.027(0.002) & 0.024(0.001) & 0.024(0.002) \\
\quad DAIF-CKA & \textbf{0.022(0.001)} & \textbf{0.022(0.001)} & \textbf{0.023(0.001)} & \textbf{0.023(0.001)} & \textbf{0.021(0.001)} \\
\quad OrchAMP & 0.035(0.002) & 0.034(0.001) & 0.036(0.002) & 0.033(0.000) & 0.034(0.002) \\
\midrule
\multicolumn{6}{l}{\textbf{Non-linear link, neural net}} \\
\quad EB-PCA & 0.260(0.057) & 0.330(0.063) & 0.258(0.045) & 0.189(0.034) & 0.259(0.058) \\
\quad DAIF-CKA & \textbf{0.207(0.037)} & \textbf{0.251(0.041)} & \textbf{0.233(0.038)} & \textbf{0.185(0.031)} & \textbf{0.214(0.040)} \\
\quad OrchAMP & 0.242(0.038) & 0.270(0.040) & 0.277(0.046) & 0.203(0.030) & 0.268(0.058) \\
\bottomrule
\end{tabular}
\caption{Average prediction error across sample sizes (mean with standard error in parentheses). \textbf{DAIF-CKA}: AMP with CKA-based clustering (2 clusters); \textbf{EB-PCA}: no fusion (3 clusters); \textbf{OrchAMP}: complete fusion (1 cluster). Bold = best mean per column within each block.}
\label{tab:pred_error}
\end{table}

\subsection{Effect of Sample Size on Prediction Accuracy}
\label{sec:eff_pred_a}
We evaluate \fancyname{} for downstream supervised learning. We compare it against the performance of similar embeddings derived from EB-PCA or OrchAMP. In particular, this experiment is designed to evaluate the effectiveness of intermediate fusion in supervised learning objectives with multimodal features. 

For this experiment, the response $y$ is generated according to the procedure described at the beginning of the section with both linear and single-index links. We evaluate the performance of predicting the response $y$ using the reconstructed embeddings from \fancyname{}, EB-PCA and OrchAMP as a function of training sample size $n$ which we vary in $\{3000,3500,4000,4500,5000\}$.  Test data containing $\lfloor0.1n\rfloor$ samples are separately generated using the same data-generating mechanism as the training data. Observe that the data-generating mechanism in \eqref{eq:multimodal_factor_model} for the training features involves a division by $\sqrt{n_\train}$ in the signal component to ensure the signal-to-noise ratio in the problem is $O(1)$. To ensure alignment of the distributions of the test and training features, we generate the test features using the same scaling by $\sqrt{n_\train}$ instead of $\sqrt{n_\test}$.
First we use 10 AMP iterations for all three procedures, and use CKA as a similarity measure for the hierarchical clustering in \fancyname{} to estimate the loading matrices $\bm V_1,\bm V_2$ and $\bm V_3$. Then we consider the rescaled training features $\wh{\bm U}^\rsc_1,\wh{\bm U}^\rsc_2$ and $\wh{\bm U}^\rsc_3$ (cf. \eqref{eq:def_proj_ols_cons}). For the linear-link function, we use a linear predictor fitted by regressing the training response $y$ on the denoised projections $\wb{\bm U}^{\pred}_1,\wb{\bm U}^{\pred}_2$ and $\wb{\bm U}^{\pred}_3$ (cf. \eqref{eq:pred_linear embedding}). 
For the single-index link, we consider a non-linear architecture where for all $k \in [K]$, the clusterwise predictors $g_{\theta,k}$ in \eqref{eq:def_frak_l} are two-layer neural networks with 32 hidden units and ReLU link function and train the predictor by minimizing $\mathfrak L_\theta$ as in \eqref{eq:def_frak_l} with $\mathcal L(y,\mathfrak H(x))=(y-\mathfrak H(x))^2$. The estimated loading matrices constructed from the training features are also used to project the test features as described in Section~\ref{sec:test_predict}. The learned predictors are applied to reconstructed test embeddings to predict the test response (cf. \eqref{eq:test_prediction}). The prediction accuracy is measured using 
\[
\mathcal E_{\mathrm{pred}}:=\frac{1}{n_{\text{test}}} \sum_{i \in \mathcal I_\text{test}}(\wh y_{\test,i}-y_i)^2,
\]
where $\wh y_\test$ is the predicted test response. The results are summarized in Table~\ref{tab:pred_error}. We observe that across all combinations of $n$, link, and predictor, intermediate fusion outperforms the other two fusion strategies in terms of the prediction accuracy.

\begin{table}[t]
\centering
\scriptsize
\setlength{\tabcolsep}{5pt}
\renewcommand{\arraystretch}{1.0}
\begin{tabular}{@{}lccccc@{}}
\toprule
$n$ & DAIF-CKA & AJIVE & MCCA & GCCA & HPCA \\
\midrule
3000 & \textbf{0.223(0.039)} & 28.983(4.617) & 28.920(4.619) & 28.924(4.619) & 28.925(4.619) \\
3500 & \textbf{0.260(0.043)} & 36.015(5.566) & 35.959(5.567) & 35.962(5.566) & 35.962(5.566) \\
4000 & \textbf{0.239(0.040)} & 33.757(5.337) & 33.705(5.335) & 33.705(5.337) & 33.704(5.335) \\
\bottomrule
\end{tabular}
\caption{Prediction error (MSE) across sample sizes $n$. Lower is better; per-row minima are bold.}
\label{tab:pred_err_vs_baselines}
\end{table}

\subsection{Comparing prediction performance against popular fusion benchmarks}
\label{sec:ne_pred_oth_meth}
In this experiment, we compare the performance of intermediate fusion with some popular multimodal supervised learning frameworks. As benchmarks, we consider the popular multimodal dimension reduction pipelines \texttt{AJIVE}, \texttt{MCCA}, and \texttt{GCCA} to construct a low-dimensional embedding and fit a linear regression to predict the response $y$ using the denoised embeddings. The same procedure is adapted for \fancyname{}. It can be verified that in the \fancyname{} framework, this amounts to using the linear-architecture for $\mathfrak H(x,\theta)$ (as defined in \eqref{eq:frak_pred}) for all $k=1,\ldots,K$. In this experiment, we adopted the same data generating mechanism as in Section~\ref{sec:eff_pred_a}, except we only consider the linear link. The test data is projected to the same latent space as the training embeddings. For \fancyname{}, we adopt the same procedure as described in Section~\ref{sec:eff_pred_a}. The techniques used for test prediction for the other methods are outlined in Section~\ref{sec:baseline_exp_des}. The prediction performance on test data is compared for all benchmarks using $\mathcal E_{\mathrm{pred}}$ described in the previous section. Our results are summarized in Table~\ref{tab:pred_err_vs_baselines}.

We also consider the cooperative learning based on $\ell_1$ penalty (henceforth referred to as \emph{Coop-Lasso}) framework from \cite{ding2022cooperative} as a benchmark for intermediate fusion-based prediction with multimodal features. In this framework, we minimize  
\begin{align}
    \mathcal{J}(\boldsymbol{\theta}) &:= \frac{1}{2n}\left\|y -( \bm X_1\theta_1+\bm X_2\theta_2+\bm X_3\theta_3)\right\|^2_2+\frac{\rho}{2n}\sum_{m < m'}\|X_m\theta_m - X_{m'}\theta_{m'}\|^2_2+ \sum_m \frac{\lambda}{p_m}\|\theta_m\|_1,
\end{align}
to estimate $\theta_1 \in \R^{p_1},\theta_2 \in \R^{p_2}$ and $\theta_3 \in \R^{p_3}$. The final estimate is given by $\wh y_\test:=x^\top_{1,\test}\wh\theta_1+x^\top_{2,\test}\wh\theta_2+x^\top_{3,\test}\wh\theta_3$. The hyperparameters $\rho$ and $\lambda$ are selected by assessing the prediction performance on a held-out validation subset consisting of a random subset of the training data (consisting of $20\%$ of the training samples). The regularization parameter $\lambda$ is selected by first computing the full Lasso path on the stacked prediction block with $\rho = 0$, then evaluating each candidate $\lambda$ on a fixed held-out validation set (20\% of training data) and picking the one minimizing validation MSE. This chosen $\lambda$ is subsequently frozen while $\rho$ is tuned separately over a coarse grid ${0.0, 0.5, 2.0}$ using the same validation split. However, since the above optimization is difficult to perform for large sample size (another advantage of \fancyname{} is that it is extremely scalable for large sample sizes), for this experiment, we vary the sample size in $n \in \{1000,1500,2000\}$. The results are summarized in Table~\ref{tab:coop_vs_daif}, and we observe that across all sample sizes \fancyname{} based predictor outperforms the pipeline from \cite{ding2022cooperative}.

\begin{table}[h]
\centering
\setlength{\tabcolsep}{5pt}
\renewcommand{\arraystretch}{1.0}
\begin{tabular}{@{}lccc@{}}
\toprule
Method & $n = 1000$ & $n = 1500$ & $n = 2000$ \\
\midrule
DAIF & \textbf{0.0415(0.0035)} & \textbf{0.0276(0.0017)} & \textbf{0.0252(0.0010)} \\
Coop. Lasso & 0.8051(0.0998) & 0.9092(0.0907) & 0.7338(0.0870) \\
\bottomrule
\end{tabular}
\caption{Prediction error across sample sizes $n$. Lower is better; per-column minima are bold.}
\label{tab:coop_vs_daif}
\end{table}

\section{Real data analysis}
\label{sec:real_data}
To demonstrate the prowess of \fancyname{} for supervised learning problems, we consider two real data analysis scenarios.

\subsection{A TEA-seq data analysis}
\label{sec:tea_seq}
In this experiment, we analyzed the trimodal TEA-seq PBMC dataset of \citet{Swanson2021} (also analyzed in \cite{nandy2024multimodal}), consisting of RNA expression (approximately $36{,}601$ genes), chromatin accessibility (approximately $66{,}828$ ATAC features), and 48 surface proteins measured on $8{,}213$ cells. After standard quality-control filtering, $6{,}335$ cells were retained. We used the $2{,}000$ most variable genes, $5{,}000$ most variable ATAC features, and 40 highly variable protein markers. The protein marker CD45RA was treated as the continuous response, while the remaining 39 proteins were used as low-dimensional predictors.
Cell-type labels were obtained by label transfer from the CITE-seq reference dataset of \citet{Stuart2019-ip}. Each modality was normalized for sequencing depth and transformed via $x \mapsto \log(1+x)$, yielding data matrices $\bm X_1 \in \mathbb R^{6335 \times 2000}$ (RNA), $\bm X_2 \in \mathbb R^{6335 \times 5000}$ (ATAC), and $\wt{\bm X} \in \mathbb R^{6335 \times 39}$ (ADT), consistent with the multimodal factor model in \eqref{eq:multimodal_factor_model}--\eqref{eq:multimodal_low_dim_model}.

Our goal was to pre-train a predictor for CD45RA expression using Algorithm~\ref{alg:daif_compact} on a subset of training cells and evaluate prediction on a smaller held-out test set using \eqref{eq:test_prediction}. After preprocessing the whole dataset, we used a random subset of
$4{,}560$ cells for training the predictor and $1{,}141$ cells as test units. 
Throughout, we use a linear predictor and the loss function $\mathfrak L_\theta$ is defined using $\mathcal L(y,\mathfrak H(x))=(y-\mathfrak H(x))^2$. As described in Section~\ref{sec:eff_pred_a}, pre-training under this linear model by optimizing \eqref{eq:opt_theta} reduces to fitting a linear regression model using OLS with CD45RA expression as the response and the denoised rescaled embeddings defined in \eqref{eq:pred_linear embedding} as covariates. Test predictions are obtained by constructing denoised embeddings for the test cells and applying the pre-trained regression coefficients. Prediction accuracy is measured by
\[
\mathsf{RMSE}_{\test}=\sqrt{\frac{1}{n_{\test}}\sum_{i \in \mathcal I_{\test}}(\wh y_{\test,i}-y_i)^2 }.
\]

We compare \fancyname{} (using CKA similarity in Algorithm~\ref{alg:daif_compact}) against MOFA+ \citep{Argelaguet2020}, Multigrate \citep{LitinetskayaPoEVAE2022}, JAFAR \citep{anceschi2024bayesian}, and multimodal cooperative regression \citep{ding2022cooperative}. For \fancyname{}, we consider three fusion regimes: \emph{early fusion} ($K=1$), \emph{late fusion} ($K=3$), and \emph{intermediate fusion}, where the number of clusters $K$ is selected via the gap statistic \citep{Tibshirani_2001} applied to the CKA-based modality dissimilarity matrix. Since the gap statistic only compares $K=1,2$, we separately evaluate $K=3$. The gap statistic selected $K=1$, reflecting the strong dependence among modalities. Furthermore, in \fancyname{}, we applied 10 AMP iterations with PCA ranks $r_{\text{RNA}} = 20$ and $r_{\text{ATAC}} = 15$ across all fusion regimes. For the prior estimation, we modeled the prior classes used in the empirical Bayes steps using isotropic Gaussian mixture models with number of components ranging in $\{1,\ldots,\lfloor\sqrt[3]{n}\rfloor\}$ and selected the model with minimum BIC.\footnote{We used open source \texttt{R} implementations of MOFA+ \citep{Argelaguet2020}, and JAFAR \citep{anceschi2024bayesian}, and a similar \texttt{Python} implementation of Multigrate for our experiments. For MOFA+, we used 30 factors for the joint embeddings from all modalities. 
For JAFAR, we first computed the top 100 left singular vectors of the RNA and ATAC views via PCA, then used JAFAR to jointly infer 30-dimensional per-modality factor loadings and response coefficients from the training data (PCA transformed RNA and ATAC along with 39 protein features) via Gibbs sampling; test-set predictions were obtained by analytically projecting held-out cells onto the posterior factor loadings at each MCMC draw and averaging the resulting linear predictions across draws.
Multigrate \citep{LitinetskayaPoEVAE2022} is a one-layer variational autoencoder-based method where we used 70 dimensions for the joint encoder encoding all modalities. The resulting embeddings were used to train downstream predictors of CD45RA using linear regression.}

\begin{table}[h]
\centering
\scriptsize
\begin{tabular}{lrrrrrr}
\toprule
 & DAIF-CKA (K=1) & DAIF-CKA (K=3) & JAFAR & Co-op. Reg. & MOFA+ & Multigrate \\
\midrule
$\mathsf{RMSE}_{\test}$ & 2.6867 & \textbf{2.6854} & 2.6999 & 2.7513 & 2.7317 & 3.3368 \\
\bottomrule
\end{tabular}
\caption{CD45RA prediction on TEA-seq. Lower RMSE is better. \fancyname{}-CKA for $K=1$ corresponds to OrchAMP \citep{nandy2024multimodal} and \fancyname{}-CKA for $K=3$ corresponds to EB-PCA \citep{zhong2022empirical}. Gap statistic based number of clusters selection resulted in choosing $K=1$.}
\label{tab:tea_regression_split3}
\end{table}

Table~\ref{tab:tea_regression_split3} summarizes the prediction results. \fancyname{} with early fusion achieves $\mathsf{RMSE}\approx2.69$, outperforming JAFAR, cooperative regression, MOFA+, and Multigrate, while closely matching the best performer EB-PCA (\fancyname{} with $K=3$). Since OrchAMP and EB-PCA arise as special cases of \fancyname{}, these results demonstrate the competitiveness of the proposed AMP-based prediction framework relative to existing multimodal integration pipelines.

\subsection{A TCGA-BRCA survival analysis}
\label{sec:tcga_brca}

We analyzed a trimodal genomic dataset from The Cancer Genome Atlas Breast Cancer (TCGA-BRCA) cohort \citep{tcga}, obtained from the UCSC Xena platform \citep{Goldman2020}. Three molecular modalities were available: bulk RNA-seq gene expression (RSEM log$_2$-normalized; approximately 20{,}530 genes), copy number variation (CNV; GISTIC2 continuous log-ratio scores across approximately 24{,}776 genes), and DNA methylation measured on an Illumina 450k array (approximately 485{,}577 CpG sites). Survival data consisted of overall survival (OS) event indicators and follow-up times in days. Restricting to primary tumor samples and intersecting patients across all three modalities yielded 772 aligned patients. After removing patients with missing or non-positive follow-up times, 769 patients remained (101 OS events). Following the pre-processing steps in Section~\ref{sec:prproc_tcga}, we obtained two high-dimensional feature matrices $\bm X_1 \in \mathbb{R}^{769 \times 2000}$ (RNA-seq) and $\bm X_2 \in \mathbb{R}^{769 \times 5000}$ (M-transformed methylation intensities for 5000 highly variable CpG sites), along with a low-dimensional feature matrix $\wt{\bm X} \in \R^{769 \times 5}$ derived from the CNV matrix.\footnote{Although the original CNV matrix is high-dimensional, only a transformed low-dimensional component (discussed in Section~\ref{sec:prproc_tcga}) satisfies the normality assumption.} A stratified split produced a training set of 615 patients (81 events) and a test set of 154 patients (20 events).

The prediction task is overall survival: using all three modalities, we pre-train a risk score on the training patients and predict survival risk for held-out test patients. In Algorithm~\ref{alg:daif_compact}, we use $r_{\text{RNA}} = 10$ and $r_{\text{Meth}} = 8$ ranks for the high-dimensional modalities. A linear Cox proportional hazards model is then fitted on the rescaled AMP embeddings $\wh{\bm U}^\rsc_1 \in \R^{615 \times 10}$ (RNA modality), $\wh{\bm U}^\rsc_2 \in \R^{615 \times 8}$ (DNA methylation modality) (cf. \eqref{eq:def_proj_ols_cons}), and $\wt{\bm X}\in \R^{615 \times 5}$ after accounting for the Gaussian noise in the features. Let $\{(T_i,\delta_i):i \in \mathcal I_{\text{train}}\}$ denote the observed survival times and survival event indicators for the training patients, and define the risk set at time $T_i$ by $\mathcal R(T_i)=\{j\in\mathcal I_{\text{train}}:T_j\geq T_i\}$. Let $D=\sum_{i\in\mathcal I_{\text{train}}}\delta_i$ be the total number of training events. If $U_i=((\bm U_1)_{i*},(\bm U_2)_{i*},(\wt{\bm U}_3)_{i*})\in\R^{23}$ denote the true latents and $\wh{U}_i=((\wh{\bm U}^\rsc_1)_{i*},(\wh{\bm U}^\rsc_2)_{i*},(\wt{\bm X}_3)_{i*})\in\R^{23}$ denote the estimated latents, then we optimize the following surrogate of the Cox profile likelihood in the error-in-variables setting determined by \eqref{eq:def_proj_ols}:
\begin{align}
\label{eq:cox_prop_hazard}
\mathfrak L(\theta)=\mathbb{E}_{\wh \mu}\left[\frac{1}{D}\sum_{\substack{i\in\mathcal I_{\text{train}}\\\delta_i=1}}
\left(\theta^\top U_{i}-\log\sum_{j\in\mathcal R(T_i)}\exp\!\left(\theta^\top U_{j}\right)\right)\;\Bigg|\;\wh{\bm U}\right],
\end{align}
where $\wh{\bm U}:=(\wh{U}^\top_1,\ldots,\wh{ U}^\top_n)^\top\in\R^{615\times23}$ and $\wh\mu$ is the clustered empirical Bayes prior used in the AMP iterations \eqref{eq:denoise_cols}-\eqref{eq:rough_denoise_rows}. The optimization yields the estimated weight vector $\wh\theta$. At test time, the predicted log-risk for patient $i\in\mathcal I_{\text{test}}$ is
\(
\wh{\mathrm R}_i=\wh\theta^\top\mathbb E_{\wh\mu}[U^\test_i\mid\widehat U^{\test}_i],
\)
where $\widehat U^{\test}_i=((\widehat{\bm U}^{\test,\ols}_1)_{i*},(\widehat{\bm U}^{\test,\ols}_2)_{i*},\wt{\bm X}^{\test}_{i*}) \in \R^{23}$ and $U_i=((\bm U^\test_1)_{i*},(\bm U^\test_2)_{i*},\wt{\bm U}^\test_{i*}) \in \R^{23}$. Prediction accuracy is evaluated using the concordance index
$$
\text{Test C-index}=
\left(\sum_{\substack{i,j\in\mathcal I_{\text{test}}\\T_i<T_j,\;\delta_i=1}}
\mathbf 1\!\left(\wh{\mathrm R}_i>\wh{\mathrm R}_j\right)\right)
\left(\sum_{\substack{i,j\in\mathcal I_{\text{test}}\\T_i<T_j,\;\delta_i=1}}1\right)^{-1},
$$
which ranges from $0$ to $1$ (perfect alignment) with $0.5$ corresponding to random alignment. An analogous measure \emph{Train C-index} can be defined for the training data.

We compare \fancyname{} against MOFA+ \citep{Argelaguet2020} and Multigrate \citep{LitinetskayaPoEVAE2022}. For the baseline methods, embeddings are first constructed through data integration, followed by fitting a linear Cox proportional hazards model (minimizing \eqref{eq:cox_prop_hazard} without the conditional expectation correction) to estimate $\theta$. This estimator is then used to predict risk on the held-out test set. For \fancyname{}, Gap statistic selected $K=2$, indicating the need for intermediate fusion. Again $K=3$ was assessed separately since the gap statistic based selection cannot cover this boundary case. We used 10 AMP iterations and the same BIC-based GMM component selection procedure as in Section~\ref{sec:tea_seq}.

\begin{table}[t]
\centering
\scriptsize
\begin{tabular}{lrrrrr}
\toprule
 & \fancyname{}-CKA (K=1) & \fancyname{}-CKA (K=2) & \fancyname{}-CKA (K=3) & MOFA+ & Multigrate \\
\midrule
Train C-index & 0.7389 & 0.7397 & 0.7397 & 0.7458 & 0.7585 \\
Test C-index & 0.7367 & 0.7359 & \textbf{0.7375} & 0.7335 & 0.6313 \\
\bottomrule
\end{tabular}
\caption{TCGA-BRCA overall survival prediction (test C-index). Higher is better; bold indicates the best test C-index. \fancyname{}-CKA for $K=1$ correspond to OrchAMP \citep{nandy2024multimodal} and \fancyname{}-CKA for $K=3$ corresponds to EB-PCA \citep{zhong2022empirical}. \fancyname{}-CKA for $K=2$ uses the gap statistic to select the number of clusters, which resulted in choosing $K=2$.}
\label{tab:tcga_brca_survival}
\end{table}

The results are summarized in Table~\ref{tab:tcga_brca_survival}. The best test C-index ($0.738$) is attained by \fancyname{} with $K=3$. An interesting feature of this application is that, although the gap statistic favors $K=2$ over $K=1$, and this preference is also reflected in the training C-index, which is maximized by \fancyname{} with $K=2$ among the AMP-based implementations, the test performance is better for $K=1$ than for $K=2$. We believe that this discrepancy is primarily due to the limited training sample size, which may prevent the trained model from generalizing reliably to the test data. However, all AMP-based pipelines outperform MOFA+ and Multigrate, showing that \fancyname{} extends successfully from linear regression to survival prediction. While MOFA+ remains competitive (test C-index $0.734$), Multigrate exhibits substantial overfitting in this train-test split: despite achieving the highest training C-index ($0.759$), its test C-index drops to $0.631$, highlighting the instability of deep generative approaches in this small-sample, high-censoring regime. 

\section{Theoretical Properties of DAIF}
\label{sec:theoretical}
In this section, we present the theoretical results underlying the DAIF algorithm. Specifically, the results established here provide rigorous justification for the various components of the algorithm introduced in Section~\ref{sec:methodology}.

We begin by establishing the theoretical guarantees for the clustered empirical Bayes estimator described in Section~\ref{sec:clus_emp_Bayes}. The principal technical contribution is Theorem~\ref{thm:consistency_prior_main}, which shows that the cluster based prior estimation procedure in Phases~2 and~3 of Algorithm~\ref{alg:daif_compact} yields an estimator $\wh{\mu}$ that converges weakly to the true prior $\mu$, under suitable assumptions on its components $\mu_1,\ldots,\mu_K$. 

We next turn to feature extraction via AMP, as described in Section~\ref{sec:feature_extract}. Building upon the oracle-equivalence result above, Theorem~\ref{thm:w_2_amp} establishes the asymptotic characterization of the iterates
\[
\{\wh{\bm F}_{t,h},\wh{\bm G}_{t,h}: t\geq 0,\ h\in[m]\}.
\]
Once consistency of the data adaptively estimated prior has been established, the remaining analysis follows by suitably extending the multimodal empirical Bayes AMP framework developed in \cite{nandy2024multimodal}. We then leverage these asymptotic characterizations in Theorem~\ref{thm:bayes_optimality} to show that the AMP iterates achieve Bayes-optimal estimation of the signals.
The proof of this theorem likewise follows by adapting the techniques developed in Theorem~5.3 of \cite{nandy2024multimodal}. 

Finally, we consider the downstream prediction problem based on the AMP iterates, as described in Section~\ref{sec:ols_pretraining}. In particular, Theorem~\ref{thm:ols_estimator} provides a rigorous justification for the approximation in \eqref{eq:def_proj_ols}, thereby establishing the theoretical justification behind the construction of the loss $\mathfrak L_\theta$ in \eqref{eq:def_frak_l}.

\subsection{Consistency of the clustered empirical Bayes prior}

We start by noting that consistency results for empirical Bayes were established in Lemma~B.2 of \cite{zhong2022empirical} for the special case $K=m,\ \wt m=0$, and in Lemma~5.2 of \cite{nandy2024multimodal} for the single-cluster setting $K=1$. Our setting is substantially more challenging because neither the latent clustering structure nor the corresponding factorization of the prior is known a priori. Instead, both must be inferred from the same data that are subsequently used to estimate the component distributions. Theorem~\ref{thm:consistency_prior_main} shows that this fully data adaptive procedure, including both estimation of the latent clusters and their incorporation into the maximum-likelihood estimator, does not introduce any additional first-order asymptotic error in the estimated prior. Consequently, the proposed estimator achieves the same first-order asymptotic behavior as an oracle estimator that has access to the true clustering structure. Let us adopt the following assumption.
\begin{assumption}[Initialization assumptions]
\label{assu:init}
The following conditions hold.
\begin{enumerate}
  \item $(U_1,\ldots,U_m,\wt U_1,\ldots,\wt U_{\wt m}) \sim \mu$,
    \(
    \mathbb E_\mu[U_hU_h^\top]=\bm I_{r_h},\) 
    \(
    \mathbb E_\mu[\wt U_\ell \wt U_\ell^\top]=\bm I_{\wt r_\ell},
    \)
    for all $h \in [m]$ and $\ell \in [\wt m]$.
    \item  $V_h \sim \nu_h$ and 
    \(
    \mathbb E_{\nu_h}[V_hV_h^\top]=\bm I_{r_h},\)
    for $h \in [m]$.
    \item The collection of the latent factors $\{\bm U_1,\ldots,\bm U_m,\wt{\bm U}_1,\ldots,\wt{\bdU}_{\wt m}\}$ 
is independent of the loading matrices $\{\bdV_1,\ldots,\bdV_m\}.$
\item For all $h \in [m]$ and $k \in [r_h]$, the signal strengths satisfy
\(
(\bm D_h)_{kk} > \gamma_h^{-1/4}.
\)
\item For the theoretical analysis, we adopt the population-alignment convention that, for every $h\in[m]$ and $j\in[r_h]$, the signs of the empirical singular vectors are chosen such that
\(
(\bm U_h^{\mathrm{pc}})^\top_{* j}(\bm U_h)_{* j} \ge 0
\)
and
\(
(\bm V_h^{\mathrm{pc}})^\top_{*j}(\bm V_h)_{*j}\ge 0.
\)
\end{enumerate}
\end{assumption}

First, we focus on consistency of the estimated prior $\wh \mu$ obtained from \eqref{eq:lik_cluster}.  We use the  clusters $\wh{\mathcal C}_1,\ldots,\wh{\mathcal C}_K$ estimated from data using the dissimilarity matrix defined in \eqref{eq:dissimilarity_matrix}.
\begin{thm}
\label{thm:consistency_prior_main}
Assume that the true prior $\mu$ in \eqref{eq:prior_specification}
satisfies the cluster-wise product structure in
\eqref{eq:cluster_wise_product}. Let
$\mathcal C_1,\ldots,\mathcal C_K$ denote the true modality
clusters. Suppose that the estimated clusters
$\widehat{\mathcal C}_1,\ldots,\widehat{\mathcal C}_K$ are obtained by
hierarchical clustering with the dissimilarity matrix $\mathfrak N$ defined
in \eqref{eq:dissimilarity_matrix}. Suppose that there exist deterministic population affinity scores $\{s_{ee'}:e,e'\in\mathcal E\}$ such that
\begin{align}
\inf_{k\in[K]}\inf_{\substack{e,e'\in\mathcal C_k\\e\neq e'}}s_{ee'}\geq\Delta_{\mathrm{sep}},\quad \mbox{and} \quad
\sup_{\substack{k\neq k'\\e\in\mathcal C_k,\,
e'\in\mathcal C_{k'}}}s_{ee'}=0,
\label{eq:cluster_separation_condition}
\end{align}
and that the empirical affinities used in
\eqref{eq:dissimilarity_matrix} satisfy $\max_{e,e'\in\mathcal E}\left|\widehat s_{ee'}-s_{ee'}\right|\overset{\mathrm{a.s.}}{\longrightarrow}0.$
Then the maximum likelihood estimator $\widehat\mu$ obtained from
\eqref{eq:lik_cluster} is weakly consistent; that is, $\widehat\mu \overset{w}{\rightarrow} \mu,$ almost surely, as $n \to \infty$. Furthermore, for all $h \in [m]$, the estimated priors $\wh \nu_h\overset{w}{\longrightarrow} \nu_h$ as $p_h \rightarrow \infty$.
\end{thm}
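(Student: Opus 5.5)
Our plan is to reduce the statement to the oracle case in which the true partition $\mathcal C_1,\ldots,\mathcal C_K$ is known. Once that reduction is in place, we invoke the existing single-cluster consistency results cluster by cluster.

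\textbf{Step 1: exact cluster recovery eventually.} By \eqref{eq:cluster_separation_condition}, the population dissimilarities $1-s_{ee'}$ are at most $1-\Delta_{\mathrm{sep}}$ within a true cluster and equal to $1$ across clusters. Let $\varepsilon<\Delta_{\mathrm{sep}}/4$. On the almost-sure event $\max_{e,e'}|\widehat s_{ee'}-s_{ee'}|\to 0$, there is a random $n_0$ such that for $n\ge n_0$ every within-cluster entry of $\mathfrak N$ is at most $1-\Delta_{\mathrm{sep}}+\varepsilon$ and every between-cluster entry is at least $1-\varepsilon$. UPGMA average linkage merges the pair of current groups with the smallest average dissimilarity. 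The average of within-cluster entries stays at most $1-\Delta_{\mathrm{sep}}+\varepsilon$, and any merge across true clusters has average at least $1-\varepsilon$. So by induction over the merge steps, every merge before level $K$ occurs inside a true cluster, and cutting the dendrogram at $K$ groups (or at any threshold in $(1-\Delta_{\mathrm{sep}}+\varepsilon,1-\varepsilon)$) returns exactly $\{\mathcal C_k\}$ up to relabeling. Hence, almost surely, $\widehat{\mathcal C}_k=\mathcal C_{\pi(k)}$ for all large $n$, for some permutation $\pi$.

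\textbf{Step 2: oracle consistency per cluster.} For $n\ge n_0$, $\widehat\mu_k$ coincides with the NPMLE computed on the true cluster $\mathcal C_k$, using $\{\bm U^{\mathrm{pc}}_h:h\in\mathcal C_{k,\mathrm H}\}$ and $\{\wt{\bm X}_\ell:\ell\in\mathcal C_{k,\mathrm L}\}$. This is exactly the setting of Lemma~5.2 of \cite{nandy2024multimodal}, with one cluster and the modalities restricted to $\mathcal C_k$. The inputs of that lemma are the approximate Gaussian representation \eqref{eq:approx_singular_vectors} (Proposition~\ref{prop:singular_vect_approx}), the consistency of the plug-in $\widehat{\bm M}^L_h,\widehat{\bm\Sigma}^L_h,\widehat{\bm L}_\ell$, and the standard model \eqref{eq:multimodal_low_dim_model} for the low-dimensional blocks. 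Applying it gives $\widehat\mu_k\overset{w}{\to}\mu_k$ almost surely.

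The one subtlety is that the random index $n_0$ depends on the data. We handle this by working on the probability-one intersection of the event in Step 1 and the events from Lemma~5.2 for each of the finitely many subsets of $\mathcal E$. On this intersection, the oracle estimators computed for every fixed candidate partition converge simultaneously. Since $\widehat\mu_k$ eventually equals the oracle estimator for the true partition, the random selection does not affect convergence.

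\textbf{Step 3: products and right priors.} The marginals $\widehat\mu_k\overset{w}{\to}\mu_k$ converge for finitely many $k$. Weak convergence of product measures then follows, for example by convergence of characteristic functions of tensorized test functions $\prod_k \phi_k$, giving $\widehat\mu=\bigotimes_k\widehat\mu_k\overset{w}{\to}\bigotimes_k\mu_k=\mu$. The claim $\widehat\nu_h\overset{w}{\to}\nu_h$ is the single-modality statement of Lemma~B.2 in \cite{zhong2022empirical}, applied to $\bm V^{\mathrm{pc}}_h$ with the right-side representation in \eqref{eq:approx_singular_vectors}. It does not interact with the clustering at all.

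The main obstacle is Step 2: one must verify that Lemma~5.2 of \cite{nandy2024multimodal} indeed covers mixed clusters containing low-dimensional modalities with estimated $\widehat{\bm L}_\ell$. This amounts to checking uniform (in $\mathfrak m_k$) convergence of the normalized log-likelihood under the approximate Gaussian representation, with plug-in noise parameters. We would first try to absorb $\wt{\bm X}_\ell$ as an extra Gaussian channel with covariance $\bm I$ and mean $\bm L_\ell\wt U_\ell$, and then use the same Kiefer–Wolfowitz-type identifiability argument. Step 1 is comparatively routine.
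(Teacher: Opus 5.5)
Your proposal is correct and takes essentially the same route as the paper. It proves eventual exact recovery of the true partition by induction over the average-linkage merges (the paper's Theorem~\ref{thm:consistent_cluster}), establishes per-cluster consistency of the likelihood maximizer on the recovered clusters, handles the right priors separately, and concludes with weak convergence of the product measure. The only differences are that the paper cites Corollary~B.7 and Lemma~B.2 of \cite{zhong2022empirical} directly rather than Lemma~5.2 of \cite{nandy2024multimodal}, and your treatment of the data-dependent $n_0$ via a finite intersection of almost-sure events is more explicit than the paper's.
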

The foregoing theorem demonstrates that if the measure of independence is able to separate the related and unrelated modalities sufficiently well, then empirical Bayes using the estimated clusters does not lead to additional approximation error in the ``large $n$" asymptotics.

\subsection{Asymptotic behaviour and optimality of the AMP algorithm}

Next, we focus on the performance of data integration and denoising using the AMP-based pipeline. We assume the following regularity conditions on the priors $\mu$ and $\{\nu_h:h \in [m]\}$.
\begin{assumption}[Regularity assumptions]
\label{assu:reg}
Let $\mu \in \mathcal{P}$ and $\nu_h \in \mathcal{P}_h$ for all $h \in [m]$. We assume that these parameter spaces satisfy the following regularity properties. 
\begin{enumerate}
  \item For any collection of positive definite matrices 
$\{A_h,B_h\in\R^{r_h\times r_h}:h\in[m]\}$ and 
$\{A_\ell\in\R^{\wt r_\ell\times \wt r_\ell}:\ell\in[\wt m]\}$ and 
$\mu\in\mathcal P$, there exists an open neighborhood 
$O_\mu$ of $\mu$ in the topology of weak convergence, such that for $(U_1,\ldots,U_m,\wt U_1,\ldots,\wt U_{\wt m})\sim\mu$, $Y_h^G\sim\dnorm_{r_h}(A_hU_h,B_h),$ and $\wt Y_\ell^G\sim\dnorm_{\wt r_\ell}(A_\ell\wt U_\ell,I_{\wt r_\ell}),$
the functions 
\begin{align}
u_h^G(x_1,\ldots,x_m,\wt x_1,\ldots,\wt x_{\wt m};\mu)
&=\E_\mu\big[U_h\mid Y_1^G=x_1,\ldots,Y_m^G=x_m,\wt Y_1^G=\wt x_1,\ldots,\wt Y_{\wt m}^G=\wt x_{\wt m}\big],\\
\wt u_\ell^G(x_1,\ldots,x_m,\wt x_1,\ldots,\wt x_{\wt m};\mu)
&=\E_\mu\big[\wt U_\ell\mid Y_1^G=x_1,\ldots,Y_m^G=x_m,\wt Y_1^G=\wt x_1,\ldots,\wt Y_{\wt m}^G=\wt x_{\wt m}\big],
\end{align}
are uniformly Lipschitz over the neighborhood $O_\mu$ for all $h\in[m]$ and $\ell\in[\wt m]$.
\item For any positive definite matrices 
$A_h^R,B_h^R\in\R^{r_h\times r_h}$ and 
$\nu_h\in\mathcal P_{\nu_h}$, $h\in[m]$, there exists an open neighborhood 
$O_{\nu_h}$ of $\nu_h$ in the topology of weak convergence, such that for $V_h\sim\nu_h$,  $Y_{h,R}^G\sim\dnorm_{r_h}(A_h^RV_h,B_h^R),$
the function 
$
v_h^G(x;\nu_h)=\E_{\nu_h}[V_h\mid Y_{h,R}^G=x]
$
is uniformly Lipschitz over the neighborhood $O_{\nu_h}$ for all $h\in[m]$.
\end{enumerate}
\end{assumption}

These assumptions on the latent generating priors are mild and are satisfied by many standard priors used in empirical Bayes modeling. We provide some examples of common classes of prior in Appendix~\ref{sec:examples_regular_priors} satisfying the foregoing assumption. These regularity assumptions ensure that the estimated denoisers and Onsager corrections are consistent for the oracle denoisers constructed with the true data-generating priors. This ensures empirical Bayes estimation of the priors does not induce any additional error in the asymptotic limit. Both sets of Assumptions are commonly adopted in the literature (cf. Assumption 5.1~\citep{zhong2022empirical} and Assumption 5.1~\citep{nandy2024multimodal}). Recall the definition of pseudo-Lipschitz functions from (5.1) of \cite{nandy2024multimodal} \footnote{For the sake of completeness, we repeat the definition in \eqref{eq:pseudo_lips}.}. Then under the aforementioned assumptions, we have the following theorem. 

\begin{thm}
    \label{thm:w_2_amp}
Define the alignment matrices $\{(\bm M_{t,h}^\star,\bm\Sigma_{t,h}^\star):h\in[m],\star\in\{L,R\}, t \ge 1\}$ as follows:
\begin{align}
\label{eq:define_alignement_matrices}
\bm \Sigma^{L}_{t,h}
&= \gamma_h\mathbb E_{\nu_h}\left[\mathbb E_{\nu_h}\left[V_h \mid V_{t,h}\right]^{\otimes 2}\right], \quad \bm M^{L}_{t,h}= \bm \Sigma^{L}_{t,h}\bm D_h,\\
\bm \Sigma^{R}_{t+1,h}
&= \mathbb E_{\mu}\left[\mathbb E_{\mu}\left[U_h \mid \{U_{t,h}:h \in [m]\},\{\wt X_\ell:\ell \in [\wt m]\}\right]^{\otimes 2}\right]
\quad \mbox{and} \quad 
\bm M^{R}_{t+1,h}= \bm \Sigma^{R}_{t+1,h}\bm D_h, 
\end{align}
where $U_{t,h}=\bm M^L_{t,h}U_h+(\bm\Sigma^L_{t,h})^{1/2}Z_{t,L,h}$, $V_{t,h}=\bm M^R_{t,h}V_h+(\bm\Sigma^R_{t,h})^{1/2}Z_{t,R,h}$, and $\wt X_\ell=\bm L_\ell \wt U_\ell+\wt Z_\ell.$
Here $(U_1,\ldots,U_m,\wt U_1,\ldots,\wt U_{\wt m})\sim\mu$, $V_h\sim\nu_h$, and the noise vectors
$\{Z_{t,L,h}: h \in [m]\}$,
$\{Z_{t,R,h}: h \in [m]\}$, and
$\{\wt Z_\ell: \ell\in[\wt m]\}$
have independent standard normal entries and are mutually independent of the latent variables for each $t \ge 1$. Then, under Assumption~\ref{assu:reg} and the conditions of Theorem~\ref{thm:consistency_prior_main}, for any pseudo-Lipschitz functions $\varphi:\R^r \rightarrow \R $ and $\{\vartheta_h:\R^{r_h} \rightarrow \R\}$, as $n,p_h \to \infty$, for all $t \ge 1$, we almost surely have
\begin{align}
&\lim_{n \rightarrow \infty}\frac1n\sum_{i=1}^n\varphi\!\left((\wh{\bm F}_{t,1})_{i*},\ldots,(\wh{\bm F}_{t,m})_{i*},(\wt{\bm X}_1)_{i*},\ldots,(\wt{\bm X}_{\wt m})_{i*},(\bm U_1)_{i*},\ldots,(\bm U_m)_{i*},(\wt{\bm U}_1)_{i*},\ldots,(\wt{\bm U}_{\wt m})_{i*}\right)\notag\\
&\hskip 7em =\E_{\mu}\!\left[\varphi\!\left(U_{t,1},\ldots,U_{t,m},\wt X_1,\ldots,\wt X_{\wt m},U_1,\ldots,U_m,\wt U_1,\ldots,\wt U_{\wt m}\right)\right], \notag\\
&\lim_{n \rightarrow \infty}\frac1{p_h}\sum_{j=1}^{p_h}\vartheta_h\!\left((\wh{\bm G}_{t,h})_{j*},(\bm V_h)_{j*}\right)=\E_{\nu_h}\!\left[\vartheta_h(V_{t,h},V_h)\right],\qquad \mbox{for all $h \in [m]$,}
\end{align}
\end{thm}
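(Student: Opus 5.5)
The plan is to reduce Theorem~\ref{thm:w_2_amp} to the state evolution of an \emph{oracle} AMP that uses the true priors, the true clusters, and the population alignment matrices. We then show that the data-driven iterates stay uniformly close to the oracle ones in normalized Frobenius norm.

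\textbf{Step 1 (oracle clusters).} First we fix the clustering. Under \eqref{eq:cluster_separation_condition} and $\max_{e,e'}|\wh s_{ee'}-s_{ee'}|\to 0$ almost surely, the UPGMA average-linkage dissimilarities between true clusters converge to $1$. Within-cluster dissimilarities are eventually at most $1-\Delta_{\mathrm{sep}}/2$. Hence for all large $n$ we have $\wh{\mathcal C}_k=\mathcal C_k$ almost surely, up to relabeling. This is essentially the argument inside Theorem~\ref{thm:consistency_prior_main}, and we reuse it.

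\textbf{Step 2 (oracle AMP).} With the partition fixed, we define the oracle iteration. It replaces the estimated quantities by their population counterparts:
\begin{itemize}
\item $\wh\mu_k,\wh\nu_h$ by $\mu_k,\nu_h$;
\item $(\wh{\bm M}^{\star}_{t,h},\wh{\bm\Sigma}^{\star}_{t,h})$ by $(\bm M^\star_{t,h},\bm\Sigma^\star_{t,h})$ from \eqref{eq:define_alignement_matrices};
\item $\wh{\bm D}_h,\wh{\bm L}_\ell$ by $\bm D_h,\bm L_\ell$.
\end{itemize}
The oracle iteration uses PCA initialization. Because $\mu=\bigotimes_k\mu_k$, the posterior mean under $\mu$ of $U_h$, given all Gaussian observations, equals the posterior mean under $\mu_k$ given only those in $\mathcal C_k$. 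So the oracle denoisers coincide with the conditional expectations appearing in \eqref{eq:define_alignement_matrices}.

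This oracle iteration is an instance of the multimodal AMP with spectral initialization of \cite{nandy2024multimodal} (their Theorem~5.1). That result builds on \cite{montanari2021estimation} and \cite{zhong2022empirical}, using the initialization decomposition \eqref{eq:approx_singular_vectors} from Proposition~\ref{prop:singular_vect_approx}. The low-dimensional modalities $\wt{\bm X}_\ell$ enter only as side information, i.e.\ as extra arguments of the row denoisers, independent of the $\bm Z_h$. They can be absorbed by conditioning on $(\bm U,\wt{\bm U},\wt{\bm X})$ and treating $\wt{\bm X}_\ell$ rows as part of the i.i.d.\ latent vector. Its state evolution then gives exactly the stated limits for the oracle iterates $(\bm F^o_{t,h},\bm G^o_{t,h})$. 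The recursion is Bayes-optimal, so the Onsager and overlap matrices satisfy the Nishimori-type identities $\bm M=\bm\Sigma\bm D$ used in \eqref{eq:define_alignement_matrices}.

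\textbf{Step 3 (perturbation by induction on $t$).} Finally we compare the iterates. We show by induction on $t$ that
\[
n^{-1/2}\|\wh{\bm F}_{t,h}-\bm F^o_{t,h}\|_F\to0 \quad\text{and}\quad p_h^{-1/2}\|\wh{\bm G}_{t,h}-\bm G^o_{t,h}\|_F\to0
\]
almost surely. The ingredients are:
\begin{itemize}
\item $\wh\mu\to\mu$ and $\wh\nu_h\to\nu_h$ weakly, by Theorem~\ref{thm:consistency_prior_main};
\item $\wh{\bm D}_h\to\bm D_h$ and $\wh{\bm L}_\ell\to\bm L_\ell$, from spiked-model singular value asymptotics and the law of large numbers;
\item the estimated state-evolution matrices \eqref{eq:state_evol_data} converge to their population versions, by the induction hypothesis and state evolution applied to $\|\cdot\|^2$;
\item Assumption~\ref{assu:reg}, under which the denoisers are uniformly Lipschitz on a weak neighbourhood of the truth.
\end{itemize}
Uniform Lipschitzness plus pointwise convergence of posterior means, which follows from weak convergence of priors with Gaussian likelihoods, gives $n^{-1}\sum_i\|u(\cdot;\wh\mu)-u(\cdot;\mu)\|^2\to0$ on the empirical distribution. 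The same argument, via dominated convergence, controls the averaged Jacobians in the Onsager terms. We then bound $\|\bX_h\|_{\op}=O(1)$ almost surely and propagate the errors through the linear steps. Pseudo-Lipschitz test functions then transfer the limits from oracle to empirical iterates by Cauchy--Schwarz.

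The main obstacle is the Onsager Jacobians. Derivative convergence does not follow from weak convergence and Lipschitzness alone. We would first try the Gaussian-convolution smoothness: Tweedie's formula expresses $\nabla u$ via the posterior covariance, which is continuous in the prior under weak convergence with bounded second moments. This is the step adapted from Lemma~5.2 and Theorem~5.3 of \cite{nandy2024multimodal}.
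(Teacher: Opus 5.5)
Your proposal is correct and follows essentially the same route as the paper. Cluster recovery and weak consistency of $\wh\mu,\wh\nu_h$ (Theorem~\ref{thm:consistency_prior_main}), consistency of the nuisance parameters, and Assumption~\ref{assu:reg} give convergence of the empirical-Bayes denoisers and averaged Onsager Jacobians to their oracle counterparts; the iterates are therefore asymptotically equivalent, in normalized Frobenius norm, to the oracle AMP whose state evolution is Theorem~5.1 of \cite{nandy2024multimodal}, and the only difference is that the paper cites Lemmas~G.2 and~G.3 of \cite{nandy2024multimodal} for the denoiser and Jacobian convergence you sketch directly via the posterior-covariance formula of Lemma~\ref{lem:posterior_mean_derivative}.
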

The important takeaway from Theorem~\ref{thm:w_2_amp} is that the consistency of $\wh \mu$ and $\wh \nu_h$ are enough to ensure that the double use of data arising from using the same training data for prior estimation and AMP iterations do not induce any additional bias in the asymptotic behavior of the iterates. The asymptotic properties of the empirical Bayes AMP iterates \eqref{eq:denoise_cols}-\eqref{eq:rough_denoise_rows} are same as the iterates constructed with oracle knowledge of the priors. This property is driven by Assumption~\ref{assu:reg}.

Now, proceeding as in the proof of Theorem 5.2 and 5.3 of \cite{nandy2024multimodal} we can show the following theorem.
\begin{thm}
    \label{thm:bayes_optimality}
    Suppose the true priors $\mu$ and $\nu_h$ for all $h \in [m]$ satisfy the fixed point equations in (5.13) of \cite{nandy2024multimodal}. Then, if the conditions of Theorem~\ref{thm:w_2_amp} hold, we have the following.
    \begin{align}
        &\lim_{t \rightarrow \infty}\lim_{n \rightarrow \infty}\frac{1}{np_h}\mathbb E\left[\|\bdU_h\bm D_h\bdV^\top_h-\wh{\bm U}_{t,h}\bm D_h\wh{\bm V}^\top_{t,h}\|^2_F\right]\\
        &=\lim_{n \rightarrow \infty}\frac{1}{np_h}\mathbb E\left[\left\|\bdU_h\bm D_h\bdV^\top_h-\mathbb E\left[\bdU_h\bm D_h\bdV^\top_h \mid \bm X_1,\ldots,\bm X_m,\wt{\bm X}_1,\ldots,\wt{\bm X}_{\wt m}\right]\right\|^2_F\right].
    \end{align}
\end{thm}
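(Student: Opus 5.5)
The proof compares two quantities: the limiting AMP error, computed from the state evolution in Theorem~\ref{thm:w_2_amp}, and the Bayes risk, computed from an I-MMSE/interpolation argument. The fixed point condition (5.13) of \cite{nandy2024multimodal} is what makes them match.

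\textbf{Step 1: AMP side.} Expand the Frobenius error. With the normalizations $\bm U_h^\top\bm U_h\approx n\bm I_{r_h}$ and $\bm V_h^\top \bm V_h\approx p_h\bm I_{r_h}$, the quantity
\[
\frac{1}{np_h}\|\bm U_h\bm D_h\bm V_h^\top-\wh{\bm U}_{t,h}\bm D_h\wh{\bm V}_{t,h}^\top\|_F^2
\]
reduces to traces of products of the $r_h\times r_h$ Gram matrices
\[
n^{-1}\bm U_h^\top\wh{\bm U}_{t,h},\quad n^{-1}\wh{\bm U}_{t,h}^\top\wh{\bm U}_{t,h},\quad p_h^{-1}\bm V_h^\top\wh{\bm V}_{t,h},\quad p_h^{-1}\wh{\bm V}_{t,h}^\top\wh{\bm V}_{t,h}.
\]
Each Gram matrix is an empirical average of a pseudo-Lipschitz function of the rows $(\wh{\bm F}_{t,\cdot},\wt{\bm X},\bm U)$ or $(\wh{\bm G}_{t,h},\bm V_h)$, composed with the denoisers $u_{t,h,k}$ and $v_{t,h}$. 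Assumption~\ref{assu:reg} makes these denoisers uniformly Lipschitz near the true priors. Theorem~\ref{thm:consistency_prior_main} gives $\wh\mu\to\mu$ and $\wh\nu_h\to\nu_h$. Hence the estimated denoisers can be replaced by the oracle ones up to $o(1)$ errors, and Theorem~\ref{thm:w_2_amp} gives almost sure limits. Using the Nishimori-type identity $\E[\E[U\mid Y]U^\top]=\E[\E[U\mid Y]^{\otimes2}]$, the limits are:
\begin{itemize}
\item $n^{-1}\bm U_h^\top\wh{\bm U}_{t,h}$ and $n^{-1}\wh{\bm U}_{t,h}^\top\wh{\bm U}_{t,h}$ both converge to $\bm\Sigma^R_{t+1,h}$;
\item $p_h^{-1}\bm V_h^\top\wh{\bm V}_{t,h}$ and $p_h^{-1}\wh{\bm V}_{t,h}^\top\wh{\bm V}_{t,h}$ both converge to $\gamma_h^{-1}\bm\Sigma^L_{t,h}$.
\end{itemize}
The limiting error is therefore an explicit function of these matrices and $\bm D_h$. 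Uniform integrability (bounded second moments of the denoiser outputs) upgrades the almost sure limit to the expectation.

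\textbf{Step 2: $t\to\infty$.} I will show that the state evolution maps in \eqref{eq:define_alignement_matrices} are monotone in the Loewner order. More information in the Gaussian channel increases $\E[\E[U\mid Y]^{\otimes2}]$. Starting from the spectral initialization, the iterates are therefore monotone and bounded by $\bm I$, so they converge to a fixed point $(\bm\Sigma^L_{\infty,h},\bm\Sigma^R_{\infty,h})$. Under the hypothesis, this is the fixed point singled out in (5.13) of \cite{nandy2024multimodal}.

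\textbf{Step 3: Bayes side (main obstacle).} Following the proof of Theorem~5.3 in \cite{nandy2024multimodal}, I will show that the limiting Bayes MMSE equals the same function evaluated at that fixed point. Concretely:
\begin{itemize}
\item Write the multimatrix spiked model as a Gaussian channel.
\item Use the adaptive interpolation or I-MMSE argument to show that the limiting mutual information equals the replica-symmetric potential. Its stationary points are the state evolution fixed points.
\item By I-MMSE, the matrix MMSE is the derivative of the mutual information with respect to the SNR $\bm D_h$. This derivative equals $\mathrm{tr}(\bm D_h^2)$ minus the overlap term evaluated at the maximizing fixed point.
\end{itemize}
The delicate part is extending that argument in two directions: to block-structured priors with the clustered factorization, and to the additional low-dimensional channels $\wt{\bm X}_\ell$. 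These enter only as fixed-SNR Gaussian side information, so I expect the extension to be routine. I would treat them by conditioning inside the potential.

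Combining Steps 1–3 gives the claimed equality.
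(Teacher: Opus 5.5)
Your plan follows the paper's route: the AMP error comes from the state evolution of Theorem~\ref{thm:w_2_amp}, with the Nishimori identity giving the Gram-matrix limits, and the Bayes side is imported from the argument behind Theorem~5.3 of \cite{nandy2024multimodal}. The step you flag as delicate needs no new interpolation argument, at least for the clustered prior. Since $\mu=\bigotimes_k\mu_k$ and the loadings and noise are independent, the data in distinct clusters are independent. Hence both the Bayes risk for modality $h$ and, once $\wh{\mathcal C}_k=\mathcal C_k$, the iterates for $h$ reduce exactly to the single-cluster OrchAMP problem on the cluster containing $h$, where Theorem~5.3 applies directly.
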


\begin{proof}
    The proof of the above theorem follows by using Theorem~\ref{thm:w_2_amp} and the techniques outlined in Theorem~5.3 of \cite{nandy2024multimodal}.
\end{proof}

\subsection{OLS pre-training for downstream prediction}

Finally, we formalize the normal approximation to the rescaled training embeddings $\{\wh{\bm U}^\rsc_h:h \in [m]\}$ and the OLS projected test embeddings $\{\wh{U}^{\test,\ols}_h:h \in [m]\}$ defined in Section~\ref{sec:ols_pretraining}. 

\begin{thm}
\label{thm:ols_estimator}
Consider the rescaled training features $\{\wh{\bm U}^\rsc_h:h \in [m]\}$ defined in \eqref{eq:def_proj_ols_cons}. Assume that $\bm\Sigma_{T,h}^{L}\succ0$ and $\bm D_h\succ0$ for all $h \in [m]$. For any pseudo-Lipschitz function $\psi:\R^r \to \R$, as $n,p_h \to \infty$, we almost surely have
\begin{align}
&\lim_{n \rightarrow \infty}\frac{1}{n}\sum_{i=1}^n\psi\left((\wh{\bm U}^\rsc_1)_{i*},\ldots,(\wh{\bm U}^\rsc_m)_{i*},
(\wt{\bm X}_1)_{i*},\ldots,(\wt{\bm X}_{\widetilde m})_{i*}\right) \notag\\
&=\mathbb E\left[\psi\left(U_1+(\bm D^{-1}_1(\bm \Sigma^L_{T,1})^{-1}\bm D^{-1}_1)^{1/2}Z_{T,1}^{\rsc},\ldots,U_m+(\bm D^{-1}_m(\bm \Sigma^L_{T,m})^{-1}\bm D^{-1}_m)^{1/2}Z_{T,m}^{\rsc},\widetilde X_1,\ldots,\widetilde X_{\widetilde m}\right)\right],
\label{eq:training_debiased_joint_limit}
\end{align}
where $Z_{T,h}^{\rsc}\overset{\mathrm{i.i.d}}{\sim} \dnorm_{r_h}(0,\bm I_{r_h})$ for all $h \in [m]$.
Furthermore, the test-time OLS embeddings $\{\wh U^{\test,\ols}_h:h \in [m]\}$ satisfy
\begin{align}
\label{eq:ols_test_proof}
\wh U^{\test,\ols}_h \xrightarrow{d} U^\test_h+(\bm D^{-1}_h(\bm\Sigma^L_{T,h})^{-1}\bm D^{-1}_h)^{1/2}Z^{\test,\ols}_h, \quad \mbox{where $Z^{\test,\ols}_h \overset{\mathrm{i.i.d}}{\sim} \dnorm_{r_h}(0,\bm I_{r_h})$, for all $h \in [m]$.}
\end{align} 
\end{thm}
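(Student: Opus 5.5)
For the training statement \eqref{eq:training_debiased_joint_limit}, we reduce to Theorem~\ref{thm:w_2_amp} through a continuous linear change of variables. By \eqref{eq:def_proj_ols_cons}, $\wh{\bm U}^{\rsc}_h=\wh{\bm F}_{T,h}\bm A_{n,h}$ with $\bm A_{n,h}:=(\wh{\bm\Sigma}^L_{T,h})^{-1}\wh{\bm D}_h^{-1}$. We first show that $\bm A_{n,h}\to\bm A_h:=(\bm\Sigma^L_{T,h})^{-1}\bm D_h^{-1}$ almost surely. For $\wh{\bm D}_h\to\bm D_h$, the top singular values of $\wb{\bm X}_h$ converge to the BBP limits, and \eqref{eq:approx_snr} inverts the spike map above the threshold $(\bm D_h)_{kk}>\gamma_h^{-1/4}$ (Assumption~\ref{assu:init}). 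For $\wh{\bm\Sigma}^L_{T,h}=n^{-1}\hV_{T,h}^\top\hV_{T,h}\to\bm\Sigma^L_{T,h}$, we apply the $\wh{\bm G}$-part of Theorem~\ref{thm:w_2_amp} with the pseudo-Lipschitz test function $\vartheta(g)=v_{T,h}(g)v_{T,h}(g)^\top$; the factor $p_h/n\to\gamma_h$ supplies the $\gamma_h$ in \eqref{eq:define_alignement_matrices}. The estimated denoiser uses $\wh\nu_h$, so we also need Theorem~\ref{thm:consistency_prior_main} together with the uniform Lipschitz property in Assumption~\ref{assu:reg} to replace $v_{T,h}(\cdot;\wh\nu_h)$ by the oracle denoiser at vanishing cost. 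Since $\bm\Sigma^L_{T,h}\succ0$, inversion is continuous at the limit.

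With this convergence in hand, set $\psi_n(f,\wt x):=\psi(f_1\bm A_{n,1},\dots,f_m\bm A_{n,m},\wt x)$. Because $\psi$ is pseudo-Lipschitz, we have $|\psi_n-\psi_\infty|\le C\max_h\|\bm A_{n,h}-\bm A_h\|(1+\|f\|^2+\|\wt x\|^2)$. The empirical second moments of $\hF_{T,h}$ stay bounded, by Theorem~\ref{thm:w_2_amp} applied to $\|\cdot\|^2$. Hence $\psi_n$ may be replaced by $\psi_\infty$, which is itself pseudo-Lipschitz, and Theorem~\ref{thm:w_2_amp} gives the limit $\E[\psi_\infty(U_{T,1},\dots,\wt X_{\wt m})]$.

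It remains to identify this limit. Using $U_{T,h}=\bm M^L_{T,h}U_h+(\bm\Sigma^L_{T,h})^{1/2}Z$ with $\bm M^L_{T,h}=\bm\Sigma^L_{T,h}\bm D_h$, we get $\bm A_h^\top U_{T,h}=U_h+\bm D_h^{-1}(\bm\Sigma^L_{T,h})^{-1/2}Z$. Here we use that $\bm\Sigma^L_{T,h}$ and $\bm D_h$ are symmetric; the transpose convention only matters for row vectors. The covariance of the noise term is $\bm D_h^{-1}(\bm\Sigma^L_{T,h})^{-1}\bm D_h^{-1}$, which is exactly the one claimed. The noises for different $h$ are independent of each other and of $(U,\wt X)$, as required.

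For the test statement \eqref{eq:ols_test_proof}, we condition on the training data, so that $\hV_{T,h}$ is independent of $(U^\test_h,Z^\test_h)$. Write $\wh{\bm R}=n^{-1}\hV_{T,h}\wh{\bm D}_h$. The estimator splits as
\[
\wh U^{\test,\ols}_h=(\wh{\bm R}^\top\wh{\bm R})^{-1}\wh{\bm R}^\top\bigl(n^{-1}\bm V_h\bm D_hU^\test_h\bigr)+n^{-1/2}(\wh{\bm R}^\top\wh{\bm R})^{-1}\wh{\bm R}^\top Z^\test_h .
\]
The second term is exactly Gaussian given the training data, with covariance $n^{-1}(\wh{\bm R}^\top\wh{\bm R})^{-1}=\wh{\bm D}_h^{-1}(n^{-1}\hV^\top\hV)^{-1}\wh{\bm D}_h^{-1}$, and this converges to the target covariance. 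For the first term, we need $n^{-1}\hV_{T,h}^\top\bm V_h\to\bm\Sigma^L_{T,h}$ so that the signal part reduces to $U^\test_h$.

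We expect this identity to be the main obstacle, since it is exactly what makes the training and test scalings match. To establish it, we would invoke the $\hG$-part of Theorem~\ref{thm:w_2_amp} with $\vartheta(g,v)=v_{T,h}(g)v^\top$. This gives the limit $\gamma_h\E[\E[V_h\mid V_{T,h}]V_h^\top]$, which equals $\gamma_h\E[\E[V_h\mid V_{T,h}]^{\otimes2}]=\bm\Sigma^L_{T,h}$ by the tower property, provided the denoiser is the Bayes posterior mean under the true $\nu_h$. Replacing $\wh\nu_h$ by $\nu_h$ again uses Theorem~\ref{thm:consistency_prior_main} and Assumption~\ref{assu:reg}. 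Combining the two terms, $\wh U^{\test,\ols}_h=U^\test_h+o(1)\cdot U^\test_h+\text{Gaussian noise}$ with covariance converging to the target, and Slutsky's lemma gives the convergence in distribution.
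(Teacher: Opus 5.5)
Your proposal is correct. For the training embeddings it is essentially the paper's proof; for the test embeddings it gives a self-contained argument where the paper only cites an external result.

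\textbf{Training embeddings.} You follow the same steps as the paper:
\begin{enumerate}
\item show $(\wh{\bm\Sigma}^L_{T,h})^{-1}\wh{\bm D}_h^{-1}\to\bm A_h$ almost surely;
\item replace the random normalization by $\bm A_h$, using the pseudo-Lipschitz bound and the bounded empirical second moments of $\wh{\bm F}_{T,h}$;
\item apply Theorem~\ref{thm:w_2_amp} to the fixed transformed test function;
\item read off the law from $(\bm M^L_{T,h})^\top\bm A_h=\bm I_{r_h}$.
\end{enumerate}
At one point you are more careful than the paper. The paper obtains $\wh{\bm\Sigma}^L_{T,h}\to\bm\Sigma^L_{T,h}$ by ``applying Theorem~\ref{thm:w_2_amp} with $x\mapsto x^2$''. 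That theorem, however, describes $\wh{\bm G}_{T,h}$, not $\hV_{T,h}=v_{T,h}(\wh{\bm G}_{T,h};\wh\nu_h)$, so a swap to the oracle denoiser is needed, as you do. In that swap, note that the plug-in inputs $\wh{\bm M}^R_{T,h}$ and $\wh{\bm S}^R_{T,h}$ inside $v_{T,h}$ must also converge, not only $\wh\nu_h$. This follows from the same Lemmas~G.2--G.3 of \cite{nandy2024multimodal} that underlie Theorem~\ref{thm:w_2_amp}.

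\textbf{Test embeddings.} Here the routes genuinely differ: the paper gives no argument and defers to Theorem~5.4 of \cite{nandy2024multimodal}. Your argument works as follows.
\begin{itemize}
\item Conditionally on the training data, the noise term is exactly $\dnorm\bigl(0,\wh{\bm D}_h^{-1}(n^{-1}\hV_{T,h}^\top\hV_{T,h})^{-1}\wh{\bm D}_h^{-1}\bigr)$.
\item The signal term is $\wh{\bm D}_h^{-1}(n^{-1}\hV_{T,h}^\top\hV_{T,h})^{-1}(n^{-1}\hV_{T,h}^\top\bm V_h)\bm D_hU^\test_h$. It reduces to $U^\test_h$ because $n^{-1}\hV_{T,h}^\top\bm V_h$ and $n^{-1}\hV_{T,h}^\top\hV_{T,h}$ share the limit $\bm\Sigma^L_{T,h}$, by the tower property for the Bayes-optimal oracle denoiser.
\item Conditional convergence plus Slutsky then gives the claim.
\end{itemize}

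\textbf{What your route buys.} It isolates the overlap identity $\lim n^{-1}\hV_{T,h}^\top\bm V_h=\lim n^{-1}\hV_{T,h}^\top\hV_{T,h}$ as the reason the test signal coefficient is exactly the identity. The same identity is hard-wired on the training side through $\bm M^L_{T,h}=\bm\Sigma^L_{T,h}\bm D_h$. Your route also yields exact conditional Gaussianity of the test noise at every $n$.

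One refinement of your framing: running the same algebra with a general overlap gives identical training and test limits. So the identity is what makes both equal to $U_h$ plus noise, rather than what makes training and test match each other.
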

The theorem above justifies the use of the rescaled AMP embeddings $\wh{\bm U}^{\rsc}_h$ \eqref{eq:def_proj_ols_cons} in training the predictor in \eqref{eq:def_frak_l}. Furthermore, since the asymptotic distribution of the training features approximately (we use this qualifier to emphasize the differences between
the two distinct notions of asymptotic distribution for the training and test embeddings) aligns with that of the OLS projected test embeddings, the foregoing theorem also justifies plugging in $\wh U^{\test,\ols}_h$ in the fitted predictor to obtain the test predictions.

\section{Discussion}
In this paper, we introduced DAIF, a data adaptive intermediate fusion framework for supervised learning. In contrast to existing intermediate fusion approaches which commonly use fixed fusion architectures, we learn the fusion granularity based on the data. To this end, we cluster the data modalities using non-parametric measures of dependence. Having recovered the latent clustering structure, we perform cluster-wise empirical Bayes estimation of priors. These priors are subsequently used to construct denoisers in an Approximate Message Passing (AMP) framework. The AMP algorithm yields low-dimensional representations which fuse information across the related modalities. These representations are then used for downstream supervised learning. We derive rigorous theoretical guarantees for our method. In our experiments, our method consistently outperforms prior methods on synthetic data. 

We collect here two directions for future inquiry. The current method uses a separate spectral initialization from each modality. This initialization succeeds provided the signal strength in each modality is sufficiently high (technically, it should be above the Baik-Ben Arous-Peche (BBP) threshold). What happens if the signal in some of the modalities is weak, although the overall signal in the data is sufficiently strong? We believe it should be possible to design better spectral initialization algorithms, which combine the signal strength across modalities. There has been some preliminary progress in this direction in \citet{du2026optimal,yang2026sharp}. It would be interesting to develop analogous spectral algorithms for the current problem and investigate its performance for downstream empirical Bayes learning. We leave this for future work. 

Our current method uses a Gaussian mixture model based empirical Bayes to learn the prior within each cluster. An alternative is to use non-parametric maximum likelihood-based empirical Bayes as in \citet{zhong2022empirical}. The latter strategy is reasonable if the dimensions of the priors are relatively small (maybe at most 10). For larger ambient dimensions, non-parametric empirical Bayes methods are expected to become impractical due to their substantial sample size requirements.
Moreover, the Gaussian mixture model approximation of the prior class employed in this paper may suffer from limited representational capacity in complex datasets featuring transitional cell states that are poorly separated.
To mitigate these limitations, it would be fruitful to explore empirical Bayes methods that incorporate ideas from machine learning.
For instance, \citet{tang2021empirical} and \citet{cheng2020generalizing} combine variational autoencoders with empirical Bayes estimation, while \citet{meier2026clustering} leverage diffusion-based denoising in the context of single-cell applications. Transformer-based approaches \citep{teh2025solving} and methods grounded in Hyv\"{a}rinen's score matching \citep{ghosh2025stein} have also been employed in empirical Bayes denoising.
Extending these techniques to the \fancyname{} framework represents a compelling direction for future work.

\subsection*{Use of Generative AI}
We have used GPT for reviewing the content, checking the mathematical arguments and copyediting the manuscript. Furthermore, we have used Claude code to implement the codes in Sections~\ref{sec:numerical_exp} and~\ref{sec:real_data}. We have carefully verified all AI generated content and take full responsibility for their correctness. 

\subsection*{Acknowledgment}
S.N.~gratefully acknowledges support from JobsOhio STEM Education Faculty Startup Award (GR142894). P.S.~gratefully acknowledges support from the National Science Foundation (DMS CAREER 2440824) and the Office of Naval Research (N00014-26-1-2144). S.S.~gratefully acknowledges support from NSF (DMS CAREER 2239234), ONR (N00014-23-1-2489) and AFOSR (FA9950-23-1-0429). Furthermore, we acknowledge Ohio State University's ASC Unity cluster for providing the computational resources used in this study.

\bibliographystyle{abbrvnat}
\bibliography{amp_fusion}

\appendix

\section{Discussion on measures of dependence}
\label{sec:measures_of_dependence}
In this paper, we use the centered kernel alignment (CKA) from \citet{kornblith2019similarity} as a measure of dependence. This measure is defined as follows: Fix $\varrho^2>0$ and consider two modality latent matrices $\wh{\bm U}^{\mathrm{init}}_e \in \mathbb{R}^{n \times r_e}$ and
$\wh{\bm U}^{\mathrm{init}}_{e'} \in \mathbb{R}^{n \times r_{e'}}$, let $\ksf_e \in \R^{n \times n}$ and $\ksf_{e'} \in \R^{n \times n}$ be defined as
RBF kernel matrices,
\begin{equation}
  (\ksf_\star)_{ij} = \exp\!\left(-\frac{\left\|(\wh{\bm U}^{\mathrm{init}}_\star)_{i*} - (\wh{\bm U}^{\mathrm{init}}_\star)_{j*}\right\|^2}{2\varrho^2}\right), \quad \mbox{where $\star \in \{e,e'\}$.}
\end{equation}
Let $\wt{\ksf}_e = \bm H \ksf_e \bm H$, where
$\bm H = \bm I_n - \frac{1}{n}\mathbbm{1}_n\mathbbm{1}^\top_n$ is the centering matrix. Here $\mathbbm{1}_n \in \R^n$ denotes the all ones vector. The biased HSIC estimator between modalities $e$ and $e'$ is defined as
\begin{equation}
\label{eq:def_hsic_sample}
  \mathsf{HSIC}_{e,e'}
  \;=\;
  \frac{1}{n^2}\,\mathrm{Tr}\!\left(\wt{\ksf}_e\wt{\ksf}_{e'}\right), \quad \mbox{where for any matrix $\bm A \in \mathbb R^{n \times n}$, $\mathrm{Tr}(A):=\sum_{i=1}^n(\bm A)_{ii}$.}
\end{equation}
Then CKA between modalities indexed by $e$ and $e'$ is defined as:
\begin{equation}
\label{eq:def_cka}
  \mathrm{CKA}_{e,e'}
  \;:=\;
  \frac{\mathsf{HSIC}_{e,e'}}
       {\sqrt{\mathsf{HSIC}_{e,e}\,
              \mathsf{HSIC}_{e',e'}}}.
\end{equation}
This normalization makes CKA bounded between zero and one whenever the denominator is nonzero. Moreover, at the population level, HSIC vanishes if and only if the two random elements are independent (the kernel should be a characteristic kernel e.g. the Gaussian kernel) \citep{hsic_gretton}. Thus, CKA provides a convenient normalized kernel-based measure of dependence between latent modality representations.

Although CKA is mathematically convenient and builds on the widely used HSIC criterion \citep{hsic_gretton}, other dependence measures can also be used as the function $\aleph_n$ in Algorithm~\ref{alg:daif_compact}. One alternative is distance correlation \citep{szekely2007measuring}, which avoids the explicit choice of a kernel bandwidth parameter and is instead defined through pairwise distances. A potential drawback, however, is that pairwise Euclidean distances can be affected by distance concentration in high-dimensional settings and can also be sensitive to the relative scaling of different modalities.

Another possibility is to use nearest-neighbor-based measures of association, such as the one proposed in \citet{Deb2020Measuring}. These measures have appealing population-level characterizations of dependence. For instance, they can distinguish independence from dependence and, at the other extreme, characterize when one random object is a measurable function of another. However, nearest-neighbor constructions may become unstable or statistically inefficient in high-dimensional latent spaces, especially when the effective sample size is limited.

Finally, one may also consider mutual-information-based objectives, such as InfoNCE \citep{oord2018representation}, which are widely used in contrastive learning. Such approaches can be powerful when the goal is to learn representations that maximize shared information across modalities. However, they typically require additional modeling choices, such as the construction of positive and negative pairs, the specification of a critic or scoring function, and often an embedding of different modalities into a common representation space. In settings where the modality-specific latent signatures have different dimensions or different geometric interpretations, such a common embedding may not be natural.

Overall, CKA offers a simple, normalized, and kernel-based measure of dependence that can be directly applied to latent variables of different dimensions. For this reason, we adopt it as our default choice in Algorithm~\ref{alg:daif_compact}. Nevertheless, the proposed framework is modular: CKA can be replaced by other valid dependence measures when they are better suited to the geometry, dimensionality, or scientific goals of a particular application.

\section{Details of the hierarchical clustering}
\label{sec:hierarchical_clust}
Given the dissimilarity matrix $\mathfrak N \in \R^{(m+\wt m) \times (m+\wt m)}$ defined in \eqref{eq:dissimilarity_matrix}, to compute the clusters we use the in-built hierarchical clustering module in \texttt{scipy} with the \emph{average linkage} criterion (UPGMA) \citep{upgma}:
\begin{equation}
  d(A, B)
  \;=\;
  \frac{1}{|A||B|}
  \sum_{i \in A}\sum_{j \in B} D_{ij}.
\end{equation}
This produces a dendrogram over the $(m+\wt m)$ modalities. If a fixed number of clusters is provided, as in Section~\ref{sec:numerical_exp} then we cut the dendrogram to ensure that many clusters. However, in real data applications (Sections~\ref{sec:tea_seq} and~\ref{sec:tcga_brca}), where the number of clusters is not apparent, we cut the dendrogram at a range of heights according to the number of clusters varying between 1 and $m+\wt m-1$. Then the threshold optimizing the gap statistic \cite{Tibshirani_2001} is chosen and used for the downstream pipeline. The late fusion setting with $m+\wt m$ many clusters is always evaluated separately.

To compute the dissimilarity matrix, we used the centered Kernel discrepancy \citep{kornblith2019similarity} where we use the Gaussian kernel with bandwidth $1$. This default choice is adopted across all numerical simulations and data examples.

\section{Further details about the implementation of the Approximate Message Passing framework}
\label{sec:amp_details}

In this section, we provide additional details regarding the implementation of the Approximate Message Passing (AMP) iterations described in Section~\ref{sec:feature_extract}. In particular, we clarify the rowwise application of the denoiser functions appearing in \eqref{eq:denoise_cols}, \eqref{eq:denoise_rows}, and \eqref{eq:denoise_low_dim}, along with the corresponding Jacobian terms used in the Onsager corrections.

We begin with the denoisers associated with the high-dimensional modalities. For each $h \in [m]$, the matrix-valued denoised embeddings $v_{t,h}(\hG_{t,h};\wh{\nu}_h)\in\R^{p_h\times r_h}$ is obtained by applying $v_{t,h}(\,\cdot\,;\wh{\nu}_h)$ to the rows of $\hG_{t,h}$ as follows:
\begin{align}
(v_{t,h}(\hG_{t,h};\wh{\nu}_h))_{j*}:=v_{t,h}((\hG_{t,h})_{j*};\wh{\nu}_h), \qquad \text{for all } j\in[p_h].
\end{align}

Next, consider the denoisers associated with the latent factors corresponding to the high-dimensional modalities. For each $k\in[K]$ and $h\in[m]$, we define
\[
u_{t,h,k}\Big(\{\hF_{t,e}:e\in\mathcal C_{k,\mathrm H}\},\{\tX_{\ell}:\ell\in\mathcal C_{k,\mathrm L}\};\wh{\mu}_k\Big)\in\R^{n\times r_h}
\]
through rowwise application of the denoiser function in \eqref{eq:denoiser_func_rows}. Specifically,
\begin{align}
\left(u_{t,h,k}\Big(\{\hF_{t,e}:e\in\mathcal C_{k,\mathrm H}\},\{\tX_{\ell}:\ell\in\mathcal C_{k,\mathrm L}\};\wh{\mu}_k\Big)\right)_{i*}:=u_{t,h,k}\Big(\{(\hF_{t,e})_{i*}:e\in\mathcal C_{k,\mathrm H}\},\{(\tX_{\ell})_{i*}:\ell\in\mathcal C_{k,\mathrm L}\};\wh{\mu}_k\Big),
\end{align}
for all $i\in[n]$.

Similarly, for the low-dimensional modalities, we define
\[
\wt u_{t,\ell,k}\Big(\{\hF_{t,e}:e\in\mathcal C_{k,\mathrm H}\},\{\tX_{t,\wt\ell}:\wt\ell\in\mathcal C_{k,\mathrm L}\};\wh{\mu}_k\Big)\in\R^{n\times \wt r_\ell}
\]
through the rowwise rule
\begin{align}
\left(\wt u_{t,\ell,k}\Big(\{\hF_{t,e}:e\in\mathcal C_{k,\mathrm H}\},\{\tX_{t,\wt\ell}:\wt\ell\in\mathcal C_{k,\mathrm L}\};\wh{\mu}_k\Big)\right)_{i*}:=\wt u_{t,\ell,k}\Big(\{(\hF_{t,e})_{i*}:e\in\mathcal C_{k,\mathrm H}\},\{(\tX_{t,\wt\ell})_{i*}:\wt\ell\in\mathcal C_{k,\mathrm L}\};\wh{\mu}_k\Big),
\end{align}
for all $\ell\in[\wt m]$, $i\in[n]$, and $k\in[K]$.

We now describe the Jacobian terms appearing in the Onsager corrections. Recall that these terms account for the correlation introduced across AMP iterations and play a fundamental role in ensuring the asymptotic Gaussianity of the effective noise variables.

We first consider the Jacobian of the denoiser function
\(
v_{t,h}(\cdot;\wh\nu_h):\R^{r_h}\to\R^{r_h},
\)
which we denote by
\(
\diff{v_{t,h}}(\cdot;\wh\nu_h):\R^{r_h}\to\R^{r_h\times r_h}.
\)
For each $h\in[m]$, the corresponding Jacobian average is defined as
\begin{align}
\bm J^{R}_{t,h}:=\frac{1}{p_h}\sum_{i=1}^{p_h}\diff{v_{t,h}}\big((\wh{\bm G}_{t,h})_{i*};\wh\nu_h\big).
\end{align}

Next, consider the denoisers $u_{t,h,k}(\cdot)$ for $h \in [m]$ and $k \in [K]$.
For a fixed cluster $k\in[K]$, let $\mathcal I_h$ denote the collection of coordinate indices corresponding to the $h$-th high-dimensional modality among all coordinates associated with the modalities in $\mathcal C_k$. Further, let
\(
\diff{u_{t,h,k}}:\R^{r_k}\to\R^{r_h\times r_k}
\)
denote the Jacobian of the denoiser $u_{t,h,k}(\cdot)$ with respect to its vector input. Then the corresponding Onsager matrix is defined by
\begin{align}
\bm J^{L}_{t,h}:=\left[\frac{1}{n}\sum_{i=1}^{n}\diff{u_{t,h,k}}\big((\wh{\bm F}_{t,i_{k,1}})_{i*},\ldots,(\wh{\bm F}_{t,i_{k,r_k}})_{i*},(\wt{\bm X}_{t,j_{k,1}})_{i*},\ldots,(\wt{\bm X}_{t,j_{k,\wt r_k}})_{i*};\wh{\mu}_k\big)\right]_{*\mathcal I_h}.
\end{align}

The Jacobian matrices defined above are used in the Onsager correction terms appearing in the AMP recursions. These corrections remove the first-order bias introduced by the iterative reuse of the data matrix and ensure that the effective iterates asymptotically behave like signal-plus-Gaussian-noise models, thereby enabling the application of empirical Bayes denoisers calibrated using the estimated prior distributions.

\section{Implementation of baseline procedures in Section~\ref{sec:numerical_exp}}
\label{sec:baseline_exp_des}
In Section~\ref{sec:rec_bench}, we compare \fancyname{} with AJIVE \cite{feng2018angle}, MCCA \cite{kettenring1971canonical}, GCCA \cite{carroll1968gca}, and a version of hierarchical PCA. 

We used the built-in implementation of AJIVE, MCCA, and GCCA from the \texttt{multiblock} package in \texttt{Python}. For the experiments in Section~\ref{sec:rec_bench} and~\ref{sec:ne_pred_oth_meth}, in each of these methods, we construct an estimate of $\wh{\bm U}_{\mathrm{shared}} \in \R^{2}$ of the shared subspace between the modalities (all of AJIVE, MCCA, and GCCA make the restrictive assumption that the shared signal lies in a common subspace of reduced dimension). The choice of $2$ for the dimension of the shared latent space is dictated by the data generating mechanism. The contributions of individual components $\bm I_h$ are constructed by the best rank $r_h-2$ approximation of $\bm I^\circ_h:=\bm X_h-\bm J_h$, where $\bm J_h:=\wh{\bm U}_{\mathrm{shared}}\wh{\bm U}^\top_{\mathrm{shared}}\bm X_h$, where $\bm X_h$ is the data matrix. The final estimate $\wh{\bm U}_h$ of $\bm U_h$ is given by the top $r_h$ left singular vectors of $\bm I_h+\bm J_h$ (multiplied by $\sqrt{n}$) for $h=1,2,3$. 

For the hierarchical PCA (HPCA) benchmark, we concatenate all modality matrices into $\bm X_{\mathrm{ct}}:=[\bm X_1 \; \bm X_2 \; \bm X_3] \in \R^{n \times (p_1+p_2+p_3)}$. Then, we compute the rank $2$ truncated SVD to obtain joint score $\wh{\bm U}_{\mathrm{shared}} \in \R^{n \times 2}$ and individual loadings $\wh{\bm L}_h$ for $h=1,2,3$ (the loadings are computed by splitting the global loading matrix into appropriate blocks corresponding to the modality dimensions). The individual component $\wh{\bm I}_h$ is computed as the best rank $r_h-2$ approximation of $\wh{\bm I}^\circ_h:=\bm X_h-\wh{\bm U}_{\mathrm{shared}}\wh{\bm L}^\top_h$. The final embedding $\wh{\bm U}_h$ is the top $r_h$ truncated left singular vector matrix of $\wh{\bm I}_h+\wh{\bm U}_{\mathrm{shared}}\wh{\bm L}^\top_h$. This is in effect a generalization of the stacked SVD procedure from \cite{baharav2025stacked}, accounting for different ranks in different modalities. 

Across all the benchmarking methods, one also obtains the right singular vectors $\wh{\bm V}_h$ for $h=1,\ldots,m$ in the final step of the construction of the embeddings. In Section~\ref{sec:ne_pred_oth_meth}, we train the predictor by fitting a linear regression of the training response $y_\train$ using the covariates $\bm X_h\wh{\bm V}_h(\wh{\bm V}^\top_h\wh{\bm V}_h)^{-1}$ for all four competing benchmarks using the estimated loading matrices $\wh{\bm V}_h$ in the final truncated SVD step of the construction for all four benchmarks. The test data is similarly projected as $(\wh{\bm V}^\top_h\wh{\bm V}_h)^{-1}\wh{\bm V}^\top_hX^\test_h$ for $h=1,2,3$ and the trained predictor is applied on the projected scores.

\begin{table}[tbp]
\centering
\footnotesize
\setlength{\tabcolsep}{6pt}
\renewcommand{\arraystretch}{1.05}
\begin{tabular}{lrrrrr}
\toprule
Method & $n = 3000$ & $n = 3500$ & $n = 4000$ & $n = 4500$ & $n = 5000$ \\
\midrule
\multicolumn{6}{l}{\textbf{Linear link, linear}} \\
\quad EB-PCA & 0.025(0.002) & 0.027(0.002) & 0.026(0.002) & 0.026(0.002) & 0.025(0.001) \\
\quad DAIF-CKA & \textbf{0.024(0.001)} & \textbf{0.024(0.001)} & \textbf{0.024(0.001)} & \textbf{0.023(0.001)} & \textbf{0.022(0.001)} \\
\quad OrchAMP & 0.036(0.002) & 0.034(0.002) & 0.035(0.002) & 0.034(0.001) & 0.033(0.000) \\
\midrule
\multicolumn{6}{l}{\textbf{Non-linear link, neural net}} \\
\quad EB-PCA & 0.263(0.053) & 0.286(0.049) & 0.273(0.051) & 0.211(0.039) & 0.248(0.050) \\
\quad DAIF-CKA & \textbf{0.196(0.032)} & \textbf{0.233(0.037)} & \textbf{0.208(0.033)} & \textbf{0.171(0.028)} & \textbf{0.187(0.032)} \\
\quad OrchAMP & 0.229(0.034) & 0.289(0.051) & 0.244(0.041) & 0.188(0.029) & 0.208(0.033) \\
\bottomrule
\end{tabular}
\caption{Average prediction error across sample sizes (mean with standard error in parentheses) with sample splitting for training feature construction. \textbf{DAIF-CKA}: AMP with CKA-based clustering (2 clusters); \textbf{EB-PCA}: no fusion (3 clusters); \textbf{OrchAMP}: complete fusion (1 cluster). Bold = best mean per column within each block.}
\label{tab:pred_error_sample_split}
\end{table}

\section{A sample splitting based training feature construction method}
\label{sec:sample_splitting}
Observe that the construction of predictor features for the high-dimensional modalities in the training and the test arms of \fancyname{} are different. While the training features $\{\wh{\bm U}^\rsc_{h}:h \in [m]\}$ are constructed by rescaling the Onsager corrected embedding $\{\wh{\bm F}_{T,h}:h \in [m]\}$ (cf. \eqref{eq:def_proj_ols_cons}), the test features $\{\wh{U}^{\test,\ols}_h:h \in [m]\}$ are obtained by projecting the test observations onto the column space of $\{\wh{\bm R}_{T,h}:h \in [m]\}$ using ordinary least squares. The same OLS projection based approach is not suitable for construction of the training features since the dependence between $\wh{\bm R}_{T,h}$ and $\bm X_h$ ensures that the projected features do not satisfy \eqref{eq:def_proj_ols} invalidating the rationale behind the construction in \eqref{eq:frak_pred}.

Nevertheless, one might adopt a sample splitting approach where the predictor is trained on a held out fraction of the training sample that never sees the construction of $\wh{\bm R}_{T,h}$. This removes the dependence between the projector and the data being projected which in turn restores \eqref{eq:def_proj_ols} for the resulting embeddings generated through OLS projection.

While this procedure is perfectly reasonable when the number of training samples is large, it might lead to suboptimal performance when the number of training samples is low but the feature dimension is large. Recall that the validity of AMP state evolution relies on proportional asymptotics which requires the number of samples used in the AMP iterations \eqref{eq:denoise_cols}-\eqref{eq:rough_denoise_rows} to be comparable to the feature dimensions in the high-dimensional modalities. Furthermore, the efficiency of the empirical Bayes prior estimation also requires large number of training samples. Therefore, sample-splitting based feature reconstruction might fail in settings with a small number of training examples.

To investigate the performance of sample splitting based feature reconstruction, we considered a synthetic experiment using the same setting as Section~\ref{sec:eff_pred_a}. The only difference is that we use 70\% of the training data to construct $\wh{\bm V}^{\mathrm{AMP}}_{T,h}$ using AMP iterations \eqref{eq:denoise_cols}-\eqref{eq:rough_denoise_rows} run for $T=10$ iterations. The estimated loading matrices are used to construct 
\[
\wh{\bm R}^{\mathrm{AMP}}_{T,h}:=\frac{1}{0.7\,n_{\train}}\wh{\bm V}^{\mathrm{AMP}}_{T,h}\wh{\bm D}_h.
\]
The training features supplied to the objective in \eqref{eq:opt_theta} are then obtained as
\[
\wh{\bm U}^{\rsc}_{h}:= \wb{\bdX}^{\mathrm{head}}_h \wh{\bm R}^{\mathrm{AMP}}_{T,h} ((\wh{\bm R}^{\mathrm{AMP}}_{T,h})^\top \wh{\bm R}^{\mathrm{AMP}}_{T,h})^{-1}\in \mathbb R^{(0.3 \cdot \,n_{\train})\times r_h},
\]
where $\wb{\bdX}^{\mathrm{head}}_h:=\bdX^{\mathrm{head}}_h/\sqrt{0.7\,n_\train}$ for $h \in [m]$ and $\bdX^{\mathrm{head}}_h$ is the data matrix constructed from the training data by restricting to the subset of held-out observations reserved for predictor training (30\% of the total training data). The rest of the pipeline, proceeds as in Section~\ref{sec:eff_pred_a}. The results of the experiment are outlined in Table~\ref{tab:pred_error_sample_split} and the trend is similar to Table~\ref{tab:pred_error}. This similarity can be attributed to large number of training samples.

Next, we performed the same TCGA-BRCA data analysis described in Section~\ref{sec:tcga_brca} using sample splitting. This experiment provides a stress test for the method since the number of training samples is limited (615 patients). We again held out 30\% of the training samples (184 patients) for estimating $\wh \theta$ by optimizing \eqref{eq:cox_prop_hazard} and used the rest for representation learning (431 patients). The rest of the experiment was the same as Section~\ref{sec:tcga_brca} and the results are outlined in Table~\ref{tab:tcga_brca_survival_sample_split}. We observe that MOFA+ outperforms all AMP-based pipelines exposing the limitation of sample splitting while using AMP and empirical Bayes-based pipeline when the sample size is limited.

\begin{table}[t]
\centering
\scriptsize
\begin{tabular}{lrrrrr}
\toprule
 & \fancyname{}-CKA (K=1) & \fancyname{}-CKA (K=2) & \fancyname{}-CKA (K=3) & MOFA+ & Multigrate \\
\midrule
Train C-index & 0.8201 & 0.8165 & 0.8098 & 0.9079 & 0.9913 \\
Test C-index & 0.6019 & 0.6089 & 0.6003 & \textbf{0.6873} & 0.5157 \\
\bottomrule
\end{tabular}
\caption{TCGA-BRCA overall survival prediction (test C-index) with sample splitting for training feature construction. Higher is better; bold indicates the best test C-index. \fancyname{}-CKA for $K=1$ correspond to OrchAMP \citep{nandy2024multimodal} and \fancyname{}-CKA for $K=3$ corresponds to EB-PCA \citep{zhong2022empirical}. \fancyname{}-CKA for $K=2$ uses the gap statistic to select the number of clusters, which resulted in choosing $K=2$.}
\label{tab:tcga_brca_survival_sample_split}
\end{table}

\section{Pre-processing TCGA data}
\label{sec:prproc_tcga}
  Each modality was preprocessed as follows. For RNA-seq, features with more than 20\% missing values were discarded,
   remaining missing values were imputed by gene-wise medians, and the 2{,}000 most variable genes were retained and
   z-scored, yielding $\bm X_1 \in \mathbb{R}^{769 \times 2000}$. For methylation, CpG sites with more than 20\%
  missingness were dropped (leaving approximately 395{,}582 sites), missing values were imputed by site-wise
  medians, the 5{,}000 most variable sites were selected, beta values were transformed to M-values via $M =
  \log_2(\beta / (1-\beta))$, clipped to $[-10, 10]$, and z-scored, yielding $\bm X_2 \in \mathbb{R}^{769 \times 5000}$.
   CNV required a two-stage reduction to serve as the low-dimensional modality in the model of
  \eqref{eq:multimodal_low_dim_model}. In the first stage, features with more than 20\% missingness were discarded,
  missing values were imputed by zero (corresponding to diploid copy number), and the 1{,}000 most variable genes
  were retained and z-scored. In the second stage, the 5 genes with highest marginal variance were selected from
  this subset; per-feature noise variances $\tau_j^2$ were estimated by removing the top 3 signal principal
  components and computing column-wise residual variances; and each feature was divided by $\hat\tau_j$ so that
  residual noise is approximately $\mathcal{N}(0,1)$ per column. This standardized matrix $\bm X_3 \in \mathbb{R}^{769 \times 5}$ was used as the third modality. After the entire pre-processing the data is split into training and test subsets.

\section{Examples of priors satisfying the regularity assumption}
\label{sec:examples_regular_priors}
We now give several concrete classes of latent generating priors for which
Assumption~\ref{assu:reg} holds. The argument is stated for a generic
Gaussian observation model, since both parts of Assumption~\ref{assu:reg}
are special cases of this setup.

Let $U\in\R^r$ have prior distribution $\pi \in \mathcal P$, and suppose that $Y \mid U \sim \dnorm_q(\bm AU,\bm B)$,
where $\bm A\in\R^{q\times r}$ is fixed and $\bm B\in\R^{q\times q}$ is positive
definite. Define the Bayes denoiser $\mathsf m_\pi(y):=\E_\pi[U\mid Y=y].$
Observe that the generic setting described here can be mapped to the settings in either parts of Assumption~\ref{assu:reg} through appropriate choices of $\bm A$ and $\bm B$. The
individual denoisers $u_h^G$ and $\wt u_\ell^G$ are simply coordinate blocks
of $\mathsf m_\pi$ where $\pi=\mu$. Similarly, the right denoiser $v_h^G$ corresponds to the
same construction with $U=V_h$ and $\pi=\nu_h$. Next, we use the following property of $\mathsf m_\pi$.

\begin{lem}
\label{lem:posterior_mean_derivative}
Assume that $\bm B$ is positive definite and that $\E_\pi \|U\|_2^2<\infty.$ 
Then
\begin{align}
\label{eq:nabla_m_pi}
    \nabla_y \mathsf m_\pi(y)
    =
    \Cov_\pi(U\mid Y=y) \bm A^\top \bm B^{-1}.
\end{align}
Consequently,
\[
    \|\nabla_y \mathsf m_\pi(y)\|_{\op}
    \le
    \|\Cov_\pi(U\mid Y=y)\|_{\op}\,
    \|\bm A^\top \bm B^{-1}\|_{\op}.
\]
In particular, if
\[
    \sup_{y\in\R^q}
    \|\Cov_\pi(U\mid Y=y)\|_{\op}
    \le C_\pi, \quad \mbox{uniformly over $\pi \in \mathcal P$,}
\]
then $\mathsf m_\pi$ is uniformly Lipschitz over $\mathcal P$ with Lipschitz constant at most
$C_\pi\|\bm A^\top \bm B^{-1}\|_{\op}$.
\end{lem}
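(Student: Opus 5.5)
The plan is to compute the gradient of the posterior mean by differentiating under the integral sign, i.e. the standard Tweedie-type identity for Gaussian observations. First write the Gaussian likelihood
\[
\varphi(y\mid u)\propto \exp\!\bigl(-\tfrac12 (y-\bm A u)^\top \bm B^{-1}(y-\bm A u)\bigr).
\]
Then write the posterior mean as a ratio
\[
\mathsf m_\pi(y)=N(y)/Z(y),\qquad N(y)=\int u\,\varphi(y\mid u)\,\pi(du),\qquad Z(y)=\int \varphi(y\mid u)\,\pi(du).
\]
Note that $Z(y)>0$ for every $y$, since $\varphi(y\mid u)>0$ and $\pi$ is a probability measure.

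Next, observe that
\[
\nabla_y \varphi(y\mid u)= -\varphi(y\mid u)\,\bm B^{-1}(y-\bm A u).
\]
Viewed as a row-gradient, this is $\varphi(y\mid u)\,(\bm A u-y)^\top \bm B^{-1}$. To justify differentiating $N$ and $Z$ under the integral, use dominated convergence. On a neighborhood $\|y-y_0\|\le 1$ the integrands $\varphi$, $u\varphi$ and their $y$-derivatives are bounded by $C(1+\|u\|_2^2)$, because $\varphi\le 1$ up to a constant and the derivatives are at most quadratic in $u$. This bound is $\pi$-integrable by the assumption $\E_\pi\|U\|_2^2<\infty$. It is the only genuinely analytic step, and I expect it to be the main (mild) obstacle, mainly in bookkeeping the polynomial-times-Gaussian bounds. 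One could even drop the moment condition using $\|u\|e^{-c\|\bm A u\|^2}$ bounds, but the assumption makes this immediate.

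With differentiation justified, the quotient rule gives
\[
\nabla \mathsf m_\pi=\frac{\nabla N}{Z}-\frac{N\,\nabla Z}{Z^2}.
\]
Here
\[
\frac{\nabla N}{Z}=\E[U(\bm A U-y)^\top\mid Y=y]\,\bm B^{-1},\qquad \frac{N\nabla Z}{Z^2}=\E[U\mid y]\,\E[(\bm A U-y)^\top\mid y]\,\bm B^{-1}.
\]
Subtracting, the $y$ terms cancel and we are left with
\[
\bigl(\E[UU^\top\mid y]-\E[U\mid y]\E[U\mid y]^\top\bigr)\bm A^\top\bm B^{-1}=\Cov_\pi(U\mid Y=y)\,\bm A^\top\bm B^{-1},
\]
which is \eqref{eq:nabla_m_pi}.

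The operator-norm bound then follows from submultiplicativity. For the Lipschitz claim, apply the mean value inequality along the segment $y_1+s(y_2-y_1)$:
\[
\|\mathsf m_\pi(y_2)-\mathsf m_\pi(y_1)\|_2\le \sup_{y}\|\nabla\mathsf m_\pi(y)\|_{\op}\,\|y_2-y_1\|_2\le C_\pi\|\bm A^\top\bm B^{-1}\|_{\op}\|y_2-y_1\|_2.
\]
Since $C_\pi$ is uniform over $\mathcal P$, the resulting Lipschitz constant is uniform over $\mathcal P$ as well.
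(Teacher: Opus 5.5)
Your proposal is correct and follows the paper's proof essentially step for step: the ratio $N/D$, the gradient of the Gaussian kernel, the quotient rule with the $y$-terms cancelling into the posterior covariance, then submultiplicativity and the mean-value inequality. You also supply the dominated-convergence justification for differentiating under the integral, which the paper leaves implicit. One small caveat on your aside: dropping the moment condition via bounds like $\|u\|e^{-c\|\bm A u\|^2}$ only works when $\bm A$ has full column rank, but your argument does not use that aside.
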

The proof of the lemma is provided in Appendix~\ref{sec:proof_lem_deriv}.
From the foregoing lemma, to prove that $\mathsf m_\pi$ is Lipschitz uniformly over a local
neighborhood of the true prior $\pi$ in the topology of weak convergence, it suffices to control the posterior
covariance uniformly over such a neighborhood. Now, consider the following proposition providing examples of classes of priors $\mathcal P$ satisfying Assumption~\ref{assu:reg}.
\begin{prop}
\label{prop:examples_regular_priors}
Assume that the Gaussian noise covariance satisfies $\bm B\succeq b I_q$ for
some $b>0$. Then Assumption~\ref{assu:reg} holds if the latent prior class $\mathcal P$ (or, equivalently $\mathcal P_h$ for $h \in [m]$) is one of the following.
\begin{enumerate}
    \item The class of priors supported on a common compact set,
    \[
    \mathcal P_{\mathrm{comp}}(M):=\left\{\pi:\supp(\pi)\subseteq\{u\in\mathbb R^r:\|u\|_2\leq M\}\right\},\qquad M<\infty.
    \]
    \item The compact Gaussian parameter class
    \[
    \mathcal P_{\mathrm{Gauss}}:=\left\{\mathcal N_r(m,\bm\Sigma):\|m\|_2\leq M,\ \underline\lambda\bm I_r\preceq\bm\Sigma\preceq\overline\lambda\bm I_r\right\},
    \]
    where $M<\infty$ and $0<\underline\lambda\leq\overline\lambda<\infty$.
    \item The class of uniformly strongly log-concave priors $\mathcal P_{\mathrm{logconv}}$ having
    density satisfying $d\pi(u) \propto \exp\{-V_\pi(u)\}$, where
    \[
    \nabla^2V_\pi(u)\succeq c_0\bm I_r\qquad\text{for every }u\in\mathbb R^r,\,\mbox{and}\;\pi \in \mathcal P_{\mathrm{logconv}},
    \]
    with a common constant $c_0>0$.
    \item The class of finite homoscedastic Gaussian mixtures
    \[
    \pi=\sum_{a=1}^{J}w_a\mathcal N_r(m_a,\bm\Sigma),
    \]
    where $J\leq J_{\max}$, $w_a\geq w_{\min}>0$, $\sum_{a=1}^{J}w_a=1$,
    $\|m_a\|_2\leq M$, and
    \[
    \underline\lambda\bm I_r\preceq\bm\Sigma\preceq\overline\lambda\bm I_r.
    \]
\end{enumerate}
In each case, the posterior-mean denoiser $m_\pi(y):=\mathbb E_\pi[U\mid Y=y]$ is globally Lipschitz in $y$, uniformly over $\pi\in\mathcal P$. Consequently, it is uniformly Lipschitz over a relative weak neighborhood of every $\pi_0\in\mathcal P$.
\end{prop}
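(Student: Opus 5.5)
The plan is to reduce everything to Lemma~\ref{lem:posterior_mean_derivative}. Since $\bm B\succeq b\bm I_q$, we have $\|\bm A^\top\bm B^{-1}\|_{\op}\le \|\bm A\|_{\op}/b$. This is a constant depending only on the (fixed) matrices $\bm A,\bm B$ and not on the prior. Each of the four classes has uniformly bounded second moments, so the hypothesis $\E_\pi\|U\|_2^2<\infty$ of the lemma holds. It therefore suffices to exhibit, for each class, a constant $C$ with $\sup_{y}\|\Cov_\pi(U\mid Y=y)\|_{\op}\le C$ uniformly over $\pi\in\mathcal P$. The lemma then gives a global Lipschitz constant $C\|\bm A\|_{\op}/b$ for $\mathsf m_\pi$, uniform over $\mathcal P$. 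The neighborhood statement is then immediate: the relative weak neighborhood $O_{\pi_0}\cap\mathcal P$ is contained in $\mathcal P$, so the bound holds on it. Applying this with $(\bm A,\bm B)$ set to the block-diagonal matrices built from $\{A_h,B_h\}$ and $\{A_\ell,\bm I\}$ (respectively $A^R_h,B^R_h$) covers both parts of Assumption~\ref{assu:reg}. The denoisers $u_h^G,\wt u_\ell^G$ are coordinate blocks of $\mathsf m_\mu$, so they inherit its Lipschitz constant.

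I would verify the posterior covariance bound class by class.
\begin{enumerate}
\item \emph{Compact class.} The posterior is supported in the ball of radius $M$, so $\Cov_\pi(U\mid Y=y)\preceq \E_\pi[UU^\top\mid Y=y]$, which has operator norm at most $M^2$.
\item \emph{Gaussian class.} Conjugacy gives $\Cov(U\mid Y=y)=(\bm\Sigma^{-1}+\bm A^\top\bm B^{-1}\bm A)^{-1}\preceq\bm\Sigma\preceq\overline\lambda\bm I_r$. This bound does not depend on $y$.
\item \emph{Strongly log-concave class.} The posterior density is proportional to $\exp\{-V_\pi(u)-\tfrac12(y-\bm Au)^\top\bm B^{-1}(y-\bm Au)\}$. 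Its potential has Hessian $\nabla^2V_\pi+\bm A^\top\bm B^{-1}\bm A\succeq c_0\bm I_r$. The Brascamp--Lieb inequality (applied to linear test functions) then yields $\Cov(U\mid Y=y)\preceq c_0^{-1}\bm I_r$ for every $y$ and every $\pi$ in the class.
\end{enumerate}

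The main obstacle is the finite homoscedastic Gaussian mixture class, because the posterior covariance there genuinely depends on $y$ through the posterior mixture weights. I would handle it by conditioning on the latent component label $a$.
\begin{itemize}
\item Conditional on $Y=y$ and $a$, the posterior is Gaussian with covariance $\bm S:=(\bm\Sigma^{-1}+\bm A^\top\bm B^{-1}\bm A)^{-1}\preceq\overline\lambda\bm I_r$. This covariance is common to all components.
\item The conditional means are $\bm S(\bm\Sigma^{-1}m_a+\bm A^\top\bm B^{-1}y)$.
\item By the law of total variance, $\Cov(U\mid Y=y)=\bm S+\Cov_{a\mid y}\bigl(\bm S\bm\Sigma^{-1}m_a\bigr)$. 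The $y$-dependent shift $\bm S\bm A^\top\bm B^{-1}y$ is common to all components and cancels in the second term.
\item The second term is the covariance of a random vector taking values in a set of diameter at most $2\|\bm S\bm\Sigma^{-1}\|_{\op}M\le 2M\overline\lambda/\underline\lambda$, whatever the posterior weights are. Hence its operator norm is at most $(2M\overline\lambda/\underline\lambda)^2$.
\end{itemize}
This gives $\|\Cov(U\mid Y=y)\|_{\op}\le\overline\lambda+4M^2\overline\lambda^2/\underline\lambda^2$, uniformly in $y$, $J\le J_{\max}$, and the weights. In fact $w_{\min}$ is not even needed for this bound. Combining the four cases with Lemma~\ref{lem:posterior_mean_derivative} completes the argument.
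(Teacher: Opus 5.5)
Your proposal is correct and takes essentially the same approach as the paper. You reduce to Lemma~\ref{lem:posterior_mean_derivative} via $\|\bm A^\top\bm B^{-1}\|_{\op}\le\|\bm A\|_{\op}/b$, then bound the posterior covariance in each class: by $M^2$ (compact), by $\overline\lambda$ via conjugacy (Gaussian), by $c_0^{-1}$ via Brascamp--Lieb (log-concave), and, for the mixture, by conditioning on the component label and applying the law of total covariance, using that the differences between the conditional means do not depend on $y$. The only differences are cosmetic: you make the paper's constant $L_0$ explicit as $\overline\lambda/\underline\lambda$, and you correctly note that $w_{\min}$ plays no role in the bound.
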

In particular, the assumption holds for the discrete priors used by the
non-parametric maximum likelihood estimator, as well as for Gaussian,
compactly supported, strongly log-concave, and regular finite-mixture priors. In particular, the Gaussian mixture classes used to model the latent priors in the experiments of Sections~\ref{sec:numerical_exp} and~\ref{sec:real_data} satisfy Assumption~\ref{assu:reg}.

\subsection{Proof of Lemma~\ref{lem:posterior_mean_derivative}}
\label{sec:proof_lem_deriv}
For $y\in\R^q$ and $u\in\R^r$, define the Gaussian likelihood kernel
\[
L_y(u):=\exp\left\{-\frac12 (y-\bm Au)^\top \bm B^{-1}(y-\bm Au)\right\}.
\]
The normalizing constant of the posterior distribution is
\[
D(y):=\int L_y(u)\,d\pi(u),
\]
and the numerator of the posterior mean is
\[
N(y):=\int u L_y(u)\,d\pi(u).
\]
Thus $\mathsf m_\pi(y)=N(y)/D(y).$
First observe that
\[
\nabla_y \log L_y(u)=-\bm B^{-1}(y-\bm Au)=\bm B^{-1}(\bm Au-y).
\]
Therefore,
\[
\nabla_y L_y(u)=L_y(u)\bm B^{-1}(\bm Au-y).
\]
Since $N(y)$ is vector-valued, its derivative with respect to $y$ is an $r\times q$ matrix. Differentiating $N(y)$ gives
\[
\nabla_y N(y)=\int u(\bm Au-y)^\top \bm B^{-1} L_y(u)\,d\pi(u).
\]
Similarly,
\[
\nabla_y D(y)=\int \bm B^{-1}(\bm Au-y)L_y(u)\,d\pi(u).
\]
Equivalently,
\[
\{\nabla_y D(y)\}^\top=\int (\bm Au-y)^\top \bm B^{-1}L_y(u)\,d\pi(u).
\]
By the quotient rule,
\[
\nabla_y \mathsf m_\pi(y) =\frac{\nabla_y N(y)}{D(y)}
    -\frac{N(y)}{D(y)}\frac{\{\nabla_y D(y)\}^\top}{D(y)}.
\]
Using the posterior distribution induced by $\pi$ and the observation $Y=y$,
this becomes
\[
\begin{aligned}
\nabla_y \mathsf m_\pi(y)&=\E_\pi\left[U(\bm AU-y)^\top\mid Y=y\right]\bm B^{-1}-\E_\pi[U\mid Y=y]\,
    \E_\pi\left[(\bm AU-y)^\top\mid Y=y\right]\bm B^{-1}.
\end{aligned}
\]
Expanding the two terms,
\[
\E_\pi\left[U(\bm AU-y)^\top\mid Y=y\right]=\E_\pi[UU^\top\mid Y=y]\bm A^\top-\E_\pi[U\mid Y=y]y^\top,
\]
and
\[
\E_\pi[U\mid Y=y]\;\E_\pi\left[(\bm AU-y)^\top\mid Y=y\right]=\E_\pi[U\mid Y=y]\E_\pi[U\mid Y=y]^\top \bm A^\top-\E_\pi[U\mid Y=y]y^\top.
\]
After some algebraic manipulation, we get
\[
\begin{aligned}
\nabla_y \mathsf m_\pi(y)&=\left\{\E_\pi[UU^\top\mid Y=y]-\E_\pi[U\mid Y=y]\E_\pi[U\mid Y=y]^\top\right\}
\bm A^\top \bm B^{-1} \\
&=\Cov_\pi(U\mid Y=y)\bm A^\top \bm B^{-1}.
\end{aligned}
\]
Taking operator norms yields
\[
\|\nabla_y \mathsf m_\pi(y)\|_{\op}\le\|\Cov_\pi(U\mid Y=y)\|_{\op}
\|\bm A^\top \bm B^{-1}\|_{\op}.
\]
If the posterior covariance is uniformly bounded by $C_\pi$, then the
mean-value theorem gives, for any $y,y'\in\R^q$,
\[
    \|\mathsf m_\pi(y)-\mathsf m_\pi(y')\|_2
    \le
    C_\pi\|\bm A^\top \bm B^{-1}\|_{\op}\|y-y'\|_2.
\]
This proves the claim.

\subsection{Proof of Proposition~\ref{prop:examples_regular_priors}}
\label{sec:proof_examples_reg_prior}
Let us begin with the compactly supported prior class $\mathcal P_{\mathrm{comp}}(M)$ such that
\[
\mathbb P_{U \sim\pi}\left[\|U\|_2 \le M\right]=1, \quad \mbox{for all $\pi \in \mathcal P_{\mathrm{comp}}(M)$.}
\]
Then, for every $y$,
\[
    \|\Cov_\pi(U\mid Y=y)\|_{\op}\le
    \E_\pi\left[\|U\|_2^2\mid Y=y\right]
    \le M^2.
\]
Therefore, by \eqref{eq:nabla_m_pi},
\[
    \|\nabla_y \mathsf m_\pi(y)\|_{\op}
    \le
    M^2\|\bm A^\top \bm B^{-1}\|_{\op}
    \le
    \frac{M^2\|\bm A\|_{\op}}{b}, \quad \mbox{for all $\pi \in \mathcal P_{\mathrm{comp}}(M)$.}
\]
Thus $\mathsf m_\pi$ is globally Lipschitz in $y$, uniformly over all priors $\pi \in \mathcal P_{\mathrm{comp}}(M)$ supported on the same ball. This proves the claim for compactly supported
priors.

Next, suppose that $\pi=\mathcal N_r(m,\bm\Sigma)\in\mathcal P_{\mathrm{Gauss}}$. The posterior covariance in this setting is given  by
\[
\bm\Sigma_{U\mid Y}=\bm\Sigma-\bm\Sigma\bm A^\top\left(\bm A\bm\Sigma\bm A^\top+\bm B\right)^{-1}\bm A\bm\Sigma,
\]
and hence
\[
\bm 0\preceq\bm\Sigma_{U\mid Y}\preceq\bm\Sigma\preceq\overline\lambda\bm I_r.
\]
Thus, the posterior covariance is uniformly bounded by $\overline\lambda$. By Lemma~\ref{lem:posterior_mean_derivative}, this implies
\[
\|\nabla_y\mathsf m_\pi(y)\|_{\op}\le M_\Sigma\|\bm A^\top \bm B^{-1}\|_{\op}\le\frac{\overline\lambda\|\bm A\|_{\op}}{b},
\]
uniformly over $\pi \in \mathcal P_{\mathrm{Gauss}}$.
Therefore, the posterior mean in this setting is globally Lipschitz.

If $\pi \in \mathcal P_{\mathrm{logconv}}$ has density
\[
    d\pi(u)~\propto~ \exp\{-V(u)\}\,du
\]
with respect to Lebesgue measure on $\R^r$, where $V$ is twice continuously
differentiable and satisfies $\nabla^2 V(u)\succeq \mathrm c_0 I_r$ for all $u\in\R^r$ and $\pi\in \mathcal P_{\mathrm{logconv}}$, for some $\mathrm c_0>0$. The posterior density of $U$ given $Y=y$ is proportional
to
\[
    \exp\left\{-V(u)-\frac12(y-\bm Au)^\top \bm B^{-1}(y-\bm Au)\right\}.
\]
The Hessian of the negative log-posterior is given by
\[
\nabla^2 V(u)+\bm A^\top \bm B^{-1}\bm A \succeq \mathrm c_0 \bm I_r.
\]
By the Brascamp-Lieb covariance inequality \citep{BrascampLieb1976},
\[
\left\|\Cov_{\pi}(U\mid Y=y)\right\|_\op \le \mathbb E_\pi\left[\left\|\left(\nabla^2 V(U)+\bm A^\top \bm B^{-1}\bm A\right)^{-1}\right\|_\op \,\Big|\, Y=y\right]\le\mathrm c^{-1}_0,
\]
uniformly in $y$. Hence
\[
\|\nabla_y\mathsf m_\pi(y)\|_{\op}\le\mathrm c^{-1}_0\|\bm A^\top \bm B^{-1}\|_{\op}\le\frac{\|\bm A\|_{\op}}{\mathrm c_0 b}, \quad \mbox{for all $\pi$.}
\]
Thus, strongly log-concave priors with curvature uniformly bounded below by $\mathrm c_0>0$ satisfy the required Lipschitz condition. 

Finally, consider the homoscedastic Gaussian mixture class where $\pi=\sum_{a=1}^{J}w_a\dnorm_r(m_a,\bm\Sigma)$ and $w_a>0$ for all $a \in [J]$. Conditional on the cluster label $C=a$, the posterior covariance is $\bm C:=\left(\bm\Sigma^{-1}+\bm A^\top\bm B^{-1}\bm A\right)^{-1},$
which does not depend on $a$ or $y$ and satisfies $\bm C\preceq\bm\Sigma\preceq\overline\lambda\bm I_r$. The corresponding posterior mean is
\[
\eta_a(y)=\bm C\left(\bm\Sigma^{-1}m_a+\bm A^\top\bm B^{-1}y\right).
\]
Consequently, $\eta_a(y)-\eta_b(y)=\bm C\bm\Sigma^{-1}(m_a-m_b),$ which is independent of $y$. Define
\[
L_0:=\sup_{\substack{\underline\lambda\bm I_r\preceq\bm\Sigma\preceq\overline\lambda\bm I_r}}\left\|\left(\bm\Sigma^{-1}+\bm A^\top\bm B^{-1}\bm A\right)^{-1}\bm\Sigma^{-1}\right\|_{\op}<\infty.
\]
Then $\max_{a,b}\|\eta_a(y)-\eta_b(y)\|_2\leq 2ML_0.$
By the conditional covariance decomposition,
\[
\Cov_\pi(U\mid Y=y)=\E_\pi\left[\Cov(U\mid Y=y)\mid Y=y\right]+
\Cov_\pi\left(\E_\pi[U\mid Y=y]\mid Y=y\right).
\]
The first term equals $\bm C$, while the second term is the covariance of the discrete random vector taking the value $\eta_a(y)$ with probability $q_a(y)$. Therefore,
\[
\Cov_\pi(U\mid Y=y)=\bm C+\sum_{a=1}^Jq_a(y)\bigl(\eta_a(y)-\overline\eta(y)\bigr)\bigl(\eta_a(y)-\overline\eta(y)\bigr)^\top,
\]
where $\overline\eta(y):=\sum_{a=1}^J q_a(y)\eta_a(y)=\mathbb E_\pi[U\mid Y=y].$ Since $\overline\eta(y)$ is a convex combination of
$\eta_1(y),\ldots,\eta_J(y)$, for every $a\in[J]$,
\[
\|\eta_a(y)-\overline\eta(y)\|_2\leq\max_{b\in[J]}\|\eta_a(y)-\eta_b(y)\|_2\leq 2ML_0.
\]
Consequently,
\begin{align}
\left\|\Cov_\pi(U\mid Y=y)\right\|_{\op}&\leq
\|\bm C\|_{\op}+\sum_{a=1}^Jq_a(y)\|\eta_a(y)-\overline\eta(y)\|_2^2 \leq
\overline\lambda+4M^2L_0^2,
\end{align}
uniformly over $y\in\mathbb R^r$ and over all priors in the mixture class.

\section{Theoretical details on clustered empirical Bayes}
\label{sec:clust_emp_bayes}

\subsection{Initialization and nuisance parameter estimation}

The clustered empirical Bayes component of \fancyname{} relies on the asymptotic behavior of the top principal components of the high-dimensional feature matrices. Recall that the initialization step is based on the best rank-$r_h$ approximation
\begin{align}
\label{eq:low_rank_init}
\frac{1}{n}\,\upca_h \dpca_h (\vpca_h)^\top,
\end{align}
where $\dpca_h$ contains the top $r_h$ singular values of $\wb{\bm X}_h$, and the singular vectors are rescaled so that $(\upca_h)^\top \upca_h = n\bm I_{r_h}$ and $(\vpca_h)^\top \vpca_h = p_h\bm I_{r_h}$. Throughout, we fix the signs of the empirical singular vectors by imposing
\[
(\upca_h)^\top_{*j}(\bm U_h)_{*j}\ge 0,
\qquad
(\vpca_h)^\top_{*j}(\bm V_h)_{*j}\ge 0,
\]
for all $h \in [m]$ and $j \in [r_h]$.

Our analysis relies on asymptotic characterizations of empirical singular vectors from random matrix theory \citep{benayech_nadakuditi,baik2005phase,paul2007asymptotics}, combined with Gaussian mixture deconvolution results underlying empirical Bayes denoising \citep{kiefer1956consistency,zhong2022empirical,jiang_zhang}. To formalize these limits, 
define pseudo-Lipschitz test functions, following \cite[(5.1)]{nandy2024multimodal}. In other words, $\varphi:\R^{\mathfrak s}\to\R$ is pseudo-Lipschitz is it satisfies
\begin{align}
\label{eq:pseudo_lips}
|\varphi(x)-\varphi(y)|
\le
C(1+\|x\|_2+\|y\|_2)\|x-y\|_2,
\qquad x,y \in \R^{\mathfrak s},
\end{align}
for some constant $C>0$. In the literature, such functions are also referred as pseudo-Lipschitz of order 2.

Using this definition, we restate Proposition~5.1 of \cite{nandy2024multimodal} characterizing the asymptotic properties of the empirical principal component embeddings.

\begin{prop}[Proposition~5.1 of \cite{nandy2024multimodal}]
\label{prop:singular_vect_approx}
Suppose Assumption~\ref{assu:init} holds. Consider pseudo-Lipschitz functions $\varphi:\R^r \rightarrow \R $ and $\{\vartheta_h:\R^{r_h} \rightarrow \R~~\mbox{for $h \in [m]$}\}$. Then, if $p_h/n =\gamma_h$ for all $h \in [m]$ as $n,p_h \to \infty$, almost surely we have
\begin{align}
&\lim_{n \rightarrow \infty}\frac1n\sum_{i=1}^n\varphi\!\left((\upca_1)_{i*},\ldots,(\upca_m)_{i*},(\wt{\bm X}_1)_{i*},\ldots,(\wt{\bm X}_{\wt m})_{i*}\right)
=\E_{\mu}\!\left[\varphi\!\left(Y^{\mathrm{pc}}_1,\ldots,Y^{\mathrm{pc}}_m,\wt X_1,\ldots,\wt X_{\wt m}\right)\right], \notag\\
&\lim_{n \rightarrow \infty}\frac1{p_h}\sum_{j=1}^{p_h}\vartheta_h\!\left((\vpca_h)_{j*}\right)=\E_{\nu_h}\!\left[\vartheta_h(Y^{\mathrm{pc}}_{R,h})\right],\qquad \mbox{for all $h \in [m]$,}
\end{align}
where $Y^{\mathrm{pc}}_h=\bm M^L_hU_h+(\bm\Sigma^L_h)^{1/2}Z^{\mathrm{pc}}_{L,h} \in \R^{r_h}$, $Y^{\mathrm{pc}}_{R,h}=\bm M^R_hV_h+(\bm\Sigma^R_h)^{1/2}Z^{\mathrm{pc}}_{R,h} \in \R^{r_h}$, and $\wt X_\ell=\bm L_\ell \wt U_\ell+\wt Z_\ell \in \R^{\wt r_\ell}.$
Here $(U_1,\ldots,U_m,\wt U_1,\ldots,\wt U_{\wt m})\sim\mu$, $V_h\sim\nu_h$, and the noise vectors
$\{Z^{\mathrm{pc}}_{L,h}\}_{h\in[m]}$,
$\{Z^{\mathrm{pc}}_{R,h}\}_{h\in[m]}$, and
$\{\wt Z_\ell\}_{\ell\in[\wt m]}$
have independent standard normal entries and are mutually independent of the latent variables. Furthermore, the alignment matrices $\{(\bm M_h^\star,\bm\Sigma_h^\star):h\in[m],\star\in\{L,R\}\}$ are defined as follows:
\begin{align}
\label{eq:define_alignement_matrices_2}
\bm M^{\star}_h
= \mathsf{diag}(m^{\star}_{1,h},\ldots,m^{\star}_{r_h,h}),
\quad
\bm \Sigma^{\star}_h
= \mathsf{diag}(\sigma^{\star}_{1,h},\ldots,\sigma^{\star}_{r_h,h}),
\quad \star \in \{L,R\},
\end{align}
where
\begin{align}
\label{eq:start_pca_state_evol}
\sigma^L_{i,h}&:= \frac{1 + (\bm D_h^2)_{ii}}
{(\bm D_h^2)_{ii}\{1 + \gamma_h (\bm D_h^2)_{ii}\}},&\
\sigma^R_{i,h}&:= \frac{1 + \gamma_h (\bm D_h^2)_{ii}}
{\gamma_h (\bm D_h^2)_{ii}\{1 + (\bm D_h^2)_{ii}\}},\\
m^L_{i,h}&:= \sqrt{1 - \sigma^L_{i,h}},&\ m^R_{i,h}&:= \sqrt{1 - \sigma^R_{i,h}},
\end{align}
for all $h \in [m]$ and $i \in [r_h]$. Moreover, on an appropriate common probability space, there exist mutually independent matrices $\{\bm Z^{\mathrm{pc}}_{L,h}:h\in[m]\}$ with i.i.d. standard Gaussian entries such that, for every $h\in[m]$,
\begin{align}
\frac{1}{n}\left\|\bm U_h^{\mathrm{pc}}-\bm U_h(\bm M_h^{L})^\top-
\bm Z^{\mathrm{pc}}_{L,h}(\bm\Sigma_h^{L})^{1/2}
\right\|_F^2\xrightarrow{\mathrm{a.s.}}0.
\label{eq:pc_frobenius_coupling}
\end{align}
\end{prop}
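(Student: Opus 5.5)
The plan is to analyse each modality separately with classical spiked random-matrix results, and then obtain the joint statement across modalities from the rotational invariance of the Gaussian noise. The argument has three parts:
\begin{enumerate}
\item deterministic overlap and norm limits for each $\bm U^{\mathrm{pc}}_h$ and $\bm V^{\mathrm{pc}}_h$;
\item a Haar representation of the residual after removing the signal part;
\item a Gaussian coupling, which gives \eqref{eq:pc_frobenius_coupling} and then the pseudo-Lipschitz limits by a law of large numbers.
\end{enumerate}

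\textbf{Step 1: overlaps.} Fix $h$. Conditional on $(\bm U_h,\bm V_h)$, the matrix $\wb{\bm X}_h$ is a finite-rank additive perturbation of a Gaussian matrix. By Assumption~\ref{assu:init}, $(\bm U_h)^\top\bm U_h/n\to\bm I_{r_h}$ and $(\bm V_h)^\top\bm V_h/p_h\to\bm I_{r_h}$ almost surely. The supercritical condition $(\bm D_h)_{kk}>\gamma_h^{-1/4}$ holds and the spikes are distinct. Hence the results of \citet{benayech_nadakuditi} and \citet{paul2007asymptotics} give
\[
\tfrac1n(\bm U_h)^\top\bm U^{\mathrm{pc}}_h\to\bm M^L_h,
\]
which is diagonal with the entries in \eqref{eq:start_pca_state_evol}. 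The off-diagonal overlaps vanish because the singular values are distinct, and the signs are fixed by the alignment convention. The analogous statement holds for $\bm V^{\mathrm{pc}}_h$ with $\bm M^R_h$. Since $(\bm U^{\mathrm{pc}}_h)^\top\bm U^{\mathrm{pc}}_h=n\bm I$, the residual
\[
\bm W_h:=\bm U^{\mathrm{pc}}_h-\bm U_h(\bm M^L_h)^\top
\]
satisfies $\bm W_h^\top\bm W_h/n\to\bm I-(\bm M^L_h)^2=\bm\Sigma^L_h$, up to negligible cross terms.

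\textbf{Step 2: rotational invariance.} Condition on all latents $\{\bm U_e\},\{\wt{\bm U}_\ell\},\{\bm V_e\}$, on the noises $\{\bm Z_e\}_{e\neq h}$ and $\{\wt{\bm Z}_\ell\}$, and on the row space of $\bm Z_h$ data as needed. Let $\bm O_h$ be Haar-distributed on orthogonal $n\times n$ matrices fixing $\mathrm{span}(\bm U_h)$, drawn independently of everything else.

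Since $\bm Z_h$ is i.i.d.\ Gaussian and independent of the conditioning variables, $\bm O_h\bm Z_h\overset{d}{=}\bm Z_h$ jointly with everything else. Moreover, $\bm O_h\bm X_h$ has the same signal part as $\bm X_h$, and the left singular vectors transform as $\bm U^{\mathrm{pc}}_h\mapsto\bm O_h\bm U^{\mathrm{pc}}_h$. It follows that $\bm W_h$, projected off $\mathrm{span}(\bm U_h)$, is distributed as a uniformly rotated $r_h$-frame in that $(n-r_h)$-dimensional complement. This frame is independent of all other modalities and of $\wt{\bm X}$.

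Doing this sequentially for $h=1,\dots,m$ with independent $\bm O_h$ gives the joint independence of the residual frames across modalities. The component of $\bm W_h$ along $\mathrm{span}(\bm U_h)$ is $o(\sqrt n)$ in Frobenius norm by Step~1.

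\textbf{Step 3: Gaussian coupling and the law of large numbers.} Realize each Haar frame as
\[
(\bm I-\bm P_{\bm U_h})\bm G_h\,\bigl(\bm G_h^\top(\bm I-\bm P_{\bm U_h})\bm G_h/n\bigr)^{-1/2}\bigl(\bm W_h^\top\bm W_h/n\bigr)^{1/2},
\]
where the $\bm G_h$ are independent $n\times r_h$ standard Gaussian matrices. The projection $\bm P_{\bm U_h}$ has rank $r_h$, so $\|\bm P_{\bm U_h}\bm G_h\|_F^2=O(1)$. The normalizing matrices converge to $\bm I$ and to $(\bm\Sigma^L_h)^{1/2}$, respectively. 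Setting $\bm Z^{\mathrm{pc}}_{L,h}:=\bm G_h$ therefore yields \eqref{eq:pc_frobenius_coupling}.

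For a pseudo-Lipschitz $\varphi$, write $\Phi$ for the empirical average of $\varphi$ evaluated at the rows of $(\bm U^{\mathrm{pc}}_e,\wt{\bm X}_\ell)$, and $\Phi^{\mathrm{cp}}$ for the same average at the rows of the coupled surrogate $(\bm U_e(\bm M^L_e)^\top+\bm Z^{\mathrm{pc}}_{L,e}(\bm\Sigma^L_e)^{1/2},\wt{\bm X}_\ell)$. The pseudo-Lipschitz property and Cauchy--Schwarz give
\[
|\Phi-\Phi^{\mathrm{cp}}|\lesssim\Bigl(1+\tfrac1{\sqrt n}\|\cdot\|_F\Bigr)\cdot\tfrac1{\sqrt n}\|\bm W\text{-error}\|_F\to0.
\]
The rows of $(\bm U_e,\wt{\bm U}_\ell,\bm Z^{\mathrm{pc}}_{L,e},\wt{\bm Z}_\ell)$ are i.i.d.\ with finite second moments. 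The strong law of large numbers for order-2 pseudo-Lipschitz functions then gives the claimed limit. The statement for $\bm V^{\mathrm{pc}}_h$ is proved identically, using right rotations fixing $\mathrm{span}(\bm V_h)$.

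\textbf{Main obstacle.} The hard step is Step~2. The rotation must be chosen so that it preserves the law of the full joint data, including the dependent latents of other modalities and the low-dimensional $\wt{\bm X}_\ell$, while the singular-vector map remains equivariant. It is tempting to let $\bm O_h$ fix the span of all latents; the plan instead fixes only $\mathrm{span}(\bm U_h)$. The residual then has vanishing inner products with the other $\bm U_e$ and with $\wt{\bm X}_\ell$, so that cross-modal dependence in the limit enters only through $\mu$, as the formula requires. I would check this equivariance carefully, including the sign convention when singular values are distinct.
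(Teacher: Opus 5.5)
Your proof is correct, and it takes a different route from the one the paper relies on. The paper gives no argument of its own here; the proposition is imported from OrchAMP \citep{nandy2024multimodal}. In that line of work (EB-PCA \citep{zhong2022empirical}, building on \citet{montanari2021estimation}), the Gaussian representation of the spectral initializer is obtained by showing that the principal components are close to iterates of an auxiliary, oracle-initialized AMP. State evolution then supplies the Gaussian term, and cross-modal independence is inherited from the independence of $\bm Z_1,\ldots,\bm Z_m$.

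You instead combine the overlap limits of \citet{benayech_nadakuditi} with a Paul-type invariance argument \citep{paul2007asymptotics}. Because $\bm O_h$ fixes $\mathrm{span}(\bm U_h)$ pointwise, it leaves the signal of $\bm X_h$ and every other variable untouched and preserves the sign convention. Hence the residual frame is conditionally uniform on $\mathrm{span}(\bm U_h)^\perp$ given all latents, all other noises, and $\wt{\bm X}_1,\ldots,\wt{\bm X}_{\wt m}$. This is more elementary and self-contained. It makes the independence of the $\bm Z^{\mathrm{pc}}_{L,h}$ from each other and from the dependent latents transparent, and it yields the coupling \eqref{eq:pc_frobenius_coupling} directly. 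What it does not give is information about how $\bm U^{\mathrm{pc}}_h,\bm V^{\mathrm{pc}}_h$ interact with $\bm Z_h$ itself (for instance the behaviour of $\wb{\bm X}_h\bm V^{\mathrm{pc}}_h$). The AMP route is designed to deliver exactly that, and it is needed once the PCs seed the AMP iterations in Theorem~\ref{thm:w_2_amp}. For the proposition as stated, your route suffices.

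Three details to tighten:
\begin{enumerate}
\item To get an almost-sure coupling with the actual $\bm U^{\mathrm{pc}}_h$, build the Gaussian from the frame rather than the frame from a Gaussian. Let $\bm Q_h$ be the polar factor of $(\bm I-\bm P_{\bm U_h})\bm U^{\mathrm{pc}}_h$ and set $\bm Z^{\mathrm{pc}}_{L,h}:=\bm Q_h\bm S_h^{1/2}+\bm P_{\bm U_h}\bm G_h'$. Here $\bm S_h$ is an independent $r_h\times r_h$ Wishart matrix with $n-r_h$ degrees of freedom and $\bm G_h'$ is an independent standard Gaussian matrix. Conditionally on $\bm U_h$ this is a standard Gaussian matrix, hence independent of everything else.
\item These matrices are rebuilt for every $n$, so the last step is a law of large numbers for a triangular array, not the classical SLLN. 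It follows from a conditional fourth-moment bound given the latents, using $\max_{i\le n}\|(\bm U_h)_{i*}\|_2^2=o(n)$ almost surely, together with Borel--Cantelli.
\item $\bm U_h$ and $\bm V_h$ are only asymptotically orthogonal. Apply \citet{benayech_nadakuditi} after orthonormalizing and pass to the limit by continuity. The deterministic-spike version of their result relies on bi-orthogonal invariance of the noise, which holds for Gaussian $\bm Z_h$.
\end{enumerate}
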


The matrices $\{(\bm M_h^\star,\bm\Sigma_h^\star):h\in[m],\star\in\{L,R\}\}$ in \eqref{eq:start_pca_state_evol} are approximated by $\{(\wh{\bm M}_h^\star,\wh{\bm\Sigma}_h^\star):h\in[m],\star\in\{L,R\}\}$ (defined in \eqref{eq:nuisance_param_est_1}) by plugging the estimator $\wh{\bm D}_h$ from \eqref{eq:approx_snr} into \eqref{eq:start_pca_state_evol}. For the low-dimensional modalities, we additionally estimate the loading matrices $\{\bm L_\ell:\ell\in[\wt m]\}$ using $\smash{\wh{\bm L}_\ell :=
\left(\frac{1}{N}\tX_\ell^\top \tX_\ell - \bm I_{\wt r_\ell}\right)^{1/2}}$.
The following lemma establishes consistency of these nuisance parameter estimates.
\begin{lem}
\label{lem:first_order_nuisance}
As $n\to\infty$, the following hold almost surely:
\begin{enumerate}
    \item For all $h\in[m]$, $\wh{\bm D}_h \xrightarrow{a.s} \bm D_h$, as $n \rightarrow \infty$.
    \item For all $\ell\in[\wt m]$, $\wh{\bm L}_\ell \xrightarrow{a.s} \bm L_\ell$, as $n \rightarrow \infty$.
    \item For all $h\in[m]$ and $\star\in\{L,R\}$, $\wh{\bm M}_h^\star \xrightarrow{a.s.} \bm M_h^\star$ and $\wh{\bm \Sigma}_h^\star \xrightarrow{a.s.} \bm \Sigma_h^\star$, as $n \rightarrow \infty$,
\end{enumerate}
where all the convergences hold in Frobenius norm.
\end{lem}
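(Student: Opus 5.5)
The plan is to prove the three items in order. Items 1 and 2 are consistency statements for the basic nuisance estimators. Item 3 then follows from them by the continuous mapping theorem, since $\wh{\bm M}^\star_h$ and $\wh{\bm\Sigma}^\star_h$ are obtained by plugging $\wh{\bm D}_h$ into the explicit formulas \eqref{eq:start_pca_state_evol}. Because every matrix involved has fixed dimension, convergence in Frobenius norm is the same as entrywise convergence.

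\emph{Item 1.} First I will identify the almost sure limit of the top singular values of $\wb{\bm X}_h$. Write the signal as $n^{-1}\bm U_h\bm D_h\bm V_h^\top=\sqrt{p_h/n}\,\bm u_h\bm D_h\bm v_h^\top$, with $\bm u_h:=\bm U_h/\sqrt n$ and $\bm v_h:=\bm V_h/\sqrt{p_h}$. By Assumption~\ref{assu:init} and the strong law of large numbers, $\bm u_h^\top\bm u_h\to\bm I_{r_h}$ and $\bm v_h^\top\bm v_h\to\bm I_{r_h}$ almost surely.
\begin{itemize}
\item Replace $\bm u_h,\bm v_h$ by their polar (orthonormalized) factors. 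By Weyl's inequality this changes all singular values by $o(1)$ almost surely, because $\|\bm Z_h/\sqrt n\|_{\op}$ is almost surely bounded.
\item The result is a rank-$r_h$ perturbation with orthonormal factors and singular values $\theta_{i}:=\sqrt{\gamma_h}(\bm D_h)_{ii}$, plus the noise $\bm Z_h/\sqrt n$. This noise is bi-orthogonally invariant and independent of the signal, so the theorem of \citet{benayech_nadakuditi} applies.
\item It gives
\[
(\dpca_h)^2_{ii}\xrightarrow{\mathrm{a.s.}}\frac{(1+\theta_i^2)(\gamma_h+\theta_i^2)}{\theta_i^2}
\]
whenever $\theta_i>\gamma_h^{1/4}$. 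This is exactly the condition $(\bm D_h)_{ii}>\gamma_h^{-1/4}$ of Assumption~\ref{assu:init}.
\end{itemize}

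Second, write $x=(\bm D_h^2)_{ii}$ and $s$ for the limit of $(\dpca_h)^2_{ii}$. The limit relation becomes the quadratic $\gamma_h x^2-[\gamma_h s-(1+\gamma_h)]x+1=0$. Above the threshold, the larger root of this quadratic is $x$, and that larger root is precisely the right-hand side of \eqref{eq:approx_snr}. The map from $s$ to this root is continuous on the supercritical range, where the discriminant is strictly positive. Since $(\dpca_h)^2_{ii}$ converges almost surely to a point where the discriminant is strictly positive, the empirical discriminant is positive for all large $n$, and the continuous mapping theorem gives $\wh{\bm D}_h\to\bm D_h$ almost surely.

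I expect this step to be the main obstacle. It requires checking carefully that the scaling conventions line up, that the perturbation from non-orthonormal $\bm U_h,\bm V_h$ is negligible, and that the correct root of the quadratic is selected.

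\emph{Item 2.} Expand the Gram matrix:
\[
\frac1n\tX_\ell^\top\tX_\ell=\bm L_\ell\Big(\tfrac1n\tU_\ell^\top\tU_\ell\Big)\bm L_\ell+\tfrac1n\bm L_\ell\tU_\ell^\top\wt{\bm Z}_\ell+\tfrac1n\wt{\bm Z}_\ell^\top\tU_\ell\bm L_\ell+\tfrac1n\wt{\bm Z}_\ell^\top\wt{\bm Z}_\ell .
\]
The rows of $\tU_\ell$ are i.i.d.\ with second moment $\bm I_{\wt r_\ell}$ (Assumption~\ref{assu:init}), and they are independent of the standard Gaussian noise $\wt{\bm Z}_\ell$. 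The strong law of large numbers therefore gives convergence to $\bm L_\ell^2+\bm I_{\wt r_\ell}$ almost surely. So $\frac1n\tX_\ell^\top\tX_\ell-\bm I\to\bm L_\ell^2\succ0$, and the matrix being square-rooted is positive definite for large $n$. The PSD square root is continuous, and $(\bm L_\ell^2)^{1/2}=\bm L_\ell$ because $\bm L_\ell$ is symmetric positive definite. Hence $\wh{\bm L}_\ell\to\bm L_\ell$ almost surely. I read the normalization $N$ in the definition of $\wh{\bm L}_\ell$ as the sample size $n$.

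\emph{Item 3.} The maps $x\mapsto\sigma^L(x)$ and $x\mapsto\sigma^R(x)$ in \eqref{eq:start_pca_state_evol} are continuous for $x>0$. A direct computation shows that $\sigma^L,\sigma^R<1$ exactly when $\gamma_h x^2>1$, which holds at $x=(\bm D_h^2)_{ii}$ by the threshold assumption. So $x\mapsto\sqrt{1-\sigma^\star(x)}$ is continuous in a neighborhood of $(\bm D_h^2)_{ii}$. Combining this with item 1 and the continuous mapping theorem gives $\wh{\bm\Sigma}^\star_h\to\bm\Sigma^\star_h$ and $\wh{\bm M}^\star_h\to\bm M^\star_h$ almost surely.
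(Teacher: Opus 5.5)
Your route is the same as the paper's, which simply defers to Lemma~5.1 of \cite{nandy2024multimodal}: almost-sure limits of the top singular values, inversion of the limit map, the strong law for the low-dimensional Gram matrix, and continuous mapping. Your items 2 and 3 are fine. The gap is in the inversion step of item 1, at exactly the scaling issue you flagged.

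With $\theta_i=\sqrt{\gamma_h}(\bm D_h)_{ii}$ and $x=(\bm D_h^2)_{ii}$, the limit you derived is $s=(1+\gamma_h x)(1+x)/x$, i.e.\ $sx=(1+\gamma_h x)(1+x)$. This is the quadratic $\gamma_h x^2-[s-(1+\gamma_h)]x+1=0$. The quadratic you wrote has $\gamma_h s$ in place of $s$. Evaluated at the true $x$, it equals $(1-\gamma_h)\{\gamma_h x^2+(1+\gamma_h)x+1\}$, which is nonzero unless $\gamma_h=1$. So if $\dpca_h$ really were the singular values of $\wb{\bm X}_h$, as you take it, then $\wh{\bm D}^2_h$ from \eqref{eq:approx_snr} would converge to the larger root of your quadratic, not to $(\bm D_h^2)_{ii}$. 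For example, $\gamma_h=2$, $x=1$ gives $s=6$ and a limit of about $4.39$.

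The repair is to read the scale of $\dpca_h$ off \eqref{eq:low_rank_init}, not off the phrase ``top singular values of $\wb{\bm X}_h$''. There, $n^{-1}\upca_h\dpca_h(\vpca_h)^\top$, with $(\upca_h)^\top\upca_h=n\bm I_{r_h}$ and $(\vpca_h)^\top\vpca_h=p_h\bm I_{r_h}$, is the rank-$r_h$ truncation of $\wb{\bm X}_h$. Hence $(\dpca_h)_{ii}=\sqrt{n/p_h}\,\sigma_i(\wb{\bm X}_h)$, and $\gamma_h(\dpca_h)^2_{ii}\to(1+\gamma_h x)(1+x)/x=\gamma_h x+(1+\gamma_h)+x^{-1}$ almost surely. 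If $s$ denotes the limit of $(\dpca_h)^2_{ii}$ in this normalization, your quadratic is exactly right.

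You should also justify the root selection rather than assert it:
\begin{itemize}
\item The roots are $x$ and $1/(\gamma_h x)$, since their product is $1/\gamma_h$.
\item The threshold gives $x>\gamma_h^{-1/2}$, so $x$ is the larger root.
\item The discriminant equals $(\gamma_h x-x^{-1})^2>0$, so continuity then yields $\wh{\bm D}_h\to\bm D_h$.
\end{itemize}

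A minor point on your Weyl step: the $o(1)$ bound comes from the operator norm of the change in the signal matrix, which vanishes because $\bm u_h^\top\bm u_h,\bm v_h^\top\bm v_h\to\bm I_{r_h}$. Boundedness of $\|\bm Z_h/\sqrt n\|_{\op}$ plays no role.
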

\begin{proof}
    The proof of the above theorem follows using Proposition~\ref{prop:singular_vect_approx} and the techniques adopted to prove Lemma~5.1 of \cite{nandy2024multimodal}.
\end{proof}

\subsection{Clustering of Modalities}
In this section, we analyze the theoretical properties underlying the success of the hierarchical clustering algorithm to recover the modality clusters $\mathcal C_1,\ldots,\mathcal C_K$. In that direction, we assume the following condition on the independence measure, which ensures that the misclassification error asymptotically goes to zero.

\begin{assumption}[Clustering assumptions]
\label{assu:clust}
For every $e,e'\in\mathcal E$, let $s_{ee'}\in[0,1]$ denote a
deterministic population affinity score. We assume that there exists
$\Delta_{\mathrm{sep}}>0$ such that
\begin{align}
\label{eq:popluation_separation}
\inf_{k\in[K]}\inf_{\substack{e,e'\in\mathcal C_k\\e\neq e'}}
s_{ee'}\geq\Delta_{\mathrm{sep}},\quad \mbox{and} \quad \sup_{\substack{k\neq k'\\e\in\mathcal C_k,\,e'\in\mathcal C_{k'}}}s_{ee'}=0.
\end{align}
\end{assumption}

\begin{assumption}[Uniform convergence of empirical affinities]
\label{assu:uniform_affinity}
Let $\widehat s_{ee'}$ denote the empirical affinity used to construct
the dissimilarity matrix in \eqref{eq:dissimilarity_matrix}. We assume
that
\begin{align}
\label{eq:uniform_affinity_convergence}
\max_{e,e'\in\mathcal E}\left|\widehat s_{ee'}-s_{ee'}\right|
\overset{\mathrm{a.s.}}{\longrightarrow}0, \quad \mbox{where $s_{ee'}$ satisfies \eqref{eq:popluation_separation}.}
\end{align}
\end{assumption}
For Gaussian-kernel CKA used in our implementation, Assumption~\ref{assu:uniform_affinity} is verified in the following lemma using its V-statistic representation.

\begin{lem}
\label{lem:gaussian_example_e_e}
Consider the random vectors $\{Y^{\mathrm{pc}}_h:h \in [m]\}$ and $\{\wt X_\ell:\ell \in [\wt m]\}$ defined in Proposition~\ref{prop:singular_vect_approx}. For a unified notation, let us denote $U^{\mathrm{init}}_e:=Y^{\mathrm{pc}}_e$ if $e \in \mathcal E_{\mathrm H}$, and $U^{\mathrm{init}}_e:=\wt X_{e-m}$ if $e \in \mathcal E_{\mathrm L}$. For each $e,e'\in\mathcal E$, define the empirical affinity
\begin{align}
\label{eq:defn_using_cka}
\widehat s_{ee'}:=\mathrm{CKA}\left(\widehat{\bm U}^{\mathrm{init}}_e,
\widehat{\bm U}^{\mathrm{init}}_{e'}
\right)
\end{align}
and the corresponding population affinity as
\begin{align}
\label{eq:defn_popn_cka}
s_{ee'}:=\frac{\mathsf H_{e,e'}}{\sqrt{\mathsf H_{e,e}\mathsf H_{e',e'}}},
\end{align}
where for independent copies $(\mathrm X_e,\mathrm X_{e'})$ and $(\mathrm Y_{e},\mathrm Y_{e'})$ of $(U^{\mathrm{init}}_e,U^{\mathrm{init}}_{e'})$ we have
\begin{align}
\mathsf H_{e,e'}&:=  \mathbb E_{\mathrm X_e,\mathrm Y_e,\mathrm X_{e'},\mathrm Y_{e'}}\left[\rsf_{\varrho}(\mathrm X_e,\mathrm Y_e)\rsf_{\varrho}(\mathrm X_{e'},\mathrm Y_{e'})\right]+\mathbb E_{\mathrm X_e,\mathrm Y_e}\left[\rsf_{\varrho}(\mathrm X_e,\mathrm Y_e)\right]E_{\mathrm X_{e'},\mathrm Y_{e'}}\left[\rsf_{\varrho}(\mathrm X_{e'},\mathrm Y_{e'})\right]\\
&~~~~-2\,\mathbb E_{\mathrm X_e,X_{e'}}\left[\mathbb E_{\mathrm Y_e}\left[\rsf_{\varrho}(\mathrm X_e,\mathrm Y_e)\right]\times \mathbb E_{\mathrm Y_{e'}}\left[\rsf_{\varrho}(\mathrm X_{e'},\mathrm Y_{e'})\right]\right].
\end{align}
In the foregoing expression, for two vectors $x,y$ of same dimension, $\rsf_{\varrho}(x,y):=\exp(-\|x-y\|^2/(2\varrho^2))$. Then
\[
\widehat s_{ee'}\overset{\mathrm{a.s.}}{\longrightarrow}
s_{ee'}.
\]
Since $|\mathcal E|=m+\widetilde m=O(1)$, the above convergence further implies
\[
\max_{e,e'\in\mathcal E}\left|\widehat s_{ee'}-s_{ee'}\right|\overset{\mathrm{a.s.}}{\longrightarrow}0.
\]
\end{lem}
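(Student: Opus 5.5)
The plan is to write each biased HSIC as a V-statistic in the rows of $(\widehat{\bm U}^{\mathrm{init}}_e,\widehat{\bm U}^{\mathrm{init}}_{e'})$, show that it converges almost surely to $\mathsf H_{e,e'}$, and then take the ratio. Expanding $\mathrm{Tr}(\bm H\ksf_e\bm H\bm H\ksf_{e'}\bm H)/n^2$ with $\bm H^2=\bm H$ gives the three-term form
\[
\mathsf{HSIC}_{e,e'}=\frac{1}{n^2}\sum_{i,j}(\ksf_e)_{ij}(\ksf_{e'})_{ij}+\Big(\frac{1}{n^2}\sum_{i,j}(\ksf_e)_{ij}\Big)\Big(\frac{1}{n^2}\sum_{i,j}(\ksf_{e'})_{ij}\Big)-\frac{2}{n^3}\sum_{i,j,l}(\ksf_e)_{ij}(\ksf_{e'})_{il}.
\]
This matches term by term the three expectations defining $\mathsf H_{e,e'}$. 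All of these statistics are bounded by $1$, because $0\le\rsf_\varrho\le1$.

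\textbf{Step 1: replace the empirical rows by i.i.d.\ surrogates.} For $e\in\mathcal E_{\mathrm L}$ the rows of $\wt{\bm X}_\ell$ are already i.i.d.\ copies of $\wt X_\ell$. For $e\in\mathcal E_{\mathrm H}$, use the coupling \eqref{eq:pc_frobenius_coupling} of Proposition~\ref{prop:singular_vect_approx}. Set $\bm Y_h:=\bm U_h(\bm M^L_h)^\top+\bm Z^{\mathrm{pc}}_{L,h}(\bm\Sigma^L_h)^{1/2}$. Its rows, jointly with those of $\{\wt{\bm X}_\ell\}$, are i.i.d.\ copies of $(Y^{\mathrm{pc}}_1,\ldots,Y^{\mathrm{pc}}_m,\wt X_1,\ldots,\wt X_{\wt m})$, since the prior rows are i.i.d.\ and the Gaussian noise is independent. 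The map $x\mapsto\exp(-\|x\|^2/(2\varrho^2))$ is Lipschitz with constant $c_\varrho$, so $|(\ksf_e)_{ij}-(\ksf^{Y}_e)_{ij}|\le c_\varrho(\|\delta_i\|+\|\delta_j\|)$, where $\delta_i$ is the $i$-th row of $\widehat{\bm U}^{\mathrm{init}}_e-\bm Y_e$. Each of the three normalized sums is an average of products of kernel entries bounded by $1$. Hence the perturbation of each sum is at most $C\,n^{-1}\sum_i\|\delta_i\|\le C\,(n^{-1}\|\widehat{\bm U}^{\mathrm{init}}_e-\bm Y_e\|_F^2)^{1/2}\to0$ almost surely. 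The same argument in the other slot handles $e'$.

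\textbf{Step 2: strong law for V-statistics.} With i.i.d.\ rows and bounded kernels, each term is a V-statistic of order $2$ or $3$. The diagonal terms, where indices coincide, contribute $O(1/n)$. The corresponding U-statistics converge almost surely to their means by Hoeffding's strong law for U-statistics. The limits are exactly the three expectations in $\mathsf H_{e,e'}$, where the last term uses independence of the pairs $(\mathrm X,\mathrm Y)$ for the inner conditional expectations. This gives $\mathsf{HSIC}_{e,e'}\to\mathsf H_{e,e'}$ almost surely, including the case $e=e'$.

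\textbf{Step 3: the ratio.} The continuous mapping theorem gives $\widehat s_{ee'}\to s_{ee'}$, provided $\mathsf H_{e,e}>0$. This is the step I expect to need care. Here $\mathsf H_{e,e}$ is the squared Hilbert–Schmidt norm of the centered covariance operator of $U^{\mathrm{init}}_e$ in the Gaussian RKHS. It vanishes only if $U^{\mathrm{init}}_e$ is almost surely constant. That is excluded because the $\dnorm$ noise component has positive covariance: $\bm\Sigma^L_h\succ0$ under Assumption~\ref{assu:init}(4), and $\wt Z_\ell$ has identity covariance. Finally, $|\mathcal E|$ is finite, so the maximum over the finitely many pairs also converges almost surely to $0$.
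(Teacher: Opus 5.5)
Your proposal is correct and follows essentially the same route as the paper: the same three-term $A_n+B_n-2C_n$ V-statistic decomposition, the strong law for V-statistics applied to i.i.d.\ oracle rows, and a Lipschitz-kernel perturbation controlled by the Frobenius coupling \eqref{eq:pc_frobenius_coupling}. You bound the perturbation termwise in $\ell_1$ rather than through the paper's Frobenius/trace inequality, which is equivalent; your explicit check that $\mathsf H_{e,e}>0$, using the nondegenerate Gaussian noise component, is a worthwhile addition, since the paper appeals to continuity of the ratio without verifying that the denominator is nonzero.
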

Under Assumptions~\ref{assu:clust} and \ref{assu:uniform_affinity}, we have the following theorem about consistency of modality clustering.
\begin{thm}
\label{thm:consistent_cluster}
Consider the clusters $\widehat{\mathcal C}_1,\ldots,\widehat{\mathcal C}_K$
obtained by applying average-linkage hierarchical clustering to the
dissimilarity matrix $\mathfrak N_{ee'}=1-\widehat s_{ee'}$ for $e,e' \in \mathcal E$.
If $K,m,\wt m=O(1)$ and Assumptions~\ref{assu:clust} and \ref{assu:uniform_affinity} hold, then with probability 1
\[
\big\{\widehat{\mathcal C}_1,\ldots,\widehat{\mathcal C}_K\big\}=
\big\{\mathcal C_1,\ldots,\mathcal C_K\big\}
\]
Equivalently, with probability one, there exists $N_0<\infty$ such that, for every $n\geq N_0$, there exists a permutation $\Pi_n:[K] \to [K]$ satisfying
\[
\widehat{\mathcal C}_{\Pi_n(k)}=\mathcal C_k,
\quad \mbox{for all $k\in[K]$.}
\]
where for two sets $\mathrm A,\mathrm B$, we denote $\mathrm A\,\Delta\,\mathrm B:=(\mathrm A \cap \mathrm B^c) \cup (\mathrm A^c \cap \mathrm B)$.
\end{thm}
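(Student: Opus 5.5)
Since $K,m,\wt m=O(1)$ and the map from a dissimilarity matrix to the average-linkage dendrogram depends only on finitely many pairwise comparisons of linkage values, the plan is deterministic-on-a-good-event. Fix $\varepsilon:=\Delta_{\mathrm{sep}}/4$. By Assumption~\ref{assu:uniform_affinity}, with probability one there is $N_0<\infty$ such that for all $n\ge N_0$ we have $\max_{e,e'}|\widehat s_{ee'}-s_{ee'}|<\varepsilon$. On this event, using Assumption~\ref{assu:clust}, every within-cluster dissimilarity satisfies $\mathfrak N_{ee'}\le 1-\Delta_{\mathrm{sep}}+\varepsilon$ (for $e\ne e'$ in the same $\mathcal C_k$), while every between-cluster dissimilarity satisfies $\mathfrak N_{ee'}\ge 1-\varepsilon$. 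The rest of the argument is purely combinatorial and is carried out for a fixed realization with $n\ge N_0$.

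The key step is an invariant for the agglomerative procedure: at every merge stage, each current node is a subset of some true cluster $\mathcal C_k$ until exactly $K$ nodes remain, and at that point the nodes are exactly $\mathcal C_1,\ldots,\mathcal C_K$. I will prove this by induction on the merge steps. Suppose all current nodes are pure, meaning each is contained in a single true cluster, and there are more than $K$ of them. Then by pigeonhole some true cluster $\mathcal C_k$ contains at least two current nodes $A,B$. Their average linkage $d(A,B)$ is an average of within-cluster dissimilarities, so $d(A,B)\le 1-\Delta_{\mathrm{sep}}+\varepsilon<1-\varepsilon$. On the other hand, any pair of pure nodes lying in different true clusters has linkage equal to an average of between-cluster dissimilarities, hence at least $1-\varepsilon$. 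Therefore the minimizing pair, whatever the tie-breaking rule, lies within a single true cluster, and the merged node is again pure. Starting from singletons, which are trivially pure, the process reaches $K$ pure nodes. Since their union is $\mathcal E$, the sets are disjoint, and each is contained in some $\mathcal C_k$, the only possibility is that they coincide with the $\mathcal C_k$'s. Cutting the dendrogram at $K$ clusters therefore returns $\{\mathcal C_1,\ldots,\mathcal C_K\}$, and the permutation $\Pi_n$ is simply the labeling induced by the algorithm.

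The main point requiring care, which I expect to be the only real obstacle, is the strict gap $1-\Delta_{\mathrm{sep}}+\varepsilon<1-\varepsilon$ combined with the fact that average linkage preserves the bounds. Averages of numbers lying in an interval stay in that interval, so the choice $\varepsilon<\Delta_{\mathrm{sep}}/2$ suffices; I take $\varepsilon=\Delta_{\mathrm{sep}}/4$ for slack. Singleton clusters $\mathcal C_k$ contribute no within-cluster pairs and need no separate treatment. Because the exceptional null set comes only from Assumption~\ref{assu:uniform_affinity}, the conclusion holds with probability one for all $n\ge N_0$. For CKA, the required uniform convergence is supplied by Lemma~\ref{lem:gaussian_example_e_e}.
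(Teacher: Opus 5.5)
Your proof is correct and follows essentially the same route as the paper. Both restrict to the almost-sure event on which $\max_{e,e'}|\widehat s_{ee'}-s_{ee'}|$ falls below a fixed fraction of $\Delta_{\mathrm{sep}}$ (the paper uses $\Delta_{\mathrm{sep}}/3$ on similarities, you use $\Delta_{\mathrm{sep}}/4$ on dissimilarities), and then argue by induction over the agglomerative steps that average linkage only merges nodes lying in a common true cluster until $K$ clusters remain, with your pigeonhole step and final counting argument making explicit two points the paper leaves implicit.
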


\begin{proof}
Write
\[
s(A,B)
:=
\frac{1}{|A||B|}
\sum_{e\in A}\sum_{e'\in B}s_{ee'}
\]
for the population average similarity between two nonempty disjoint sets
\(A,B\subseteq \mathcal E\), and define its empirical counterpart
\[
\widehat s(A,B)
:=
\frac{1}{|A||B|}
\sum_{e\in A}\sum_{e'\in B}\widehat s_{ee'} .
\]

By Assumption~\ref{assu:uniform_affinity}, there exists an almost sure event
\(\Omega_0\) of probability one such that, on \(\Omega_0\),
\[
\max_{e,e'}|\widehat s_{ee'}-s_{ee'}|\to 0 .
\]
Then for every \(\omega\in\Omega_0\) and all nonempty disjoint
\(A,B\subseteq\mathcal E\),
\[
|\widehat s(A,B)-s(A,B)|
\le
\max_{e,e'}|\widehat s_{ee'}-s_{ee'}| .
\]
Hence, for all sufficiently large \(n\),
\[
\max_{A,B}
|\widehat s(A,B)-s(A,B)|
\le \frac{\Delta_{\mathrm{sep}}}{3},
\]
where the maximum is over all nonempty disjoint subsets \(A,B\subseteq \mathcal E\).

We shall prove the result by induction on the stage of hierarchical clustering. Observe that at
the initial stage, since the initial clusters are singletons, every empirical cluster (singleton) is contained in the true cluster up to permutation of cluster labels. Therefore for the initial stage the result holds. Now consider any stage of the hierarchical clustering algorithm and assume that every current empirical cluster is contained in one of the true clusters up to permutation of cluster labels. 

Let \(A\) and \(B\) be two current clusters contained in the same true cluster
\(\mathcal C_k\). Then, by Assumption~\ref{assu:clust},
\[
s(A,B)
=
\frac{1}{|A||B|}
\sum_{e\in A}\sum_{e'\in B}s_{ee'}
\ge \Delta_{\mathrm{sep}}.
\]
Therefore, for all sufficiently large \(n\),
\[
\widehat s(A,B)
\ge
s(A,B)-\frac{\Delta_{\mathrm{sep}}}{3}
\ge
\frac{2\Delta_{\mathrm{sep}}}{3}.
\]

On the other hand, if \(A\subseteq\mathcal C_k\) and
\(B\subseteq\mathcal C_{\ell}\) with \(k\neq \ell\), then by Assumption~\ref{assu:clust} $s(A,B)=0$. Hence
\[
\widehat s(A,B)\le \frac{\Delta_{\mathrm{sep}}}{3}.
\]

Thus, for all sufficiently large \(n\), for every pair of disjoint sets $A$ and $B$ within the true clusters the average similarity is strictly larger than the average similarity for every pair of disjoint sets $A$ and $B$ which lies in separate true clusters. Consequently, average-linkage clustering can only merge two current clusters that are contained in the same true cluster (up to permutation of labels).

By induction over the agglomerative steps, no merge ever crosses a true cluster boundary before all true clusters have been formed. Since the algorithm stops when \(K\) clusters remain, and the true partition has exactly
\(K\) clusters, the resulting partition must equal
\(\{\mathcal C_1,\ldots,\mathcal C_K\}\) (up to permutation of labels). This proves the claim on
\(\Omega_0\), and therefore the claim holds true.
\end{proof}

A natural corollary of the above theorem and Lemma~\ref{lem:gaussian_example_e_e} is the following.
\begin{cor}
If $\wh s_{ee'}$ is defined using \eqref{eq:defn_using_cka}, then 
then if $\mu$ satisfies Assumption~\ref{assu:clust} with $s_{ee'}$ defined by \eqref{eq:defn_popn_cka}, with probability one, there exists $N_0<\infty$ such that, for every $n\geq N_0$, there exists a permutation $\Pi_n:[K] \to [K]$ satisfying
\[
\widehat{\mathcal C}_{\Pi_n(k)}=\mathcal C_k,
\quad \mbox{for all $k\in[K]$.}
\]
\end{cor}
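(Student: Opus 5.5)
The corollary follows by feeding Lemma~\ref{lem:gaussian_example_e_e} into Theorem~\ref{thm:consistent_cluster}. My plan is to check that Assumptions~\ref{assu:clust} and~\ref{assu:uniform_affinity} hold for the CKA affinities defined in \eqref{eq:defn_using_cka}--\eqref{eq:defn_popn_cka}. I will then invoke the theorem on the resulting almost sure event.

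\emph{Assumption~\ref{assu:clust}.} This is granted directly by hypothesis: $\mu$ satisfies the separation condition \eqref{eq:popluation_separation} with $s_{ee'}$ the population CKA in \eqref{eq:defn_popn_cka}. One range check is needed, namely $s_{ee'}\in[0,1]$. This holds because $s_{ee'}$ is the normalized HSIC with a characteristic (Gaussian) kernel, which is a cosine between centered mean embeddings in the tensor RKHS. I would also record that $\mathsf H_{e,e}>0$, so the normalization is well defined. This follows because $U^{\mathrm{init}}_e$ is nondegenerate: it contains a Gaussian component with positive-definite covariance $(\bm\Sigma^L_e)^{1/2}$, respectively identity.

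\emph{Assumption~\ref{assu:uniform_affinity}.} Lemma~\ref{lem:gaussian_example_e_e} gives $\widehat s_{ee'}\to s_{ee'}$ almost surely for each pair. Since $|\mathcal E|=m+\wt m=O(1)$, a finite intersection of probability-one events yields $\max_{e,e'}|\widehat s_{ee'}-s_{ee'}|\to0$ almost surely.

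With both assumptions in place, I apply Theorem~\ref{thm:consistent_cluster}, using $K,m,\wt m=O(1)$. On the probability-one event $\Omega_0$ where the uniform convergence holds, the theorem's proof gives an $N_0(\omega)$ past which all empirical cross-cluster average linkages are at most $\Delta_{\mathrm{sep}}/3$ and all within-cluster ones are at least $2\Delta_{\mathrm{sep}}/3$. The average-linkage merges therefore never cross true cluster boundaries. Cutting at $K$ clusters then returns the true partition up to a label permutation $\Pi_n$, which is exactly the conclusion stated.

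The main obstacle, were Lemma~\ref{lem:gaussian_example_e_e} not already available, would be the almost sure convergence of empirical CKA. The difficulty is that the rows of $\bm U^{\mathrm{pc}}_h$ are not i.i.d.\ draws of $Y^{\mathrm{pc}}_h$; they are only close in normalized Frobenius norm, via \eqref{eq:pc_frobenius_coupling}. I would handle this in two steps. First, the Gaussian kernel is bounded and $\varrho^{-1}$-Lipschitz, so the biased HSIC, being a V-statistic of bounded Lipschitz kernels, moves by $O(n^{-1/2}\|\cdot\|_F)$ under the coupling perturbation. Second, the SLLN for V-statistics applies to the i.i.d.\ surrogate rows $\bm U_h(\bm M^L_h)^\top+\bm Z^{\mathrm{pc}}_{L,h}(\bm\Sigma^L_h)^{1/2}$, together with $\wt{\bm X}_\ell$. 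Since the lemma is taken as given, the corollary itself reduces to the bookkeeping above.
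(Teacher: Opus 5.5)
Your proposal is correct and follows the paper's (implicit) argument. The paper likewise obtains the corollary by using Lemma~\ref{lem:gaussian_example_e_e} to supply Assumption~\ref{assu:uniform_affinity}, taking Assumption~\ref{assu:clust} as the hypothesis, and invoking Theorem~\ref{thm:consistent_cluster}, and your sketch of the lemma (Lipschitz RBF perturbation via the Frobenius coupling plus the SLLN for V-statistics on i.i.d.\ surrogate rows) is also its route. Your extra checks that $\mathsf H_{e,e}>0$ and $s_{ee'}\in[0,1]$ are harmless additions the paper omits, though the range is more accurately justified by $\mathsf H_{e,e'}\ge 0$ (a squared Hilbert--Schmidt norm) and Cauchy--Schwarz for the centered kernels in $L^2$ of the product measure, rather than by the ``cosine between mean embeddings'' description.
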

This corollary justifies the use of CKA with Gaussian kernel in \fancyname{}.

\subsection{Proof of Lemma~\ref{lem:gaussian_example_e_e}}
First define matrices $\bm U^{\mathrm{init}}_e \in \R^{n \times r_h}$ where the rows $(\bm U^{\mathrm{init}}_e)_{i*}$ are i.i.d. copies of $U^{\mathrm{init}}_e$ defined in Lemma~\ref{lem:gaussian_example_e_e} and consider the surrogate kernel
\begin{equation}
  (\ck_\star)_{ij} = \exp\!\left(-\frac{\left\|(\bm U^{\mathrm{init}}_\star)_{i*} - (\bm U^{\mathrm{init}}_\star)_{j*}\right\|^2}{2\varrho^2}\right), \quad \mbox{where $\star \in \{e,e'\}$,}
\end{equation}
using the oracle random vectors. 
Let $\wt{\ck}_e = \bm H \ck_e \bm H$, where $\bm H
=
I_n-\frac1n\bm 1\bm 1^\top$, and consider the centered HSIC estimator between modalities $e$ and $e'$ defined as
\begin{equation}
  \wt{\mathsf{HSIC}}_{e,e'}
  \;=\;
  \frac{1}{n^2}\,\mathrm{Tr}\!\left(\wt{\ck}_e\wt{\ck}_{e'}\right),
\end{equation}
Based on the aforementioned definition, the population CKA between modalities indexed by $e$ and $e'$ is defined as:
\begin{equation}
\label{eq:def_cka_tild}
  \wt{\mathrm{CKA}}_{e,e'}
  \;:=\;
  \frac{\wt{\mathsf{HSIC}}_{e,e'}}
       {\sqrt{\wt{\mathsf{HSIC}}_{e,e}\,
              \wt{\mathsf{HSIC}}_{e',e'}}}.
\end{equation}
Let us fix arbitrary $e,e' \in [m+\wt m]$. We prove that
\begin{align}
        \wt{\mathrm{CKA}}_{e,e'} \xrightarrow{a.s} \frac{\mathsf H_{e,e'}}{\sqrt{\mathsf H_{e,e}\mathsf H_{e',e'}}}, \quad \mbox{as $n \rightarrow \infty$.}
        \label{eq:mod_CKA}
    \end{align}
Note that it is enough to prove that $\wt{\mathsf{HSIC}}_{e,e'}\xrightarrow{a.s.}\mathsf H_{e,e'}$, as the convergence of the diagonal terms $H_{e,e}$ and $H_{e',e'}$ can be shown similarly. Then the conclusion follows using the continuity of the normalized ratio on the right hand side of \eqref{eq:mod_CKA}. Let
$
\left\{\left(U_{i,e}^{\mathrm{init}},U_{i,e'}^{\mathrm{init}}\right)\right\}_{i=1}^n
$
be i.i.d.\ copies of
$
\left(
U_e^{\mathrm{init}},
U_{e'}^{\mathrm{init}}
\right)
$.

Now note that:
\[
\wt{\mathsf{HSIC}}_{e,e'}
=
\frac1{n^2}\mathrm{Tr}(\bm H\ck_e\bm H\ck_{e'}),
\qquad
\bm H
=
I_n-\frac1n\bm 1\bm 1^\top.
\]
Therefore
\[
\bm H\ck_e\bm H
=
\left(I_n-\frac1n\bm 1\bm 1^\top\right)
\ck_e
\left(I_n-\frac1n\bm 1\bm 1^\top\right),
\]
and hence
\begin{align*}
\wt{\mathsf{HSIC}}_{e,e'}
&=
\frac1{n^2}
\mathrm{Tr}\!\left[
\left(
\ck_e
-\frac1n\bm 1\bm 1^\top\ck_e
-\frac1n\ck_e\bm 1\bm 1^\top
+\frac1{n^2}\bm 1\bm 1^\top\ck_e\bm 1\bm 1^\top
\right)
\ck_{e'}
\right] \\
&=
\frac1{n^2}\mathrm{Tr}(\ck_e\ck_{e'})
-\frac1{n^3}\mathrm{Tr}(\bm 1\bm 1^\top\ck_e\ck_{e'})
-\frac1{n^3}\mathrm{Tr}(\ck_e\bm 1\bm 1^\top\ck_{e'}) +\frac1{n^4}
\mathrm{Tr}(\bm 1\bm 1^\top\ck_e\bm 1\bm 1^\top\ck_{e'}).
\end{align*}

We evaluate each term separately. Since the kernel matrices are symmetric,
\[
\mathrm{Tr}(\ck_e\ck_{e'})
=
\sum_{i,j=1}^n
(\ck_e)_{ij}(\ck_{e'})_{ij}.
\]
Furthermore,
\begin{align*}
\mathrm{Tr}(\bm 1\bm 1^\top\ck_e\ck_{e'})
&=
\bm 1^\top\ck_e\ck_{e'}\bm 1 =
\sum_{i=1}^n\left(\sum_{j=1}^n(\ck_e)_{ij}\right)\left(
\sum_{k=1}^n(\ck_{e'})_{ik}\right),
\end{align*}
and similarly
\[
\mathrm{Tr}(\ck_e\bm 1\bm 1^\top\ck_{e'})
=\sum_{i=1}^n\left(\sum_{j=1}^n(\ck_e)_{ij}\right)
\left(\sum_{k=1}^n(\ck_{e'})_{ik}\right).
\]
Finally,
\begin{align*}
\mathrm{Tr}(\bm 1\bm 1^\top\ck_e\bm 1\bm 1^\top\ck_{e'})
&=(\bm 1^\top\ck_e\bm 1)(\bm 1^\top\ck_{e'}\bm 1) =
\left(\sum_{i,j=1}^n(\ck_e)_{ij}\right)\left(
\sum_{i,j=1}^n(\ck_{e'})_{ij}\right).
\end{align*}

Substituting the aforementioned identities yields
\begin{align*}
\wt{\mathsf{HSIC}}_{e,e'}
&=\frac1{n^2}\sum_{i,j=1}^n(\ck_e)_{ij}(\ck_{e'})_{ij} 
+\left(\frac1{n^2}\sum_{i,j=1}^n(\ck_e)_{ij}\right)\left(
\frac1{n^2}\sum_{i,j=1}^n(\ck_{e'})_{ij}\right) \\
&\quad-\frac2n\sum_{i=1}^n\left(\frac1n\sum_{j=1}^n(\ck_e)_{ij}\right)\left(\frac1n\sum_{j=1}^n(\ck_{e'})_{ij}\right).
\end{align*}

Therefore,
\[
\wt{\mathsf{HSIC}}_{e,e'}
=
A_n+B_n-2C_n,
\]
where
\[
A_n=\frac1{n^2}\sum_{i,j=1}^n(\ck_e)_{ij}(\ck_{e'})_{ij},
\quad B_n=\left(\frac1{n^2}\sum_{i,j=1}^n(\ck_e)_{ij}\right)
\left(\frac1{n^2}\sum_{i,j=1}^n(\ck_{e'})_{ij}\right),
\]
and
\[
C_n=\frac1n\sum_{i=1}^n\left(\frac1n\sum_{j=1}^n(\ck_e)_{ij}\right)\left(\frac1n\sum_{j=1}^n(\ck_{e'})_{ij}\right).
\]
Since the RBF kernel satisfies $0\leq \rsf_\varrho(x,y)\leq 1$,
all kernels above are bounded. Next, we note that  $A_n$ is a V-statistic where the arguments within the kernel have finite dimensions. We shall use the strong law of large numbers for V-statistics which follows from the SLLN for U-Statistics (cf. page 122 of \cite{Lee1990}) and the relation between U-statistics and V-statistics (cf. page 183 of \cite{Lee1990}). Using the SLLN for V-statistics, we have
\begin{align}
\label{eq:An_as_limit}
A_n \xrightarrow{a.s.}\E_{\mathrm X_{e},\mathrm Y_{e},\mathrm X_{e'},\mathrm Y_{e'}}\left[\rsf_\varrho(\mathrm X_e,\mathrm Y_e)\rsf_\varrho(\mathrm X_{e'},\mathrm Y_{e'})\right], \quad \mbox{as $n \rightarrow \infty$.}
\end{align}
where $(\mathrm X_e,\mathrm X_{e'})$ and $(\mathrm Y_e,\mathrm Y_{e'})$ are independent copies of $(U_e^{\mathrm{init}},U_{e'}^{\mathrm{init}}).$
Similarly, as $n \to \infty$
\[
\frac{1}{n^2}\sum_{i,j=1}^n(\ck_{e})_{ij}
\xrightarrow{a.s.}\E_{\mathrm X_{e},\mathrm Y_{e}}\left[\rsf_\varrho(\mathrm X_e,\mathrm Y_e)\right], \quad \mbox{and} \quad \frac{1}{n^2}\sum_{i,j=1}^n(\ck_{e'})_{ij}\xrightarrow{a.s.}
\E_{\mathrm X_{e'},\mathrm Y_{e'}}\left[\rsf_\varrho(\mathrm X_{e'},\mathrm Y_{e'})\right].
\]
Consequently,
\begin{align}
\label{eq:Bn_as_limit}
B_n \xrightarrow{a.s.}\E\left[\rsf_\varrho(\mathrm X_e,\mathrm Y_e)\right]\E\left[
\rsf_\varrho(\mathrm X_{e'},\mathrm Y_{e'})\right], \quad \mbox{as $n \rightarrow \infty$.}
\end{align}
Next, observe that
\begin{align*}
C_n&=\frac{1}{n^3}\sum_{i,j,k=1}^n\rsf_\varrho\left((\bm U_{e}^{\mathrm{init}})_{i*},(\bm U_{e}^{\mathrm{init}})_{j*}\right)
\rsf_\varrho\left((\bm U_{e'}^{\mathrm{init}})_{i*},(\bm U_{e'}^{\mathrm{init}})_{k*}\right).
\end{align*}
Let $\mathfrak Z_i=\left((\bm U_{e}^{\mathrm{init}})_{i*},(\bm U_{e'}^{\mathrm{init}})_{i*}\right)$ for $i \in [n]$,
and for any \(z_\ell=(x_\ell,x_\ell') \in \R^{\rho_e+\rho_{e'}}\), define the third-order kernel
\[
h(z_1,z_2,z_3)=\rsf_\varrho(x_1,x_2)\rsf_\varrho(x_1',x_3').
\]
Then
\[
C_n=\frac{1}{n^3}\sum_{i,j,k=1}^nh(\mathfrak Z_i,\mathfrak Z_j,\mathfrak Z_k).
\]
Since \(0\leq \rsf_\varrho\leq 1\), the kernel \(h\) is bounded.
Therefore, the strong law for \(V\)-statistics gives
\[
C_n
\xrightarrow{a.s.}
\E\left[h(\mathfrak Z_1,\mathfrak Z_2,\mathfrak Z_3)\right].
\]
Consider independent copies $\{\mathrm X_e,\mathrm Y_e,\mathrm Z_e\}$ of $U_e^{\mathrm{init}}$ and $\{\mathrm X_{e'},\mathrm Y_{e'},\mathrm Z_{e'}\}$ of $U_{e'}^{\mathrm{init}}$. Observe that
\[
\mathfrak Z_1\overset{d}{=}(\mathrm X_e,\mathrm X_{e'}),
\qquad
\mathfrak Z_2\overset{d}{=}(\mathrm Y_e,\mathrm Y_{e'}),
\qquad
\mathfrak Z_3\overset{d}{=}(\mathrm Z_e,\mathrm Z_{e'}),
\]
we have
\begin{align}
\E\left[h(\mathfrak Z_1,\mathfrak Z_2,\mathfrak Z_3)\right]&=\E\left[\rsf_\varrho(\mathrm X_e,\mathrm Y_e)
\rsf_\varrho(\mathrm X_{e'},\mathrm Z_{e'})
\right]\\
&=\E_{\mathrm X_e,\mathrm X_{e'}}\left[\E_{\mathrm Y_e}[\rsf_\varrho(\mathrm X_e,\mathrm Y_e)]
\E_{\mathrm Y_{e'}}[\rsf_\varrho(\mathrm X_{e'},\mathrm Y_{e'})]\right],
\end{align}
where the last line follows since $\mathfrak Z_1,\mathfrak Z_2,\mathfrak Z_3$ are i.i.d copies of $(U_e^{\mathrm{init}},U_{e'}^{\mathrm{init}})$. Therefore
\begin{align}
\label{eq:Cn_as_limit}
C_n \xrightarrow{a.s.}\E_{\mathrm X_e,\mathrm X_{e'}}\left[\E_{\mathrm Y_e}[\rsf_\varrho(\mathrm X_e,\mathrm Y_e)]
\E_{\mathrm Y_{e'}}[\rsf_\varrho(\mathrm X_{e'},\mathrm Y_{e'})]\right].
\end{align}
Combining \eqref{eq:An_as_limit}, \eqref{eq:Bn_as_limit}, and \eqref{eq:Cn_as_limit}, we conclude that
\begin{align}
\label{eq:conv_deterministic_part}
\wt{\mathsf{HSIC}}_{e,e'}
\xrightarrow{a.s.}
\mathsf H_{e,e'}, \quad \mbox{as $n \rightarrow \infty$.}
\end{align}
and implying \eqref{eq:mod_CKA}. Next, we show that 
\[
\wt{\mathrm{CKA}}_{e,e'}-\mathrm{CKA}_{e,e'}\xrightarrow{a.s.}0, \quad \mbox{as $n \rightarrow \infty$.}
\]
Once again it is enough to show that 
\[
\wt{\mathsf{HSIC}}_{e,e'}-\mathsf{HSIC}_{e,e'}\xrightarrow{a.s.}0, \quad \mbox{as $n \rightarrow \infty$.}
\]
since the convergence of the diagonal terms follow similarly. For the RBF kernel
\[
\rsf_\varrho(x,y)
=
\exp\left(
-\frac{\|x-y\|^2}{2\varrho^2}
\right),
\]
the derivatives are bounded, so that by direct computation one can verify that following the Lipschitz bound
\[
|\rsf_\varrho(x,y)-\rsf_\varrho(x',y')|
\leq
\frac{1}{\varrho}
\left(
\|x-x'\|+\|y-y'\|
\right).
\]
holds. Therefore, for $\star \in \{e,e'\}$
\[
|(\ksf_\star)_{ij}-(\ck_\star)_{ij}|
\leq
\frac{1}{\varrho}
\left(
\|(\wh{\bm U}^{\mathrm{init}}_\star)_{i*}
-
(\bm U^{\mathrm{init}}_\star)_{i*}\|
+
\|(\wh{\bm U}^{\mathrm{init}}_\star)_{j*}
-
(\bm U^{\mathrm{init}}_\star)_{j*}\|
\right).
\]
Consequently, for $\star \in \{e,e'\}$
\begin{equation}
\|\ksf_\star-\ck_\star\|_F
\leq
\frac{2\sqrt n}{\varrho}
\|\wh{\bm U}^{\mathrm{init}}_\star-\bm U^{\mathrm{init}}_\star\|_F \label{eq:kernel_pert}.
\end{equation}
Recall the definition of the centered kernel matrix $\wt{\ksf}$ from Section~\ref{sec:measures_of_dependence}. Since $\|\bm H\|_{\mathrm{op}}\leq 1$, we have:
\[
\|\widetilde{\ksf}_\star-\widetilde{\ck}_\star\|_F
\leq
\|\ksf_\star-\ck_\star\|_F.
\]
Hence,
\begin{align*}
\left|
\mathsf{HSIC}_{e,e'}
-
\widetilde{\mathsf{HSIC}}_{e,e'}
\right|
&=
\frac{1}{n^2}
\left|
\tr\left(
\widetilde{\ksf}_e\widetilde{\ksf}_{e'}
-
\widetilde{\ck}_e\widetilde{\ck}_{e'}
\right)
\right|\\
&\leq
\frac{1}{n^2}
\left(
\|\widetilde{\ksf}_e-\widetilde{\ck}_e\|_F
\|\widetilde{\ksf}_{e'}\|_F
+
\|\widetilde{\ck}_e\|_F
\|\widetilde{\ksf}_{e'}-\widetilde{\ck}_{e'}\|_F
\right).
\end{align*}

Since all kernel entries are bounded by $1$, we have $\|\widetilde{\ksf}_{e'}\|_F\leq n$, and $\|\widetilde{\ck}_e\|_F\leq n.$
Therefore,
\[
\left|
\mathsf{HSIC}_{e,e'}
-
\widetilde{\mathsf{HSIC}}_{e,e'}
\right|
\leq
\frac{1}{n}
\left(
\|\ksf_e-\ck_e\|_F
+
\|\ksf_{e'}-\ck_{e'}\|_F
\right).
\]
Using \eqref{eq:kernel_pert}, we obtain
\[
\left|
\mathsf{HSIC}_{e,e'}
-
\widetilde{\mathsf{HSIC}}_{e,e'}
\right|
\leq
\frac{2}{\varrho\sqrt n}
\left(
\|\bm U^{\mathrm{init}}_e-\wh{\bm U}^{\mathrm{init}}_e\|_F
+
\|\bm U^{\mathrm{init}}_{e'}-\wh{\bm U}^{\mathrm{init}}_{e'}\|_F
\right).
\]

By the Frobenius coupling in \eqref{eq:pc_frobenius_coupling} (alternatively, (D.1) of \cite{nandy2024multimodal}), we have
\[
\frac{1}{\sqrt n}
\|\bm U^{\mathrm{init}}_\star-\wh{\bm U}^{\mathrm{init}}_\star\|_F
\xrightarrow{a.s.} 0
\qquad
\text{for } \star\in\{e,e'\},
\]
and hence we have
\[
\left|
\mathsf{HSIC}_{e,e'}
-
\widetilde{\mathsf{HSIC}}_{e,e'}
\right|
\to 0, \quad \mbox{almost surely.}
\]
Using \eqref{eq:conv_deterministic_part}, the lemma follows.

\subsection{Empirical Bayes with clustered priors}
\label{sec:emp_bayes_clust}
Next, we focus on the accuracy of the prior estimation step. For each estimated cluster $\wh C_k$ define 
\[
\wh r_k:=\sum_{h \in \wh C_{k,\mathrm H}}r_h+\sum_{\ell \in \wh C_{k,\mathrm L}}\wt r_\ell.
\]
Consider the functions $\mathrm F_k:\R^{\wh r_k} \rightarrow \R$ as follows.
\begin{align}
    \mathrm F_k(\{x_e:x_e \in \R^{\rho_e},e \in \wh{\mathcal C}_k\};\mathfrak m_k)&:=\bigintsss\Bigg\{\prod_{e \in \wh{\mathcal{C}}_{k,\mathrm H}}|\wh {\bm \Sigma}^L_e|^{-1/2}\cdot\phi_{\rho_e}\left((\wh {\bm \Sigma}^L_e)^{-1/2}(x_e-\wh{\bm M}^L_e u_e)\right)\Bigg\}\\
    & ~~~~~~~~~~~~~~~\times \Bigg\{\prod_{e \in \wh{\mathcal{C}}_{k,\mathrm L}}\phi_{\rho_{e}}\left(x_e-\wh{\bm L}_e u_e\right)\Bigg\}\,\mathrm{d}\mathfrak m_k(\{u_e:e \in \mathcal{\wh C}_k\}),
\end{align}
where for any $d \in \mathbb N$, $\phi_d(\cdot)$ denotes the density of the standard Gaussian distribution in $d$ dimensions.
Consequently, the likelihood considered in \eqref{eq:lik_cluster} is given by
\begin{align}
    \mathrm L_k(\{\wh{\bm U}^{\mathrm{init}}_e:e \in \wh{\mathcal C}_k\};\mathfrak m_k ):=\prod_{i=1}^n\mathrm F_k(\{(\wh{\bm U}^{\mathrm{init}}_e)_{i*}:e \in \wh{\mathcal C}_k\};\mathfrak m_k).
\end{align}
Next, consider $\mathrm F^{\mathrm R}_h:\R^{r_h} \rightarrow \R$ as follows.
\begin{align}
\mathrm F^{\mathrm r}_h(v_h;\mathfrak n_h)&:=\bigintsss|\wh{\bm \Sigma}^R_h|^{-1/2} \cdot \phi_{r_{h}}\left((\wh{\bm \Sigma}^R_h)^{-1/2}(v_h-\wh{\bm M}^R_h v_h)\right)\,\mathrm{d}\mathfrak n_h(v_h).
\end{align}
Then the likelihood $\mathrm L^r_h(\vpca_h;\mathfrak n_h)$ is defined as
\[
\mathrm L^r_h(\vpca_h;\mathfrak n_h):=\prod_{j=1}^{p_h}\mathrm F^{\mathrm r}_h((\vpca_h)_{*j};\mathfrak n_h)
\]
Next, using the idea behind the proof of Lemma~\ref{lem:gaussian_example_e_e} along with Theorem~\ref{thm:consistent_cluster} we shall prove Theorem~\ref{thm:consistency_prior_main}.
\begin{proof}[Proof of Theorem~\ref{thm:consistency_prior_main}]
First observe that permutation of cluster labels do not change the measure $\wh \mu_1 \times \cdots \times \wh \mu_K$. Therefore, without loss of generality, we can assume that for all $n \in \mathbb N$, $\Pi_n$ in Theorem~\ref{thm:consistent_cluster} is the identity permutation (or we relabel the estimated clusters to match the true clusters). Let us denote
\[
\mathcal E_{\mathrm{clust}}:=\left\{\exists\,N_0,\,\text{such that for all $n \ge N_0$,}\,\bigcup_{k=1}^K\{\wh{\mathcal C}_k \,\Delta\, \mathcal C_k\} = \emptyset\right\}.
\]
By Assumptions~\ref{assu:clust} and \ref{assu:uniform_affinity}, Theorem~\ref{thm:consistent_cluster} implies that the estimated modality partition agrees almost surely with the true partition for all sufficiently large $n$ (after relabeling). In particular, $\mathbb P(\mathcal E_{\mathrm{clust}})=1$.
Therefore,
\begin{align}
    \mathbb P\left[\wh \mu \xrightarrow{w} \mu\right]=\mathbb P[\wh \mu \xrightarrow{w} \mu\; \mbox{and} \; \mathcal E_{\mathrm{clust}}].
\end{align}
On $\mathcal E_{\mathrm{clust}}$, $\wh{\mathcal C}_k=\mathcal C_k$ for all $k \in [K]$. Furthermore, since the mixing class $\mathcal P_k$ has finite second moments and satisfies the local uniform Lipschitz condition in
Assumption~\ref{assu:reg}, Proposition~\ref{prop:singular_vect_approx} imply that all the conditions of Corollary~B.7 of \cite{zhong2022empirical} are satisfied. Therefore, using Corollary B.7 and Lemma~\ref{lem:first_order_nuisance}, we can conclude that the solutions $\wh \mu_1,\ldots,\wh \mu_K$ obtained from \eqref{eq:lik_cluster} satisfies
\[
\limsup_{n \rightarrow \infty}\frac{1}{n}\sum_{i=1}^n\log\frac{\mathrm F_k(\{(\wh{\bm U}^{\mathrm{init}}_e)_{i*}:e \in \mathcal C_k\};\wh{\mu}_k)}{\mathrm F_k(\{(\wh{\bm U}^{\mathrm{init}}_e)_{i*}:e \in \mathcal C_k\};\mu_k)} \ge 0, \quad \mbox{almost surely.}
\]
Similarly, using Corollary B.7 of \cite{zhong2022empirical} and Lemma~\ref{lem:first_order_nuisance}, we can also conclude that for all $h \in [m]$, the MLE
$\wh \nu_h:=\argmax_{\mathfrak n_h}\mathrm L^r_h(\vpca_h;\mathfrak n_h)$ satisfies
\[
\limsup_{p_h \rightarrow \infty}\frac{1}{p_h}\sum_{j=1}^{p_h}\log\frac{\mathrm F^{\mathrm r}_h((\vpca_h)_{j*};\wh{\nu}_h)}{\mathrm F^{\mathrm r}_h((\vpca_h)_{j*};\nu_h)} \ge 0, \quad \mbox{almost surely.}
\]
Then, using Lemma B.2 of \cite{zhong2022empirical}, it follows that
\[
\wh \mu_k \xrightarrow{w} \mu_k, \quad \mbox{for all $k \in [K]$,} \quad \mbox{and} \quad \wh \nu_h \xrightarrow{w} \nu_h, \quad \mbox{for all $h \in [m]$, almost surely.}
\]
Finally, since $\wh \mu=\wh \mu_1\times \cdots\times \wh \mu_K$ and $\mu=\mu_1 \times \cdots \times \mu_K$, therefore $\wh \mu \xrightarrow{w} \mu,$ almost surely, proving the theorem.
\end{proof}

\subsection{Proof of Theorem~\ref{thm:w_2_amp}}
By Theorem~\ref{thm:consistency_prior_main}, we have $\widehat\mu\overset{w}{\to}\mu$ and $\widehat\nu_h\overset{w}{\to}\nu_h$ almost surely for
all $h\in[m]$. Moreover, Lemma~\ref{lem:first_order_nuisance} establishes almost-sure consistency of the estimated nuisance parameters.

Assumption~\ref{assu:reg} ensures that the posterior-mean denoisers
are uniformly Lipschitz over weak neighborhoods of the true priors.
Consequently, Lemmas~G.2 and G.3 of \cite{nandy2024multimodal} imply that, at every fixed iteration, the empirical-Bayes denoisers and their averaged Jacobians converge to their oracle counterparts. The empirical-Bayes AMP iterates are therefore asymptotically equivalent, in normalized Frobenius norm, to the corresponding oracle AMP iterates as described in Section~5.2.2 of \cite{nandy2024multimodal}. Therefore the stated limits follow from Theorem~5.1 of \cite{nandy2024multimodal}.

\subsection{Proof of Theorem~\ref{thm:ols_estimator}}
First, we focus on the rescaled training embeddings.
For brevity, define
\[
\widehat{\bm A}_h:=\bigl(\widehat{\bm\Sigma}_{T,h}^{L}\bigr)^{-1}
\widehat{\bm D}_h^{-1},\qquad \bm A_h:=\bigl(\bm\Sigma_{T,h}^{L}\bigr)^{-1}
\bm D_h^{-1},\qquad \widehat{\bm F}_h:=\widehat{\bm F}_{T,h}.
\]
Since $r_h=O(1)$ for all $h \in [m]$,  using the definitions of $\wh{\bm\Sigma}_{T,h}^{L} \in \R^{r_h \times r_h}$ \eqref{eq:state_evol_data} and $\bm\Sigma_{T,h}^{L}\in \R^{r_h \times r_h}$ (Theorem~\ref{thm:w_2_amp}), we can invoke Theorem~\ref{thm:w_2_amp} with the pseudo-Lipschitz function $x \mapsto x^2$ to conclude that
\[
\lim_{n \to \infty}\|\wh{\bm\Sigma}_{T,h}^{L}-\bm\Sigma_{T,h}^{L}\|_F =0, \quad \mbox{almost surely for all $h \in [m]$.}
\]
Again, since $r_h=O(1)$ for all $h \in [m]$, $\bm\Sigma_{T,h}^{L}\succ0$ and $\bm D_h\succ0$, continuity of matrix inversion and Lemma~\ref{lem:first_order_nuisance} yields
\begin{align}
\|\widehat{\bm A}_h-\bm A_h\|_{\op}
\overset{\mathrm{a.s.}}{\rightarrow}0
\qquad\text{for every }h\in[m].
\label{eq:Ahat_A_convergence}
\end{align}
Next, we show that replacing the estimated normalization matrices $\{\wh{\bm A}_h:h \in [m]\}$ by their population limits $\{\bm A_h:h \in [m]\}$ does not affect the empirical average. In that direction, for each $i\in[n]$, define the collection of rescaled embeddings and their oracle counterparts
\begin{align}
\bm Q_{n,i}&:=\left((\widehat{\bm F}_1)_{i*}\widehat{\bm A}_1,
\ldots,(\widehat{\bm F}_m)_{i*}\widehat{\bm A}_m,
(\widetilde{\bm X}_1)_{i*},
\ldots,(\widetilde{\bm X}_{\widetilde m})_{i*}
\right),\\
\bm Q_{n,i}^{\circ}&:=
\left((\widehat{\bm F}_1)_{i*}\bm A_1,
\ldots,(\widehat{\bm F}_m)_{i*}\bm A_m,
(\widetilde{\bm X}_1)_{i*},
\ldots,(\widetilde{\bm X}_{\widetilde m})_{i*}\right).
\end{align}
Then, we have
\begin{align}
\frac{1}{n}\sum_{i=1}^n
\|\bm Q_{n,i}-\bm Q_{n,i}^{\circ}\|_2^2
&=\sum_{h=1}^m\frac1n\left\|\widehat{\bm F}_h
(\widehat{\bm A}_h-\bm A_h)\right\|_F^2 \notag\\
&\leq\sum_{h=1}^m\left\{\frac1n\|\widehat{\bm F}_h\|_F^2\right\}\|\widehat{\bm A}_h-\bm A_h\|_{\op}^2.
\label{eq:normalization_replacement_frob}
\end{align}
By Theorem~\ref{thm:w_2_amp} applied to the pseudo-Lipschitz function $(f_1,\ldots,f_m,\wt x_1,\ldots,\wt x_{\wt m})\mapsto\|f_h\|_2^2$, we get
\begin{align}
\label{eq:lim_sup_bounded}
\limsup_{n \to \infty}\frac1n\|\widehat{\bm F}_h\|_F^2=
\mathbb E\left[\left\|U_h(\bm M_{T,h}^{L})^\top
+Z_{T,h}(\bm\Sigma_{T,h}^{L})^{1/2}\right\|_2^2\right]
<\infty, \quad \mbox{almost surely.}
\end{align}
Together with \eqref{eq:Ahat_A_convergence}, this implies
\begin{align}
\frac1n\sum_{i=1}^n
\|\bm Q_{n,i}-\bm Q_{n,i}^{\circ}\|_2^2
\overset{\mathrm{a.s.}}{\longrightarrow}0.
\label{eq:Q_Qoracle_frob}
\end{align}

Since $\psi$ is pseudo-Lipschitz, using \eqref{eq:pseudo_lips} and Cauchy--Schwarz inequality, there exists $C>0$ such that
\begin{align}
&\frac1n\sum_{i=1}^n\left|\psi(\bm Q_{n,i})-\psi(\bm Q_{n,i}^{\circ})\right| \\
&\quad\leq C\left\{
\frac1n\sum_{i=1}^n\left(1+\|\bm Q_{n,i}\|_2+\|\bm Q_{n,i}^{\circ}\|_2\right)^2\right\}^{1/2}
\left\{\frac1n\sum_{i=1}^n\|\bm Q_{n,i}-\bm Q_{n,i}^{\circ}\|_2^2\right\}^{1/2}.
\label{eq:PL_replacement_bound}
\end{align}
By \eqref{eq:lim_sup_bounded} and \eqref{eq:Q_Qoracle_frob}, we have
\[
\limsup_{n \to \infty}\frac{1}{n}\sum_{i=1}^n\|\bm Q_{n,i}\|^2_2 < \infty, \quad \limsup_{n \to \infty}\frac{1}{n}\sum_{i=1}^n\|\bm Q^\circ_{n,i}\|^2_2 < \infty, \quad \mbox{almost surely.}
\]
Therefore, by \eqref{eq:Q_Qoracle_frob} and \eqref{eq:PL_replacement_bound}, we get
\[
\limsup_{n \to \infty}\frac1n\sum_{i=1}^n\left|\psi(\bm Q_{n,i})-\psi(\bm Q_{n,i}^{\circ})\right|_2=0, \quad \mbox{almost surely.}
\] 

Now define the modified function
\begin{align}
\psi^\circ(f_1,\ldots,f_m,\wt x_1,\ldots,\wt x_{\wt m}):=\psi\left(f_1\bm A_1,\ldots,f_m\bm A_m,
\wt x_1,\ldots,\wt x_{\wt m}\right).
\end{align}
As $\bm A_1,\ldots,\bm A_m$ are fixed matrices with finite operator norm, $\psi^\circ$ is also pseudo-Lipschitz. Therefore, Theorem~\ref{thm:w_2_amp} gives
\begin{align}
\frac1n\sum_{i=1}^n\psi(\bm Q_{n,i}^{\circ})
\overset{\mathrm{a.s.}}{\longrightarrow}
\mathbb E\Big[\psi\big(U_{T,1}\bm A_1,\ldots,U_{T,m}\bm A_m,\widetilde X_1,\ldots,\widetilde X_{\widetilde m}\big)
\Big], \quad \mbox{as $n \to \infty$.}
\label{eq:transformed_state_evolution}
\end{align}
where $\{U_{T,h}:h \in [m]\}$ are defined in Theorem~\ref{thm:w_2_amp}.

By the state-evolution identity $\bm M_{T,h}^{L}=\bm\Sigma_{T,h}^{L}\bm D_h,$ we obtain
\begin{align}
(\bm M_{T,h}^{L})^\top\bm A_h=\bm D_h\bm\Sigma_{T,h}^{L}
(\bm\Sigma_{T,h}^{L})^{-1}\bm D_h^{-1}=\bm I_{r_h}.
\end{align}
Therefore, 
\[
U_{T,h}\bm A_h\overset{d}{=}U_h+Z^{\rsc}_{T,h}(\bm\Sigma_{T,h}^{L})^{1/2}(\bm\Sigma_{T,h}^{L})^{-1}\bm D_h^{-1}.
\]
The Gaussian term in the right of the foregoing expression has covariance $\bm D_h^{-1}(\bm\Sigma_{T,h}^{L})^{-1}\bm D_h^{-1}$. Hence, \eqref{eq:training_debiased_joint_limit} follows.

Next, we focus on the OLS projected test embeddings $\{U^{\test,\ols}_h:h \in [m]\}$. In this case, one can use Theorem~\ref{thm:w_2_amp} and retrace the arguments in Theorem~5.4 of \cite{nandy2024multimodal} to conclude \eqref{eq:ols_test_proof}.

\end{document}